\newcommand{\problemtitle}[1]{\gdef\@problemtitle{#1}}
\newcommand{\probleminput}[1]{\gdef\@probleminput{#1}}
\newcommand{\problemquestion}[1]{\gdef\@problemquestion{#1}}
  \par\addvspace{.5\baselineskip}
  \par\addvspace{.5\baselineskip}
\newlength{\problemoffset}
\newcommand{\preset}[1]{\,^{\!\bullet}#1}
\newcommand{\postset}[1]{#1^{\bullet}}
\newcommand{\escale}[1]{\ensuremath{\scalebox{0.8}{#1}}}
\newcommand{\nscale}[1]{\ensuremath{\scalebox{0.8}{#1}}}
\newcommand{\myEdge}[2]{ \tikz[baseline=-3pt]{
\draw[#2,line width=0.3pt] (0,0) -- ++(0.6,0) node[anchor=base, yshift=2pt, pos=0.5] {\escale{$#1$}};
}}
\newcommand{\Edge}[1]{ \tikz[baseline=-1pt]{
\draw[->,line width=0.3pt] (0,0) -- ++(0.6,0) node[anchor=base, yshift=5pt, pos=0.5] {\escale{$#1$}};
}}
\newcommand{\drop}[1]{}
\newcommand{\edge}[1]{\myEdge{#1}{->}}
\newcommand{\ers}{\textsc{ERS}}
\newcommand{\hs}{\textsc{HS}}
\newcommand{\syn}{\textsc{Synthesis}}
\newcommand{\eff}{\textit{eff}}
\newcommand{\T}{\mathcal{T}}
\newcommand{\R}{\mathcal{R}}
\newcommand{\mS}{\ensuremath{\mathfrak{S}}}
\newcommand{\U}{\ensuremath{\mathfrak{U}}}
\tikzstyle{place}=[circle,thick,draw=blue!75,fill=blue!20,minimum size=6mm]
  \tikzstyle{red place}=[place,draw=red!75,fill=red!20]
  \tikzstyle{transition}=[rectangle,thick,draw=black!75,
			   \tikzstyle{every label}=[red]
\begin{document}

\setcounter{page}{139}
\publyear{22}
\papernumber{2135}
\volume{187}
\issue{2-4}

 \finalVersionForARXIV

\title{Synthesis of Pure and Impure Petri nets with \\ Restricted Place-environments: Complexity Issues}

\author{Raymond Devillers\\
D\'epartement d'Informatique - Universit\'e Libre de Bruxelles\\
 Boulevard du Triomphe,  B1050 Brussels, Belgium\\
 rdevil@ulb.ac.be
 \and Ronny Tredup\thanks{Address of correspondence:  Institut F\"ur Informatik - Universit\"at Rostock, Albert-Einstein-Stra{\ss}e 22,
                              D18059 Rostock, Germany}
 \\
 Institut F\"ur Informatik - Universit\"at Rostock\\
  Albert-Einstein-Stra{\ss}e 22,   D18059 Rostock, Germany\\
  ronny.tredup@uni-rostock.de
 }

\runninghead{R. Devillers and R. Tredup}{Synthesis with Restricted Place-environments}

\maketitle

\begin{abstract}
Petri net synthesis consists in deciding for a given transition system $A$ whether there exists a Petri net $N$ whose reachability graph is isomorphic to $A$.
Several works examined the synthesis of Petri net subclasses that restrict, for every place $p$ of the net, the cardinality of its preset or of its postset or both \emph{in advance} by \emph{small} natural numbers $\varrho$ and $\kappa$, respectively,
such as for example (weighted) marked graphs, (weighted) T-systems and choice-free nets.
In this paper, we study the synthesis aiming at Petri nets which have such restricted place environments, from the viewpoint of classical and parameterized complexity:
We first show that, for any fixed natural numbers $\varrho$ and $\kappa$, deciding whether for a given transition system $A$
there is a Petri net $N$ such that (1) its reachability graph is isomorphic to $A$ and (2) for every place $p$ of $N$ the preset of $p$
has at most $\varrho$ and the postset of $p$ has at most $\kappa$ elements is doable in polynomial time.
Secondly, we introduce a modified version of the problem, namely \textsc{Environment Restricted Synthesis} (\textsc{ERS},
for short), where $\varrho$ and $\kappa$ are part of the input, and show that \textsc{ERS} is NP-complete, regardless whether the sought net is impure or pure.
In case of the impure nets, our methods also imply that \textsc{ERS} parameterized by $\varrho+\kappa$ is $W[2]$-hard.
\end{abstract}

\section{Introduction}%

Petri net synthesis consists in deciding for a given transition system $A$ whether there is a Petri net $N$
such that the reachability graph of $N$ is isomorphic to $A$.
In the event of a positive decision, a possible solution $N$ should be constructed,
and in the event of a negative decision some reason(s) should be produced if possible.
Synthesis of Petri nets has applications in various fields:
For example, it is used to extract concurrency and distributability data from sequential specifications like transition systems
or languages \cite{fac/BadouelCD02};
it is applied in the field of process discovery to reconstruct a model from its execution traces \cite{daglib/0027363}
and in supervisory control for discrete event systems~\cite{deds/HollowayKG97};
and it is used for the synthesis of speed-independent circuits~\cite{tcad/CortadellaKKLY97}.

The synthesis problem has been originally solved for the class of
\emph{Elementary net systems}~\cite{acta/EhrenfeuchtR89}, relying on \emph{regions} of transition systems,
and has been found to be NP-complete for this class in~\cite{tcs/BadouelBD97}.
Later on, this solution was extended to weighted P/T nets, as well as to weighted pure ones,
for which, however, the synthesis problem is solvable in polynomial time~\cite{tapsoft/BadouelBD95}.

Since then, many studies have been carried out on the synthesis of structurally restricted subclasses of Petri nets,
which aim at improved (pre-) synthesis methods with regard to the specified subclass.
The most investigated subclasses of Petri nets include those that restrict the cardinality of the presets or the postsets
of the places by \emph{a priori} fixed \emph{small} natural numbers $\varrho$ and $\kappa$, respectively.
Among them are especially the so-called (weighted or plain)
\emph{marked graphs}~\cite{jcss/CommonerHEP71}
(every place has exactly one pre- and exactly one post-transition),
the (weighted) \emph{T-systems}~\cite{ac/Best86a} (every place has at most one pre- and at most one post-transition)
and, as a generalization of both, the (weighted) \emph{choice-free} nets~\cite{apn/TeruelCCS92,tsmc/TeruelCS97} (every place has at most one post-transition).
These restrictions are initially motivated by the fact that, from the theoretical point of view, the resulting net classes allow
a rich and elegant theory with respect to their structure as well as highly efficient analysis algorithms~
\cite{ac/Best86a,%
tsmc/TeruelCS97,
desel_esparza_1995,%
apn/HujsaDK14,
lata/BestD14,%
acta/BestDS18}.
From the perspective of practical applications, they are particularly useful in, for example, some applications like hardware
design~\cite{tcad/CortadellaKKLY97,%
707587}
or as a proper model for systems with bulk services and
arrivals~\cite{tsmc/TeruelCS97}.
On the other hand, as already mentioned, these classes have also been the subject of research aiming at Petri net synthesis
for many years~\cite{acta/BestDS18,%
acta/BestD09,%
fuin/BestD15,%
tcs/BestHW18,%
topnoc/DevillersEH19,%
fuin/DevillersH19}.
It turned out that these net classes provide some very useful features like, for example, persistency of their reachability
graphs~\cite{tsmc/TeruelCS97} that --if it comes to complexity issues-- 
allow improved synthesis procedures
that --instead of regions-- 
rather rely on some basic structural properties of the input transition system.
Also the computational complexity of synthesis depending on the desired subclass has been subject of this research:
In~\cite{acta/BestDS18}, for example, it has been shown that synthesis aiming at choice-free nets is polynomial,
and in~\cite{corr/abs-1910-14387}, for example, it has been proved that synthesis aiming at weighted marked
graphs (or weighted $T$-systems) is polynomial when the input transition system is circular.

In this paper, we extend the research on the computational complexity of synthesis aiming at Petri nets with restricted place environments:
We show that, for any fixed natural numbers $\varrho$ and $\kappa$, deciding whether for a given transition system $A$
there is a Petri net $N$ such that (1) its reachability graph is isomorphic to $A$ and (2) for every place $p$ of $N$ the preset
of $p$ has at most $\varrho$ and the postset of $p$ has at most $\kappa$ elements may be done in polynomial-time.
In a natural way, the question arises whether synthesis remains polynomial if the bounds $\varrho$ and $\kappa$
are not fixed in advance, but are part of the input.
In this paper, we answer this question negatively and show first for the class of (impure) Petri nets that the corresponding decision problem (\textsc{Environment Restricted Synthesis, ERS for short)} is NP-complete.
We obtain this result by methods that give also information about the parameterized complexity of \textsc{ERS}
parameterized by $\varrho+\kappa$ (aiming at impure nets):
The proof for the membership of \textsc{ERS} implies that its parameterized version belongs to the complexity class XP.
The NP-hardness of \textsc{ERS} results from a polynomial-time reduction of the well-known problem \textsc{Hitting Set}, which is also a valid parameterized reduction.
Since \textsc{Hitting Set} is $W[2]$-complete, this implies that \textsc{ERS} parameterized by $\varrho+\kappa$ is $W[2]$-hard.
Hence, $\varrho+\kappa$ is unsuitable for fixed parameter tractability  (FPT)-approaches as pioneered in~\cite{dagstuhl/DowneyF92,txcs/DowneyF13}.

This paper is an extended version of~\cite{apn/Tredup21a}.
The main new result is the proof that ERS is NP-complete even if we restrict the addressed net class to pure Petri nets.
We obtain this result by a reduction of the problem \textsc{Cubic Monotone 1 in 3 3Sat},
which cannot be derived from the reduction for the impure nets.

\medskip
\emph{Further Related Work.}
For net classes for which the (underlying) unrestricted synthesis problem is already NP-complete as, for example, it is the case for
$b$-bounded Petri nets~\cite{apn/Tredup19} or an overwhelming amount of Boolean nets~\cite{topnoc/Tredup21} the problem \ers\ (or its corresponding formulation) is also NP-complete.
This can easily be shown by a trivial reduction from the unrestricted to the restricted problem.
In~\cite{rp/Tredup19}, it has been shown that \ers, formulated for $b$-bounded Petri nets, is NP-complete even if $\kappa=1$.
Moreover, in~\cite{sofsem/Tredup20}, it has been argued that the corresponding problem, although being in XP,
is $W[1]$-hard for these nets, when $\varrho+\kappa$ is considered as a parameter.
In~\cite{tamc/TE20,corr/abs-2009-08871}, it has been shown that the parametrized complexity of
 (the Boolean formulation of) \ers\ is $W[1]$-hard or $W[2]$-hard for a lot of Boolean Petri nets.
However, neither of these results imply the ones provided by the current\linebreak paper.

This paper is organized as follows.
Section~\ref{sec:prelis} introduces necessary definitions and provides some examples.
After that, Section~\ref{sec:main_result} provides the announced complexity results, and Section~\ref{sec:new_content} provides some results on the synthesis of pure place-environment restricted P/T nets.
Finally, Section~\ref{sec:conclusion} briefly closes the paper.

\section{Preliminaries}\label{sec:prelis}%

In this section, we introduce relevant basic notions around Petri net synthesis and provide some examples.

\begin{definition}[Transition Systems]\label{def:transition_system}
A (deterministic, labeled) \emph{transition system} (TS, for short) $A=(S,E, \delta,\iota)$ is a directed labeled graph with
the set of nodes $S$ (called \emph{states}), the set of labels $E$ (called \emph{events}),
the partial \emph{transition function} $\delta: S\times E \longrightarrow S$ and the \emph{initial state} $\iota\in S$.
Event $e$ \emph{occurs} at state $s$, denoted by $s\edge{e}$, if $\delta(s,e)$ is defined.
By $s\edge{\neg e}$, we denote that $e$ does not occur at $s$.
We abridge $\delta(s,e)=s'$ by $s\edge{e}s'$ and call the latter an \emph{edge}.
By $s\edge{e}s'\in A$, we denote that the edge $s\edge{e}s'$ is present in $A$.
We say $A$ is \emph{loop-free} if $s\edge{e}s'\in A$ implies $s\not=s'$.
A  sequence $s_0\edge{e_1}s_1, s_1\edge{e_2}s_2,\dots, s_{n-1}\edge{e_n}s_n$ of edges is called a (directed labeled) \emph{path} (from $s_0$ to $s_n$ in $A$).
$A$ is called \emph{initialized} if, for every state $s\in S$, we have $s=\iota$ or there is a path from $\iota$ to $s$.
\end{definition}

If a TS $A$ is not explicitly defined, then we refer to its components by $S(A)$ (states), $E(A)$ (events), $\delta_A$ (function),
$\iota_A$ (initial state).
In this paper, we investigate whether a TS corresponds to the reachability graph of a Petri net.
Since the latter are always initialized, we assume that all TS are initialized without explicitly mentioning this each time.
We shall only consider finite TS, i.e., that $S$ and $E$ (hence also $\delta$) are finite.
Moreover, we consider TS $A$ and $B$ to be essentially the same when isomorphic:

\begin{definition}[Isomorphic TS]
Two TS $A=(S,E,\delta,\iota)$ and $B=(S',E,\delta',\iota')$ with the same set of events are \emph{isomorphic}, denoted by $A\cong B$, if there is a bijection $\varphi:S\rightarrow S'$ such that $\varphi(\iota)=\iota'$ and $s\edge{e}s'\in A$ if and only if $\varphi(s)\edge{e}\varphi(s')\in B$.
\end{definition}

Starting from a certain behavior that is defined by a transition system, we look for a machine that implements this behavior,
namely a Petri net (more exactly a weighted P/T net):

\begin{definition}[Petri Nets]\label{def:petri_nets}
A \emph{Petri net} $N=(P,T,f,M_0)$ consists of finite and disjoint sets of \emph{places} $P$ and \emph{transitions} $T$,
a (total) \emph{flow} $f: ((P \times T) \cup (T \times P)) \rightarrow \mathbb{N}$
(interpreted as denoting the number of \emph{tokens} absorbed and produced by $t$ on $p$)
and an \emph{initial marking} $M_0: P \rightarrow \mathbb{N}$ (more generally, a marking is any function $M:P\rightarrow\mathbb{N}$).\\
The \emph{preset} of a place $p$ is defined by $ \preset{p}=\{t \in T \mid f(t,p)>0\}$, thus comprising all the transitions that  produce tokens on $p$, while the \emph{postset} of $p$ is defined by $\postset{p} =\{t\in T \mid f(p,t)>0\}$, thus comprising all  the transitions that consume tokens from $p$.
Notice that $\preset{p} \cap \postset{p}$ is not necessarily empty, but if this is true for each place $p\in P$ the net is said \emph{pure}.
For $\varrho, \kappa\in \mathbb{N}$, we say $p$ is \emph{$(\varrho, \kappa)$-restricted} if $\vert \preset{p}\vert\leq \varrho$
and $\vert \postset{p}\vert \leq \kappa$; note that it is also possible to only constrain the size of $\preset{p}$ or $\postset{p}$,
by considering $(\varrho, |T|)$-restricted and $(|T|, \kappa)$-restricted nets.\\
A transition $t\in T$ can \emph{fire} or \emph{occur} in a marking $M:P\rightarrow \mathbb{N}$, denoted by $M\edge{t}$, if $M(p)\geq f(p,t) $ for all places $p\in P$.
The firing of $t$ in marking $M$ leads to the marking $M'(p)=M(p)-f(p,t)+f(t,p)$ for all $p\in P$, denoted by $M\edge{t}M'$.
This notation extends to sequences $w \in T^*$ and the \emph{reachability set} $RS(N)=\{M \mid \exists w\in T^*: M_0\edge{w}M \}$ contains all reachable markings of $N$.
The \emph{reachability graph} of $N$ is the TS $A_N=(RS(N), T,\delta, M_0)$, where for every reachable marking $M$ of $N$ and transition $t \in T$ with $M \edge{t} M'$ the transition function $\delta$ of $A_N$ is defined by $\delta(M,t) = M'$
($\delta(M,t)$ being undefined if $t$ may not fire in $M$).
\end{definition}

According to Definition~\ref{def:petri_nets}, for every Petri net, there is a TS, that reflects the global behavior of the net, namely its reachability graph.
However, not every TS is the behavior of a Petri net and thus the following decision problem arises:

\noindent
\fbox{\begin{minipage}[t][1.7\height][c]{0.97\textwidth}
\begin{decisionproblem}
  \problemtitle{\textsc{Synthesis}}
  \probleminput{A TS $A=(S,E,\delta, \iota)$.}
  \problemquestion{Does there exist a Petri net $N$ such that $A\cong A_N$?}
\end{decisionproblem}
\end{minipage}}
\bigskip

If \textsc{Synthesis} allows a positive decision, then we want to construct $N$ purely from $A$: $N$ is then called a
\emph{solution} of $A$. When there is a solution, there are infinitely many of them, sometimes with very different structure,
and it may be interesting to restrict the target class of a synthesis problem, like the $(\rho,\kappa)$-restricted class for instance.
Note that, since we only consider a finite TS $A$,
its solutions are always bounded, meaning that, for some integer $k$,
we have $M(p)\leq k$ for each place $p$ and reachable marking $M$.
Since $A$ and $A_N$ should be isomorphic, the events $E$ of $A$ become the transitions of $N$.
Note however that $A$ does not allow to answer all questions about its solutions: since we work up to isomorphism
it is not possible to state if some marking is reachable or dominated, but since $A$ is finite it will be easy to answer such questions when $N$ will be chosen.

The places, the flow and the initial marking of $N$, hence the solutions of $A$, originate from so-called \emph{regions}
of the TS $A$.

\begin{definition}[Region]\label{def:region}
A \emph{region} $R=(sup, con, pro)$ of a TS $A=(S, E, \delta, \iota)$ consists of three mappings
\emph{\underline{sup}port} $sup:S \rightarrow \mathbb{N}$,  \emph{\underline{con}sume} $con:E \rightarrow \mathbb{N}$
and \emph{\underline{pro}duce} $pro:E \rightarrow \mathbb{N}$,
such that if $s \edge{e} s'$ is an edge of $A$, then $con(e)\leq sup(s)$ and $sup(s')=sup(s)-con(e)+pro(e)$.
The \emph{preset} of $R$ is defined by $\preset{R}=\{e\in E \mid pro(e) > 0\}$
and its \emph{postset} by $\postset{R} =\{e\in E \mid con(e) > 0\}$.
A region is \emph{pure} if $\preset{R}\cap\postset{R}=\emptyset$.
For $\varrho,\kappa\in \mathbb{N}$, we shall say that $R$ is \emph{$(\varrho,\kappa)$-restricted} if
$\vert \preset{R}\vert\leq\varrho$ and $\vert \postset{R} \vert \leq \kappa $.
\end{definition}

\begin{remark}\label{rem:implicitly}
Notice that \emph{if} $R=(sup, con, pro)$ \emph{is} a region of a TS $A=(S,E,\delta,\iota)$, \emph{then}
$R$ can be obtained from $sup(\iota)$, $con$, and $pro$:
Since $A$ is initialized, for every state $s\in S$, there is a path $\iota\edge{e_1}\dots \edge{e_n}s_n$ such that $s=s_n$.
Hence, we inductively obtain $sup(s_{i+1})$ by $sup(s_{i+1})=sup(s_i) -con(e_{i+1}) +pro(e_{i+1})$
for all $i\in \{0,\dots, n-1\}$ and $s_0 = \iota$.
 For brevity, we shall often use this observation and present regions only \emph{implicitly} by $sup(\iota)$, $con$ and $pro$.
When a region will be defined that way, it will be necessary
to check that such a definition is coherent, i.e., each computed support is non-negative
and if two edges lead to the same state this also leads to the same support; such checks will however be usually easy.
For an even more compact presentation, for $c,p\in\mathbb{N}$, we shall group events with the same ``behavior'' together
by $\T_{c,p}^R=\{e\in E\mid con(e)=c\text{ and } pro(e)=p\}$.
\end{remark}

Regions of the TS become places in a sought net if it exists:
for a place $R=(sup, con, pro)$ of such a net,
$con(e)$ defines $f(R,e)$, the number of tokens that $e$ consumes from $R$, and $pro(e)$ defines $f(e,R)$, the number of tokens that $e$ produces on $R$, and $sup(s)$ models (the number of tokens) $M(R)$ (that are on $R$) in the marking $\varphi(s)=M$ of $N$ that corresponds to state $s$ of $A$ via the isomorphism $\varphi$ between $A$ and $A_N$.

\begin{definition}[Synthesized Net]\label{def:synthesized_net}
Every set $\mathcal{R} $ of regions of a TS $A$ defines the \emph{synthesized net} $N^{\mathcal{R}}_A=(\mathcal{R}, E, f, M_0)$ with $f(R,e)=con(e)$, $f(e,R)=pro(e)$ and $M_0(R)=sup(\iota)$ for all $R=(sup, con , pro)\in \mathcal{R}$ and all $e\in E$.
\end{definition}

In order to ensure that the input behavior $A$ is correctly captured by a synthesized net $N$,
meaning that $A$ and $A_N$ are identified by an isomorphism $\varphi$,
we have to ensure that distinct states $s\not=s'$ of $A$ correspond to distinct markings $\varphi(s)\not=\varphi(s')$ of $N$.
In particular, we need $A$ to have the \emph{state separation property}, which means that its \emph{state separation atoms} are solvable:
\begin{definition}[State Separation]\label{def:ssp}
A pair $(s, s')$ (unordered) of distinct states of $A$ defines a \emph{state separation atom} (SSA).
There are thus $|S|\cdot(|S|-1)/2$ such SSA's.
A region $R=(sup, con, pro)$ \emph{solves} $(s,s')$ if $sup(s)\not=sup(s')$.
We say a state $s$ is \emph{solvable} if, for every $s'\in S\setminus \{s\}$, there is a region that solves the SSA $(s,s')$.
If every SSA or, equivalently, every state of $A$ is solvable then $A$ has the \emph{state separation property} (SSP).
\end{definition}

Furthermore, we have to prevent the firing of a transition in a marking $M$, if its corresponding event does not occur at the state $s$ of $A$ that corresponds to $M$ via the isomorphism $\varphi$, that is, if $s\edge{\neg e}$, then $\varphi(s)\edge{\neg e}$, where $\varphi(s)=M$.
In particular, $A$ must have the \emph{event/state separation property}, meaning that all \emph{event/state separation atoms} of $A$ are solvable:
\begin{definition}[Event/State Separation]\label{def:essp}
A pair $(e,s)$ of event $e\in E $ and state $s\in S$ such that $ s\edge{\neg e}$ defines
an \emph{event/state separation atom} (ESSA).
There are thus $|S|\cdot|E|-|\delta|$ such ESSA's
(where $|\delta|$ is the number of pairs $(s,e)$ such that $\delta(s,e)$ is defined).
A region $R=(sup, con, pro)$ \emph{solves} $(e,s)$ if $sup(s) < con(e) $.
We say an event $e$ is \emph{solvable} if, for all $s\in S$ such that $s\edge{\neg e}$, there is a region that solves the ESSA $(e,s)$.
If every ESSA or, equivalently, every event of $A$ is solvable then $A$ has the \emph{event state separation property} (ESSP).
\end{definition}

\begin{definition}[Admissible set]\label{def:admissible_set}
A set $\mathcal{R}$ of regions of $A$ is called \emph{admissible} if it witnesses the SSP and the ESSP of $A$, that is, every SSA and ESSA of $A$ is solvable by a region $R$ of $\mathcal{R}$.
\end{definition}

The next lemma, borrowed from~\cite[p.~162]{txtcs/BadouelBD15} but also present in~\cite{DR-synth-96},
 establishes the connection between the existence of an admissible set $\mathcal{R}$ of $A$
 and the existence of a Petri net $N$ whose rechability graph is isomorphic to $A$.
Notice that Petri nets correspond to the type of nets $\tau_{PT}$ in~\cite[p.~130]{txtcs/BadouelBD15}.
\begin{lemma}[\cite{txtcs/BadouelBD15}]\label{lem:admissible}
If $A$ is a TS and $N$ a Petri net, then $A\cong A_N$ if and only if there is an admissible set $\mathcal{R}$ of $A$ and $N=N^{\mathcal{R}}_A$.
\end{lemma}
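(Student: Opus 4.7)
The proof splits naturally into the two implications, and I would treat them in turn.

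For the $(\Leftarrow)$ direction, assume $\mathcal{R}$ is admissible and set $N := N^{\mathcal{R}}_A$. I would define the candidate isomorphism $\varphi : S \to RS(N)$ by $\varphi(s)(R) = sup(s)$ for every $R = (sup,con,pro) \in \mathcal{R}$. The first task is to show that $\varphi$ is a well-defined graph morphism from $A$ into $A_N$: the defining property of a region immediately gives that whenever $s \edge{e} s'$ in $A$, the marking $\varphi(s)$ enables $e$ (since $con(e) \le sup(s) = \varphi(s)(R)$ for every $R$) and fires to $\varphi(s')$ (by the support update equation). In particular, an easy induction on the length of paths from $\iota$ shows that $\varphi(s) \in RS(N)$ and that every reachable marking is reached by the image of some path, which gives surjectivity onto $RS(N)$. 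Injectivity of $\varphi$ follows from the SSP of $\mathcal{R}$: if $s \neq s'$, some solving region yields $\varphi(s)(R) \neq \varphi(s')(R)$. Finally, to promote $\varphi$ to an isomorphism one must rule out spurious edges, i.e.\ show that $\varphi(s) \edge{e} M$ in $A_N$ implies $s \edge{e}$ in $A$; here the ESSP is decisive, because otherwise $(e,s)$ would be an unsolved ESSA, contradicting admissibility of $\mathcal{R}$. Combined with determinism, this forces $M = \varphi(\delta_A(s,e))$ and completes the isomorphism.

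For the $(\Rightarrow)$ direction, assume an isomorphism $\psi : A \to A_N$ for some Petri net $N = (P,T,f,M_0')$. I would canonically associate a region $R_p$ of $A$ to each place $p \in P$ by setting $con_p(e) = f(p,e)$, $pro_p(e) = f(e,p)$ and $sup_p(s) = \psi(s)(p)$. The fact that $R_p$ is a region is exactly the token-update rule of Petri nets, transported along $\psi$. Taking $\mathcal{R} = \{R_p \mid p \in P\}$, the construction of $N^{\mathcal{R}}_A$ literally reproduces $N$ up to renaming of places by regions, so $N \cong N^{\mathcal{R}}_A$; one can then replace $N$ by $N^{\mathcal{R}}_A$ without loss of generality. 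It remains to verify admissibility: for an SSA $(s,s')$, since $\psi(s) \neq \psi(s')$ as markings they differ on some place $p$, so $R_p$ solves the atom; for an ESSA $(e,s)$, since $s \edge{\neg e}$ and $\psi$ is an isomorphism the marking $\psi(s)$ does not enable $e$ in $N$, so some place $p$ witnesses $\psi(s)(p) < f(p,e)$, i.e.\ $sup_p(s) < con_p(e)$, and $R_p$ solves the ESSA.

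The only genuinely delicate point is the last step of the $(\Leftarrow)$ direction, where one has to rule out edges that exist in $A_N$ but not in $A$; everything else is either a direct rewriting of the region axioms into the firing rule or a straightforward inductive lifting of paths from $A$ to $A_N$. That delicate point is precisely what the ESSP buys, which is why the definition of admissibility bundles SSP together with ESSP rather than SSP alone. Since the statement is explicitly attributed to~\cite{txtcs/BadouelBD15,DR-synth-96}, I would keep the writeup short and mainly indicate the correspondence ``places $\leftrightarrow$ regions'' and the roles of SSP (injectivity of $\varphi$) and ESSP (no spurious enabledness), rather than expanding every inductive bookkeeping step.
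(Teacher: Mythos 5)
Your argument is correct and is the standard regions-as-places correspondence; note that the paper itself gives no proof of this lemma, citing it from Badouel--Bernardinello--Darondeau, and your writeup is essentially the argument found there (places of $N$ induce regions via $sup_p(s)=\psi(s)(p)$, and conversely an admissible $\mathcal{R}$ yields the isomorphism $\varphi(s)(R)=sup(s)$, with SSP giving injectivity and ESSP excluding spurious enabledness). The only presentational wrinkle is that your claim of surjectivity onto $RS(N)$ by ``lifting paths'' already relies on the no-spurious-edges property that you only justify afterwards via the ESSP, so in a polished version that step should come first; this does not affect correctness.
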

By Lemma~\ref{lem:admissible}, deciding the existence of a sought net $N$ for $A$ is equivalent to deciding
the existence of an admissible set  $\mathcal{R}$ of $A$.
Moreover, since the regions $R=(sup, con, pro)$ of $\mathcal{R}$ yield the places in $N=N_A^{\mathcal{R}}$ and the
corresponding flow is defined by $con$ and $pro$, the places of $N$ are $(\varrho,\kappa)$-restricted if and only if
every region $R\in \mathcal{R}$ is $(\varrho,\kappa)$-restricted.
Eventually, this leads us to the following decision problem, which is the main subject of this paper:

\noindent
\fbox{\begin{minipage}[t][1.7\height][c]{0.97\textwidth}
\begin{decisionproblem}
  \problemtitle{\textsc{Environment Restricted Synthesis}}
  \probleminput{A TS $A=(S,E,\delta, \iota)$ and two natural numbers $\varrho$ and $\kappa$.}
  \problemquestion{Does  there exist an admissible set $\mathcal{R}$ of $A$ such that every region $R\in \mathcal{R}$ satisfies $\vert \preset{R}\vert\leq \varrho$ and $\vert \postset{R}\vert \leq \kappa$?}
\end{decisionproblem}
\end{minipage}}

\begin{figure}[b!]
\vspace*{-4mm}
\begin{center}
\begin{minipage}{\textwidth}
\begin{center}
\begin{tikzpicture}[new set = import nodes]
\begin{scope}[nodes={set=import nodes}]
		\node (T) at (-1.5,0) {$A_1:$};
		\foreach \i in {0,1,2} { \coordinate (\i) at (\i*1.5cm, 0) ;}
		\foreach \i in {0} { \node (\i) at (\i) {\nscale{$\boldsymbol{s_\i}$}};}
		\foreach \i in {1,2} { \node (\i) at (\i) {\nscale{$s_\i$}};}
		\path (0) edge [->, out=-120,in=120,looseness=5] node[left] {\escale{$a$}} (0);
\graph {(import nodes);
			0 ->["\escale{$b$}"]1->["\escale{$a$}"]2;
		};
\end{scope}
\begin{scope}[xshift=6cm, nodes={set=import nodes}]
		\node (T) at (-1.5,0) {$A_2:$};
		\foreach \i in {0,1} { \coordinate (\i) at (\i*2cm, 0) ;}
		\foreach \i in {0} { \node (\i) at (\i) {\nscale{$\boldsymbol{s_\i}$}};}
		\foreach \i in {1} { \node (\i) at (\i) {\nscale{$s_\i$}};}
\graph {(import nodes);
			0 ->[bend left =15, "\escale{$a$}"]1;
			1 ->[bend left =15, "\escale{$a$}"]0;
		};
\end{scope}
\begin{scope}[xshift=11cm, yshift=-0.5cm,nodes={set=import nodes}]
		\node (T) at (-1.5,0.5) {$A_3:$};
		\node (0) at (0,0) {\nscale{$\boldsymbol{s_0}$}};
		\node (1) at (1.5,0) {\nscale{$s_1$}};
		\node (2) at (0,1) {\nscale{$s_2$}};
		\node (3) at (1.5,1) {\nscale{$s_3$}};
\graph {(import nodes);
			0 ->[swap,"\escale{$a$}"]1->[swap, "\escale{$b$}"]3;
			0->[ "\escale{$b$}"]2->["\escale{$a$}"]3;
		};
\end{scope}
\end{tikzpicture}
\end{center}\vspace*{-8mm}
\caption{The TS $A_1$ (Example~\ref{ex:nothing}), $A_2$ (Example~\ref{ex:essp}) and $A_3$ (Example~\ref{ex:ssp_and_essp}) (initial states are indicated in bold). 
}\label{fig:three_ts}
\end{minipage}
\begin{minipage}{\textwidth}
\begin{center}
\begin{tikzpicture}[new set = import nodes]

\begin{scope}[node distance=1.5cm,bend angle=45,auto]
	\node (T) at (-1.75,-0.5) {$N_1:$};
\node [place] (R1)[tokens =1,label=left: $R_1$, node distance =1.5cm]{};
\node [transition] (a)[right of =R1]{$a$}
	 edge [pre, above] node {$\nscale{1}$}  (R1);
\node [place] (R2)[tokens =1,right of =a, label=right: $R_3$, node distance =1.5cm]{}
	edge [pre, bend right=25, above] node {$\nscale{1}$}  (a)
	edge [post, bend left=25, below] node {$\nscale{1}$}  (a);

\node [place] (R3)[tokens =1,below of=R1, label=left: $R_2$, node distance =1cm]{};
\node [transition] (b)[right of =R3]{$b$}
	 edge [pre, above] node {$\nscale{1}$}  (R3);

\end{scope}
\begin{scope}[xshift=7cm, yshift=-1cm,nodes={set=import nodes}]
		\node (T) at (-1.5,0.5) {$A_{N_1}:$};
		\node (0) at (0,0) {$\boldsymbol{111}$};
		\node (1) at (2,0) {$011$};
		\node (2) at (0,1) {$101$};
		\node (3) at (2,1) {$001$};
\graph {(import nodes);
			0 ->[swap, "\escale{$a$}"]1->[swap, "\escale{$b$}"]3;
			0->["\escale{$b$}"]2->["\escale{$a$}"]3;
		};
\end{scope}

\end{tikzpicture}
\end{center}\vspace*{-4mm}
\caption{Left: The Petri Net $N_1$ with initial marking $M_0(R_1)M_0(R_2)M_0(R_3)=111$ and $(1,1)$-restricted places.
Right: The reachability graph $A_{N_1}$.}\label{fig:11_restricted_net}\vspace*{2mm}
\end{minipage}
\begin{minipage}{\textwidth}
\begin{center}
\begin{tikzpicture}[new set = import nodes]

\begin{scope}[node distance=1.5cm,bend angle=45,auto]
\node (T) at (-1.75,-0.5) {$N_2:$};
\node [place] (R1)[tokens =1,label=left: $R_1$, node distance =1.5cm]{};
\node [transition] (a)[right of =R1]{$a$}
	 edge [pre, above] node {$\nscale{1}$}  (R1);

\node [place] (R3)[tokens =1,below of=R1, label=left: $R_2$, node distance =1cm]{};
\node [transition] (b)[right of =R3]{$b$}
	 edge [pre, above] node {$\nscale{1}$}  (R3);

\end{scope}
\begin{scope}[xshift=6cm, yshift=-1cm,nodes={set=import nodes}]
		\node (T) at (-1.5,0.5) {$A_{N_2}:$};
		\node (0) at (0,0) {$\boldsymbol{11}$};
		\node (1) at (2,0) {$01$};
		\node (2) at (0,1) {$10$};
		\node (3) at (2,1) {$00$};
\graph {(import nodes);
			0 ->[swap, "\escale{$a$}"]1->[swap, "\escale{$b$}"]3;
			0->["\escale{$b$}"]2->["\escale{$a$}"]3;
		};
\end{scope}

\end{tikzpicture}
\end{center}\vspace*{-4mm}
\caption{Left: The Petri Net $N_2$ with initial marking $M_0(R_1)M_0(R_2)=11$ and $(0,1)$-restricted places.
Right: The reachability graph $A_{N_2}$. }\label{fig:01_restricted_net}
\end{minipage}
\end{center}
\end{figure}

\begin{example}\label{ex:nothing}
The TS $A_1$ of Figure~\ref{fig:three_ts} has neither the SSP nor the ESSP:
If $R=(sup, con, pro)$ is a region of $A_1$, then the edge $s_0\edge{a}s_0$ requires $sup(s_0)=sup(s_0)-con(a)+pro(a)$, implying $con(a)=pro(a)$.
The latter implies $sup(s_1)=sup(s_2)$ by $sup(s_2)=sup(s_1)-con(a)+pro(a)$.
Moreover, by $s_1\edge{a}$, we have $sup(s_1)\geq con(a)$ and thus $sup(s_2)\geq con(a)$.
Consequently, since $R$ was arbitrary, the SSA $(s_1,s_2)$ and the ESSA $(a, s_2)$ are not solvable.
\end{example}

\begin{example}\label{ex:essp}
The TS $A_2$ of Figure~\ref{fig:three_ts} has the ESSP by triviality, since the only event $a$ occurs at all states of $A_2$, but not the SSP:
The SSA $(s_0,s_1)$ is not solvable, since any region $R=(sup, con, pro)$ of $A_2$ satisfies $sup(s_0)=sup(s_1)-con(a)+pro(a)$ and $sup(s_1)=sup(s_0)-con(a)+pro(a)$, which implies $sup(s_0)=sup(s_1)$.
\end{example}

\begin{example}\label{ex:ssp_and_essp}
The TS $A_3$ of Figure~\ref{fig:three_ts} has the ESSP and the SSP:
The region $R_1=(sup_1,con_1,pro_1)$, which, according to Remark~\ref{rem:implicitly}, is implicitly given by $sup_1(s_0)=1$ and $\T_{1,0}^{R_1}=\{a\}$ and $\T_{0,0}^{R_1}=\{b\}$, solves $(a,s_1)$, $(a,s_3)$, $(s_0,s_1)$, $(s_0,s_3)$, $(s_2,s_1)$ and $(s_2,s_3)$.
We obtain $R_1$ explicitly by  $sup_1(s_1)=sup_1(s_0)-con_1(a)+pro_1(a)=0$ and $sup_1(s_2)=sup(s_0)-con_1(b)+pro_1(b)=1$ and $sup_1(s_3)=sup_1(s_2)-con_1(a)+pro_1(a)=0$.

\medskip
Moreover, the region $R_2=(sup_2,con_2,pro_2)$, which is defined by $sup_2(s_0)=sup_2(s_1)=1$, $sup_2(s_2)=sup_2(s_3)=0$ and $\T_{1,0}^{R_2}=\{b\}$ and $\T_{0,0}^{R_2}=\{a\}$, solves the remaining SSA $(s_0,s_2)$ and $(s_1,s_3)$ and ESSA $(b,s_2)$ and $(b,s_3)$ of $A$.

The TS $A_3$ has also the region $R_3=(sup_3, con_3, pro_3)$ defined by $sup_3(s_i)=1$ for all $i\in \{0,1,2,3\}$ and $\T_{1,1}^{R_3}=\{a\}$ and $\T_{0,0}^{R_3}=\{b\}$.

Since $R_1$ and $R_2$ solve all SSA and ESSA both of $\mathcal{R}_1=\{R_1,R_2, R_3\}$ and $\mathcal{R}_2=\{R_1,R_2\}$ are admissible sets of $A_3$.
Figure~\ref{fig:11_restricted_net} shows the synthesized net $N_1=N^{\mathcal{R}_1}_{A_3}$ whose places are $(1,1)$-restricted, but not $(0,1)$-restricted, since $\vert \preset{R_3}\vert =\vert \{a\}\vert =1$.
The reachability graph $A_{N_1}$ is sketched on the right hand side of Figure~\ref{fig:11_restricted_net} and it is isomorphic to $A_3$.
The isomorphism $\varphi$ is given by $\varphi(s_0)=111$, $\varphi(s_1)=011$, $\varphi(s_2)=101$ and $\varphi(s_3)=001$.
However, the input $(A_3,0,1)$ for \textsc{Environment Restricted Synthesis} allows a positive decision, because the admissible set $\mathcal{R}_2$ satisfies $\vert \preset{R}\vert =0$ and $\vert \postset{R}\vert =1$ for all $R\in \mathcal{R}_2$.
Figure~\ref{fig:01_restricted_net} shows the synthesized net $N_2=N^{\mathcal{R}_2}_{A_3}$ (left) and its reachability graph $A_{N_2}$ (right).
\end{example}

\section{The computational complexity of \textsc{Environment Restricted \\ Synthesis}}\label{sec:main_result}%

The following theorem provides the main contribution of this section:

\begin{theorem}\label{the:main_result}
Environment Restricted Synthesis is NP-complete.
\end{theorem}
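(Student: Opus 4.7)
My plan is to certify a positive instance by exhibiting an admissible set $\mathcal{R}$ of $(\varrho,\kappa)$-restricted regions, together with, for every SSA and ESSA of $A$, a pointer to the region in $\mathcal{R}$ that solves it. By Definitions~\ref{def:ssp} and~\ref{def:essp} there are only $O(|S|^2+|S|\cdot|E|)$ atoms, so at most polynomially many regions need be listed. The key point is to show that each region admits a polynomial-size description. By Remark~\ref{rem:implicitly}, a region is determined by $(sup(\iota),con,pro)$, and since $A$ is initialized, the remaining support values are forced by summing $pro-con$ along any path from $\iota$. After nondeterministically guessing the (small) supports $\preset{R}\subseteq E$ and $\postset{R}\subseteq E$ with $|\preset{R}|\le\varrho$ and $|\postset{R}|\le\kappa$, the region axioms (equalities along edges, $con(e)\le sup(s)$ for each edge $s\edge{e}s'$, coherence whenever two paths lead to the same state) together with the inequality witnessing the targeted atom form a system of linear inequalities in the variables $sup(\iota),con(e),pro(e)$; by standard basic-feasible-solution arguments this LP has a rational solution of bit-complexity polynomial in $|A|$, which after clearing denominators yields integer values of polynomial bit-length. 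Hence a polynomially checkable certificate exists.

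\textbf{NP-hardness.} The plan is a polynomial-time reduction from \textsc{Hitting Set}: given $U=\{u_1,\dots,u_n\}$, $\mathcal{C}=\{C_1,\dots,C_m\}\subseteq 2^U$, and $k\in\mathbb{N}$, construct a TS $A$ together with bounds $\varrho,\kappa$ such that $A$ admits an admissible $(\varrho,\kappa)$-restricted region set iff $\mathcal{C}$ has a hitting set of cardinality at most $k$. Events of $A$ will include one event $e_u$ for each $u\in U$ plus some auxiliary events; for each $C_i\in\mathcal{C}$, a gadget will generate a distinguished ESSA $(\alpha_i,s_i)$ whose solution forces a region whose postset (or preset) must meet $\{e_u\mid u\in C_i\}$. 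Setting the corresponding bound (say $\kappa$) to $k$ and leaving the other loose, a single region of postset size at most $k$ can simultaneously solve all the $(\alpha_i,s_i)$ exactly when the events $\{e_u:e_u\in\postset{R}\}$ form a hitting set of $\mathcal{C}$ of size $\le k$. Remaining SSAs and ESSAs will be solvable by small ``local'' regions whose preset and postset are trivially within the bounds, and thus they will not interfere with the combinatorial content.

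\textbf{Main obstacle.} I expect the delicate step to be the design of the ESSA gadget on the hardness side. It must simultaneously (i) propagate the region equation $sup(s')=sup(s)-con(e)+pro(e)$ rigidly enough that any solving region is forced to have non-zero $con$ on at least one $e_u$ with $u\in C_i$, translating cardinality of $\postset{R}$ exactly into hitting-set cardinality; (ii) leave all other SSAs and ESSAs solvable by $(\varrho,\kappa)$-restricted regions; and (iii) keep $|A|$ polynomial in $n+m$ while avoiding ``shortcut'' regions that spuriously certify the global ESSAs without going through a hitting set. On the membership side, the subtler point is to ensure that, after guessing the presets/postsets, the LP for the region really does have a polynomially-bounded integer solution --- this requires a careful look at the determinants of the constraint system arising from the region equations of an initialized TS, but should reduce to standard LP bounds.
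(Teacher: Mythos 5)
Your membership argument is essentially the paper's: for each of the polynomially many separation atoms, nondeterministically guess the at most $\varrho$ events allowed a nonzero $pro$ and the at most $\kappa$ events allowed a nonzero $con$, set the remaining variables to zero, and solve the resulting polynomial-size linear system. The paper exploits the homogeneity of the system to work over the rationals (Khachiyan) and rescale to integers, which plays the same role as your basic-feasible-solution bit-length bound. That half is fine.

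The hardness half has a genuine gap. You propose one distinguished ESSA $(\alpha_i,s_i)$ \emph{per clause} $C_i$, each forcing any solving region's postset to meet $\{e_u\mid u\in C_i\}$, and then assert that a single region of postset size at most $k$ solves all of them exactly when a hitting set of size at most $k$ exists. But admissibility (Definition~\ref{def:admissible_set}) only requires that each atom be solved by \emph{some} region of $\mathcal{R}$; nothing forces one region to handle all $m$ atoms. A yes-instance of your constructed \ers{} instance therefore only yields $m$ different regions, each meeting one clause with a set of size at most $k$, whose union is a hitting set of size up to $km$ --- the direction ``\ers{} yes $\Rightarrow$ \hs{} yes'' breaks. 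The paper's construction avoids this by engineering a \emph{single} ESSA $\alpha=(k,t_{0,1})$ together with a propagation mechanism that forces any one $(\varrho,\kappa)$-restricted region solving it to hit every clause: the $F_i$ gadgets force $k,k_0,\dots,k_{\lambda-1}$ into $\postset{R}$, exactly exhausting the budget $\kappa=\lambda+1$; the $G_j$ gadgets then force $pro(u_j)\le con(u_j)$, so along the chain $t_{0,1}\edge{u_1}t_{1,1}\edge{u_2}\cdots$ the supports are non-increasing and $con(k)>sup(t_{i,1})$ holds for \emph{all} $i$, whence every clause path must contain an event with positive $pro$. Note also that the hitting set is read off the \emph{preset} (events $X$ with $pro(X)>0$), and its size is capped at $\lambda$ only because the preset bound $\varrho=2\lambda$ is already half-consumed by the forced events $z_0,\dots,z_{\lambda-1}$; both bounds are essential, so ``leaving the other loose'' would not survive this construction either. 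You correctly identify this propagation as the main obstacle, but the plan you sketch does not overcome it.
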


In order to prove Theorem~\ref{the:main_result}, we have to show that \ers\ is in NP and that it is NP-hard.
The following Section~\ref{sec:in_np} is dedicated to the membership in NP.
The corresponding proof basically extends the deterministic polynomial-time algorithm for the (unrestricted) \textsc{Synthesis} to a non-deterministic algorithm for \ers.
The applied methods also show that, for any fixed $\varrho$ and $\kappa$, environment restricted synthesis can be done in polynomial-time.
After that, Section~\ref{sec:np_hard} deals with the hardness part and provides a polynomial reduction of
the \textsc{Hitting Set}  problem.

\subsection{Environment Restricted Synthesis is in NP}\label{sec:in_np}

In this section, we show that \ers\ belongs to the complexity class NP.
In order to obtain a fitting non-deterministic algorithm for \ers, we extend the deterministic approach for the (unrestricted) \syn, which, for example, has been presented in~\cite{txtcs/BadouelBD15}.
The tractability of \syn\  is based on the fact that the solvability of a single (state or event/state) separation atom $\alpha$ of a given TS $A=(S,E,\delta,\iota)$ is polynomial-time reducible to a system $L_\alpha$ composed of a polynomial number of linear equations and inequalities on rational variables.
Such a system can be solved or decided unsolvable in polynomial-time by Khachiyan's method and theorem~\cite[pp. 168-170]{networks/Rajan90}.
Since $A$ has at most $\vert E\vert\cdot \vert S\vert +\vert S\vert^2$ separation atoms, it follows that deciding the solvability of $A$ by deciding the solvability of every atom via the solvability of its corresponding systems $L_\alpha$, yields a deterministic polynomial-time algorithm for \syn\ (as well as a method to produce a solution when the decision is positive).
For a separation atom $\alpha$, our approach extends the aforementioned system $L_\alpha$ non-deterministically to
a system $L_\alpha'$.
The additional linear inequalities encode the restriction requirements:
$L_\alpha'$ is solvable if and only if there is a properly restricted region that solves $\alpha$.
The size of $L_\alpha'$ is still polynomially bounded by the size of $A$ and can again be solved with Khachiyan's method.
Hence, the membership of \ers\ in NP follows.

In the following, unless explicitly stated otherwise, let $(A,\varrho,\kappa)$ be an arbitrary but fixed input of \ers\ with TS $A=(S,\{e_1,\dots, e_n\},\delta,\iota)$, and let $\alpha$ be an arbitrary but fixed separation atom of $A$.
In~order to develop the announced approach, we first$\,$ briefly recapitulate$\,$ the deterministic approach~~to

\eject

\noindent
  solve $\alpha$ by a region, which is not necessarily restricted.
While we are only informally providing the intended functionality of all equations and inequalities presented, the formal proofs for the corresponding statements can be found in~\cite{txtcs/BadouelBD15}.
After that, we introduce the announced extension of this approach.

\medskip
First, by Remark~\ref{rem:implicitly}, a region $R=(sup, con, pro)$ is completely defined by $sup(\iota)$, $con$ and $pro$,
 i.e., by $2\cdot n+1$ variables. Hence, $R$ can be identified with a vector
 \[\textbf{x}=(sup(\iota),con(e_1),\dots, con(e_n),pro(e_1),\dots, pro(e_n))\in \mathbb{N}^{2n+1}\]

The aforementioned system $L_\alpha$ essentially consists of two parts:
The first part consists of equations and inequalities that ensure that a solution can be interpreted as a region at all.
Among others, this part encompasses equations that result from \emph{fundamental cycles}, which are defined by \emph{chords} of a \emph{spanning tree} of $A$.
The second part consists of an inequality that ensures that such a region actually solves $\alpha$.

A spanning tree of a TS is a sub-TS whose underlying undirected and unlabeled graph is a tree in the common graph-theoretical sense that is rooted at $\iota$:

\begin{definition}[Spanning tree, chord]\label{def:spanning_tree}
A \emph{spanning tree} $A'=(S,E',\delta',\iota)$ of the TS $A$ is a loop-free TS such that, for all $s,s'\in S$ and all $e\in E'$ the following is satisfied:
(1) if $s\edge{e}s'\in A'$, then $s\edge{e}s'\in A$;
(2) either $s=\iota$ or there is exactly one directed labeled path $P_s$ from $\iota$ to $s$ in $A'$.
An edge $s\edge{e}s'$ that is present in $A$ but not in $A'$ is called a \emph{chord} (for $A'$).
\end{definition}

In the following, let $A'$ be an arbitrary but fixed spanning tree of $A$.
By Definition~\ref{def:spanning_tree}, $A'$ contains $|\delta'|=(|S|-1)$ edges, and $(|\delta|-|S|+1)$ chords.
Moreover, for every state $s\in S\setminus\{\iota\}$, there is exactly one path $P_s$ from $\iota$ to $s$ in $A'$.
In order to count, for every event $e\in E$, the number of its occurrences along $P_s$, we use a \emph{Parikh-vector}:

\begin{definition}[Parikh-vector]\label{def:parikh-vector}
Let $s\in S$, and let $P_s=\iota\edge{e_{i_1}}\dots\edge{e_{i_m}}s$ be the unique path from $\iota$ to $s$ in $A'$ if $s\not=\iota$.
The \emph{Parikh-vector} $\psi_s$ (of $s$ in $A'$) is the mapping $\psi_s:\{e_1,\dots, e_n\}\rightarrow \mathbb{N}$ defined, for every $e\in \{e_1,\dots, e_n\}$, by
\[\psi_s(e)=
\begin{cases}
\vert \{\ell \mid e_{i_\ell}=e \text{ and } \ell\in \{1,\dots, m\}\}\vert, & \text{if } s\not=\iota\\
0, & \text{ otherwise}
\end{cases}
\]
For convenience we denote $\psi_s=(\psi_s(e_1),\dots, \psi_s(e_n))$.
\end{definition}

The chords for $A'$ define so-called \emph{fundamental cycles}:

\begin{definition}
Let $t=s\edge{e}s'$ be a chord for $A'$.
The \emph{fundamental cycle} $\psi_t$ (of $t$) is the mapping $\psi_t:\{e_1,\dots, e_n\}\rightarrow \mathbb{Z}$ defined, for all $i\in  \{1,\dots, n\}$, by
\[\psi_t(e_i)=
\begin{cases}
\psi_s(e_i) -\psi_{s'}(e_i), &\text{if } e_i\not=e\\
\psi_s(e) +1 - \psi_{s'}(e), &\text{otherwise}
\end{cases}
\]
For convenience, we denote $\psi_t=(\psi_t(e_1),\dots, \psi_t(e_n))$.
\end{definition}

\eject

A region $R=(sup, con,pro)$ of $A$ is completely defined from $sup(\iota)$ and $con,pro$ as follows:\\
If $\iota\edge{a_1}s_1\edge{a_2}\dots\edge{a_m}s$ is the unique path $P_s$  from $\iota$ to $s$ in $A'$, then
\begin{equation}\label{eq:path_1}
sup(s)=sup(\iota)-con(a_1)+pro(a_1) \dots -con(a_m) +pro(a_m)
\end{equation}
If we define $\textbf{z}=(pro(e_1)-con(e_1),\dots, pro(e_n)- con(e_n)) (\in\mathbb{Z}^n)$, then Equation~\ref{eq:path_1} reduces to
\begin{equation}\label{eq:path_2}
sup(s)=sup(\iota) + \psi_s\cdot \textbf{z}
\end{equation}
with the classical scalar product of two vectors in $\mathbb{Z}^n$:
$(x_1,\dots, x_n)\cdot(y_1,\dots, y_n)=x_1\cdot y_1 + \dots + x_n\cdot y_n$.

\medskip
Armed with these notions, we are now able to introduce $L_\alpha$.
For the sake of simplicity, we first restrict ourselves to the case that $\alpha$ equals an (arbitrary but fixed) ESSA  $(e_j,q)$ of $A$, where $j\in \{1,\dots, n\}$.
We will see later that the following arguments also apply to SSA.

\medskip
For every fundamental cycle $\psi_t$ of $A'$, $L_\alpha$ has the following equation:
\begin{equation}\label{eq:fundamental}
\psi_t\cdot (x_{n+1}-x_1,\dots, x_{2n}-x_n)=0
\end{equation}

Each variable must be nonnegative, hence
\begin{align}
\label{eq:sup} 0 \leq \, \underbrace{x_0}_{sup(\iota)}  \land\,\,
\forall i\in\{1,\dots, n\}:\,0 \leq &\, \underbrace{x_i}_{con(e_i)}  \land\, 0 \leq \, \underbrace{x_{n+i}}_{pro(e_i)}
\end{align}

Moreover, for all $s\in S$ and $i\in \{1,\dots, n\}$ with $s\edge{e_i}$ in $A$ (where $e_i$ is not necessarily defined at $s$ in $A'$), $L_\alpha$ has the following inequalities:
\begin{align}
 \label{eq:sig} 0\leq &\, \underbrace{x_0 + \psi_s \cdot (x_{n+1}-x_1,\dots, x_{2n}-x_n)}_{sup(s)}  - \underbrace{x_i}_{con(e_i)}
\end{align}

Finally, for the ESSA $\alpha$, $L_\alpha$  has following inequality:
\begin{align}\label{eq:essp}
\underbrace{x_0 + \psi_q \cdot (x_{n+1}-x_1,\dots, x_{2n}-x_n)}_{sup(q)} - \underbrace{x_j}_{con(e_j)} <0
\end{align}

If $R$ solves $\alpha$, then $\textbf{y}=(sup(\iota), con(e_1),\dots, con(e_n),pro(e_1),\dots, pro(e_n))$ solves $L_\alpha$:
Equation~\ref{eq:fundamental} follows by Proposition~6.16 of~\cite{txtcs/BadouelBD15};
Equation~\ref{eq:sup}, Equation~\ref{eq:sig} and Equation~\ref{eq:essp} follow from Equation~\ref{eq:path_2}, the definition of regions, implying $sup(\iota)\geq 0$ and $con(e_i)\leq sup(s)$ for every state $s$ of $A$ at which $e_i$ occurs, and the fact that $R$ solves $\alpha$, implying $sup(q) < con(e_j)$ and thus $sup(q)-con(e_j)<0$.

Furthermore, if $\textbf{x}=(x_0,x_1,\dots, x_n,x_{n+1},\dots, x_{2n})$ is an integer solution of the system $L_\alpha$, then
$(sup(\iota),con(e_1),\dots, con(e_n), pro(e_1),\dots, pro(e_n))=\textbf{x}$ (implicitly) defines a region of $A$ that solves $\alpha$~\cite{txtcs/BadouelBD15}:
Equation~\ref{eq:fundamental}, ensures that defining $sup(s)$ for all $s\in S$ according to Equation~\ref{eq:path_2} yields a support $sup$ that satisfies $sup(s')=sup(s)-con(e)+pro(e)$ for every edge $s\edge{e}s'$ of $A$ (no incoherence may occur from different paths to the same state).
Equation~\ref{eq:sup} ensures a valid support value for $\iota$.
Equations~\ref{eq:sig} ensures $sup(s)\geq con(e_i)$ if the event $e_i$ occurs at $s$ in $A$ and, by $sup(\iota)\geq 0$, the equations also ensure $sup(s)\geq 0$ for all $s\in S$.
Hence, the result $R=(sup, con, pro)$ is a well-defined region of $A$.
Moreover, Equation~\ref{eq:essp} implies that $R$ solves $\alpha$.

Consequently, an integer vector $\textbf{x}$ solves the system built by the Equations~1-6 if and only if $\textbf{x}=(sup(\iota),con(e_1),\dots, con(e_n), pro(e_1),\dots, pro(e_n))$ for some region $R=(sup, con, pro)$ of $A$
that solves $\alpha$.

\medskip
For an SSP $\alpha=(s,s')$, we simply have to replace constraint~(\ref{eq:essp}) by the following, which expresses that the corresponding supports are different
\begin{align}\label{eq:ssp}
(\psi_s-\psi_{s'}) \cdot (x_{n+1}-x_1,\dots, x_{2n}-x_n)\neq 0
\end{align}

The solution of each separation problem (decision and construction if positive) thus always amounts to solving an
integer linear programming (ILP for short) problem \cite{AMP77-9}, of a size linear in the size of $A$
(without an economic function to optimize, but we may always add a null function for that).
It may also be considered as an instance of a satisfiability modulo theory (SMT for short)~\cite{faia/BarrettSST09}.
There are many ILP- or SMT-solvers\footnote{If a solver does not allow strict comparators, it is always possible to replace
``$<0$'' by ``$\leq -1$'', and ``$\neq 0$'' by two systems, one with ``$\leq -1$'' and one with ``$\geq 1$''}
that may be used to solve such systems.
Note that, in practice, it is not necessary to handle each separation problem from scratch: it is a good idea to
first check if one of the regions found previously does not also solve the separation atom at hand.
 It is also possible to handle the various separation problems in parallel, and if $A$ presents some symmetries it is often possible
 to use them to simplify the searches.

Those problems are generally NP-complete, hence they are always in NP, but some subclasses may be polynomial
and this is the case in our context. Indeed, it may be observed that our systems are homogeneous (no constant arises, but $0$).
Hence, if we have a solution, we shall get another one by multiplying the first one by a strictly positive scalar.
That means that, instead of searching a solution in the integer domain, we may always search one in the rational domain,
and afterwards multiply it by an adequate factor to get an integer solution.
Moreover, since there is a single strict constraint (the one corresponding to the separation property under consideration),
we may always replace the constraint $expr<0$ by $expr\leq -1$ and replace each SSA system by two systems with
$expr\leq -1$ and $expr\geq 1$.
The reason for the replacement is that in the rational domain, there are polynomial algorithms to solve such systems of linear constraint,
like Khachiyan's method~\cite[pp. 168-170]{networks/Rajan90}.

\medskip
Now let's deduce the announced non-deterministic polynomial-time algorithm that decides whether $\alpha$ is solvable by a $(\varrho,\kappa)$-restricted region.
If $R$ is such a region, that is, $\vert\preset{R}\vert \leq \varrho$ 
and
$\vert\postset{R}\vert\leq \kappa$, 
 then there are at most $\varrho$ indices $i_1,\dots, i_\varrho\in \{1,\dots, n\}$ and at most $\kappa$ indices $j_1,\dots, j_\kappa\in \{1,\dots, n\}$, such that $pro(e_{i_\ell})>0$ 
 for all $\ell\in \{1,\dots, \varrho\}$ and $con(e_{j_k})>0$ 
 for all $k\in \{1,\dots, \kappa\}$, respectively.
In particular, for all $i\in \{1,\dots, n\}\setminus\{i_1,\dots, i_\varrho\}$ and all $j\in \{1,\dots, n\}\setminus\{j_1,\dots, j_\kappa\}$,
the searched vector $\textbf{x}$ solves the following equations,
in addition to (\ref{eq:fundamental}) - (\ref{eq:essp} or \ref{eq:ssp}):
\begin{align}
\label{eq:con} x_j &=0 \\
\label{eq:pro}  x_{i+n}&=0
\end{align}
The corresponding system $L'_\alpha$ is still homogeneous, and again it has an integer solution if and only
if it has a rational solution which may be decided and constructed in polynomial time in term of the size of $A$.
The corresponding region $R$ will then be a $(\varrho,\kappa)$-restricted region of $A$ solving the separation problem $\alpha$,
detected and built polynomially.

Obviously, if $R$ exists, then a Turing machine $T$ can guess the indices $i_1,\dots, i_\varrho$ and $j_1,\dots, j_\kappa$
in a non-deterministic computation.
After that, $T$ can deterministically construct $L'_\alpha$ and compute an integer solution $\textbf{x}$, if it exists.
The construction of $L'_\alpha$ (after the aforementioned indices are guessed) and the computation of a solution $\textbf{x}$ of $L_\alpha'$ as well as the verification that $\textbf{x}$ actually defines a sought region are doable in polynomial-time.
Altogether, we have argued, that the solvability of a separation property of $A$ by a $(\varrho,\kappa)$-restricted region can be decided by a non-deterministic Turing-machine in polynomial-time.

\medskip
Let $m=\vert A\vert$ denote the size of the input TS $A=(S,E,\delta,\iota)$, implying $\vert E\vert=n \leq m$.
If $\varrho$ and $\kappa$ are fixed in advance, then there are $\binom{n}{\varrho}$ and $\binom{n}{\kappa}$ possibilities to choose $\{e_{i_1},\dots, e_{i_\varrho}\}$ and $\{e_{j_1},\dots, e_{j_\kappa}\}$, respectively.
Hence, in order to decide whether $\alpha $ is solvable by a properly restricted region, we have to check the solvability of at most $\mathcal{O}(m^{\varrho+\kappa})$ systems of equations and inequalities like the ones discussed above.
For every system, its construction and the test of its solvability can be done deterministically in time polynomial in $m$.
Moreover, every solution of a system implies a solving region and there are at most $\vert S\vert\cdot\vert E\vert+\vert S\vert^2$ separation atoms (and this number obviously depends polynomially on $m$).
Finally, by Lemma~\ref{lem:admissible}, any admissible set $\mathcal{R}$ implies already a sought net $N=N_A^{\mathcal{R}}$.
Therefore, this implies that there is a constant $c$ and an algorithm that (deterministically) solves \ers\ in time $\mathcal{O}(m^{\varrho+\kappa+c})$ and we have the following corollary:

\begin{corollary}\label{cor:fixed}
For any fixed natural numbers $\varrho$ and $\kappa$, there is a constant $c$ and an algorithm that runs in time $\mathcal{O}(m^c)$ and decides whether for a given transition system $A$ there is a Petri net $N$ such that (1) the reachability graph of $N$ is isomorphic to $A$ and (2) every place $p$ of $N$ satisfies $\vert \preset{p}\vert \leq \varrho$ and $\vert \postset{p}\vert \leq \kappa$ and, in the event of a positive decision, constructs a sought net $N$.
\end{corollary}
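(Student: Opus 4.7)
The plan is to turn the non-deterministic procedure described just above into a deterministic one by brute-force enumeration, exploiting that $\varrho$ and $\kappa$ are now treated as constants. Write $m=|A|$ and $n=|E|\le m$. For each separation atom $\alpha$ of $A$ (there are at most $|S|\cdot|E|+|S|^2 = O(m^2)$ of them), and each pair of subsets $I\subseteq\{1,\dots,n\}$ with $|I|\le\varrho$ and $J\subseteq\{1,\dots,n\}$ with $|J|\le\kappa$, I would construct the linear system $L'_\alpha$ consisting of the constraints (\ref{eq:fundamental})--(\ref{eq:essp}) (or (\ref{eq:ssp}) instead of (\ref{eq:essp}) in the SSA case), together with the additional equations $x_{i+n}=0$ for all $i\notin I$ and $x_j=0$ for all $j\notin J$. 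These extra equations force $pro(e_i)=0$ outside $I$ and $con(e_j)=0$ outside $J$, so any solution yields a region $R=(sup,con,pro)$ with $\preset{R}\subseteq\{e_i\mid i\in I\}$ and $\postset{R}\subseteq\{e_j\mid j\in J\}$, hence $(\varrho,\kappa)$-restricted.

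The next step is to argue polynomiality of one such system-check. As noted in the excerpt, $L'_\alpha$ is homogeneous, so any nonzero rational solution can be scaled to an integer one; by Khachiyan's method, rational feasibility of a system of linear (in)equalities of size polynomial in $m$ can be decided (and a solution constructed) in time polynomial in $m$. Thus each $L'_\alpha$ can be handled in time $m^{O(1)}$, independently of $\varrho$ and $\kappa$.

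Combining the two, the total number of systems considered per atom is $\binom{n}{\le\varrho}\cdot\binom{n}{\le\kappa}=O(m^{\varrho+\kappa})$, so the whole search uses $O(m^{\varrho+\kappa}\cdot m^{O(1)})=O(m^{\varrho+\kappa+c})$ time for some constant $c$ independent of the input. For fixed $\varrho,\kappa$ this is polynomial. The algorithm then declares success iff for every separation atom $\alpha$ at least one of the attempted systems had a solution, and otherwise it returns failure. If successful, collecting one witnessing region per atom into a set $\mathcal{R}$ yields an admissible set, and by Lemma~\ref{lem:admissible} the synthesized net $N^{\mathcal{R}}_A$ is a Petri net whose reachability graph is isomorphic to $A$; by construction, each of its places (a region in $\mathcal{R}$) is $(\varrho,\kappa)$-restricted.

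There is no genuine obstacle here: all the heavy lifting (the construction and solvability of $L'_\alpha$, and the correspondence between admissible sets and synthesized nets) is already done in the preceding discussion. The only care required is (i) to note that the guess from the NP argument can be replaced by enumeration in $O(m^{\varrho+\kappa})$ steps once $\varrho$ and $\kappa$ are constants, and (ii) to observe that enumerating separation atoms and collecting solving regions requires only an additional polynomial factor, which is absorbed into $c$.
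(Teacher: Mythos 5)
Your proposal is correct and follows essentially the same route as the paper: replace the non-deterministic guess of the support sets by enumeration of the $\mathcal{O}(m^{\varrho+\kappa})$ candidate pre-/post-set index sets per separation atom, decide each resulting homogeneous system $L'_\alpha$ in polynomial time over the rationals (rescaling to integers), and assemble the solving regions into an admissible set to obtain $N^{\mathcal{R}}_A$ via Lemma~\ref{lem:admissible}. No substantive difference from the paper's argument.
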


From the well-known symmetry property  $\binom{n}{k}= \binom{n}{n-k}$, we deduce more generally that
\begin{corollary}\label{cor:fixedgen}
For any fixed natural numbers $\varrho$ and $\kappa$, there is a polynomial algorithm that decides whether for a given transition system $A$ there is a Petri net $N$ such that (1) the reachability graph of $N$ is isomorphic to $A$ and (2) every place $p$ of $N$ satisfies $\vert \preset{p}\vert \leq n_1$ and $\vert \postset{p}\vert \leq n_2$ and, in the event of a positive decision, constructs a sought net $N$, for the pairs $(n_1,n_2)=(\varrho,\kappa)$ or $(|E|-\varrho,\kappa)$ or $(\varrho,|E|-\kappa)$ or
$(|E|-\varrho,|E|-\kappa)$ or $(|E|,\kappa)$ or $(\varrho,|E|)$ or $(|E|,|E|-\kappa)$ or $(|E|-\varrho,|E|)$.
\end{corollary}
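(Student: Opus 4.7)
My plan is to reuse the non-deterministic search from the proof of Corollary~\ref{cor:fixed} and, whenever the desired bound forces \emph{few} forbidden events rather than few permitted ones, to enumerate the complement instead of the preset (resp.\ postset) itself. The key observation is that for a region $R=(sup,con,pro)$ the condition $|\preset{R}|\leq |E|-\varrho$ is equivalent to $|\{e\in E\mid pro(e)=0\}|\geq \varrho$, and dually for $\postset{R}$; combined with the symmetry $\binom{n}{\varrho}=\binom{n}{n-\varrho}$ this lets us trade one flavour of guess for the other at the same polynomial cost.

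Concretely, I would recall that the proof of Corollary~\ref{cor:fixed} treats the bound $|\preset{R}|\leq\varrho$ by cycling over all $\binom{n}{\varrho}$ subsets $I\subseteq\{1,\dots,n\}$ of size $\varrho$, adding the equations $x_{i+n}=0$ (cf.~(\ref{eq:pro})) for every $i\notin I$ to the base system (\ref{eq:fundamental})--(\ref{eq:sig}) supplemented by (\ref{eq:essp}) or (\ref{eq:ssp}), and feeding each such homogeneous rational linear system to Khachiyan's algorithm. For the bound $|\preset{R}|\leq |E|-\varrho$, I would cycle instead over all $\binom{n}{\varrho}$ subsets $I'$ of size $\varrho$ and add the equations $x_{i+n}=0$ for $i\in I'$: any region meeting the bound has at least $\varrho$ events with $pro(e)=0$ and hence matches one of these guesses, while every solution of the augmented system automatically satisfies $|\preset{R}|\leq n-\varrho$. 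The bounds $|\postset{R}|\leq\kappa$ and $|\postset{R}|\leq |E|-\kappa$ are handled symmetrically with equations of the form $x_j=0$ (cf.~(\ref{eq:con})); the option ``$|E|$'' on either side imposes no restriction and contributes a single trivial choice.

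Combining the preset and postset enumerations independently covers all eight pairs $(n_1,n_2)$ listed in the statement, with $\mathcal{O}(m^{\varrho+\kappa})$ linear systems per separation atom, each of polynomial size in $m=|A|$; gathering the regions found for all SSA and ESSA and invoking Lemma~\ref{lem:admissible} reassembles the desired net $N$ in polynomial time. I do not anticipate any genuine obstacle beyond Corollary~\ref{cor:fixed}: the main thing to double-check is that appending equations of the form $x_\ell=0$ keeps the system homogeneous, so that the ``rational solution up to scaling gives an integer solution'' shortcut used there carries over verbatim, and that the trivial ``$|E|$'' option can be folded into the enumeration as an empty guess.
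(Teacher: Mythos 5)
Your proposal is correct and matches the paper's intended argument: the paper derives Corollary~\ref{cor:fixedgen} from Corollary~\ref{cor:fixed} in one line via the symmetry $\binom{n}{k}=\binom{n}{n-k}$, and your enumeration of the $\varrho$ (resp.\ $\kappa$) events forced to have $pro(e)=0$ (resp.\ $con(e)=0$) is exactly the operational reading of that symmetry, with the $|E|$ option correctly treated as an empty guess. Your checks that the augmented systems remain homogeneous and that the two directions of the equivalence (every conforming region matches some guess; every solution of an augmented system conforms) both hold are exactly the points that need verifying, and they do hold.
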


This argument is valid if $\varrho$ and $\kappa$ are fixed in advance, but might fail in some cases where  $\varrho$ and $\kappa$ rely on the size of the TS $A$ to be solved.
 For instance, if $\varrho=|E|/2$, there are $\binom{|E|}{|E|/2}$ ways to choose $\{e_{i_1},\dots, e_{i_{|E|/2}}\}$,
 and  $\binom{|E|}{|E|/2}$ grows like $(4^{|E|}) / \sqrt{\pi\cdot|E|/2}$,  hence exponentially
 (it is close to the Catalan number $C_{|E|/2}$).
 This is not a proof that the problem is exponential and not polynomial, since there could be another way
 to tackle the problem, but should make us suspicious. And we shall show in the next subsection that the problem is indeed
 NP-complete when $\varrho$ and $\kappa$ are given together with $A$.

\subsection{Environment Restricted Synthesis is NP-hard}\label{sec:np_hard}%

In order to complete the proof of Theorem~\ref{the:main_result}, it remains to show that \textsc{ERS} is NP-hard.
Our proof of the NP-hardness will be based on a polynomial-time reduction of the hitting set problem,
which is known to be NP-complete (see~\cite{coco/Karp72}):

\noindent
\fbox{\begin{minipage}[t][1.7\height][c]{0.97\textwidth}
\begin{decisionproblem}
  \problemtitle{\textsc{Hitting Set (HS)}}
  \probleminput{A triple $(\mathfrak{U},M,\lambda)$ that consist of a finite set $\mathfrak{U}$,
  a set $M=\{M_0,\dots, M_{m-1}\}$ of subsets of $\mathfrak{U}$ and a natural number $\lambda$.}
  \problemquestion{Does there exist a hitting set $\mathfrak{S}$ for $(\mathfrak{U},M)$, that is, some subset
  $\mathfrak{S}\subseteq \mathfrak{U}$ such that $\mathfrak{S}\cap M_i\not=\emptyset$ for all $i\in \{0,\dots, m-1\}$,
  that satisfies $\vert \mathfrak{S}\vert \leq \lambda$?}
\end{decisionproblem}
\end{minipage}}

\begin{example}\label{ex:hs}
The instance $(\mathfrak{U},M,3)$ such that $\mathfrak{U}=\{X_0,X_1,X_2,X_3\}$ and $M=\{M_0,\dots, M_5\}$, where $M_0=\{X_0,X_1\}$, $M_1=\{X_0,X_2\}$, $M_2=\{X_0,X_3\}$, $M_3=\{X_1,X_2\}$, $M_4=\{X_1,X_3\}$ and $M_5=\{X_2,X_3\}$, allows a positive decision:
$\mathfrak{S}=\{X_0,X_1,X_2\}$ is a suitable hitting set.
\end{example}

In the remainder of this section, until stated explicitly otherwise, let $(\mathfrak{U}, M, \lambda)$ be an arbitrary but fixed input of \textsc{HS} such that  $\mathfrak{U}=\{X_0,\dots, X_{n-1}\}$ and $M=\{M_0,\dots,M_{m-1}\}$, where $M_i=\{X_{i_0},\dots, X_{i_{m_i-1}}\}$ (and thus $\vert M_i\vert =m_i$) for all $i\in \{0,\dots, m-1\}$.
For technical reasons, we assume without loss of generality that $i_0 <\dots < i_{m_i-1}$ for the elements  $X_{i_0},\dots, X_{i_{m_i-1}}$ of the set $M_i$ for all $i\in \{0,\dots, m-1\}$.
Moreover, still for technical reasons, we assume that $\lambda\geq 5$.
Notice that this is not a restriction of the generality, since the hitting set problem is polynomial
for every fixed $\lambda$~\cite{sp/CyganFKLMPPS15}.

\begin{remark}\label{rem:size_lambda}
Obviously, the input of Example~\ref{ex:hs} does not satisfy $\lambda\geq 5$.
However, in order to be able to provide a complete example of the reduction despite the space restrictions, this input is deliberately chosen to be small.
\end{remark}

\textbf{The Reduction.}
In order to prove the hardness part of Theorem~\ref{the:main_result}, we start from input $(\mathfrak{U}, M, \lambda)$ and construct an input $(A,\varrho,\kappa)$ such that the elements of $\mathfrak{U}$ (plus some others) occur as events in the TS $A$.
Moreover, by construction, the TS $A$ has an ESSA $\alpha$ such that the following implication is true:
If $R=(sup, con, pro)$ is a region such that $\vert \preset{R}\vert \leq \varrho$ and $\vert \postset{R}\vert \leq \kappa$ that solves $\alpha$, then the set $\mathfrak{S}=\{X\in \mathfrak{U}\mid pro(X)>0\}$ is a sought HS with at most $\lambda$ elements for $(\mathfrak{U}, M)$.
Consequently, if $(A,\varrho,\kappa)$ allows a positive decision, then there is an admissible set of regions $\mathcal{R}$ whose pre- and postsets are accordingly restricted.
In particular, there is a region $R\in \mathcal{R}$ that solves $\alpha$ and thus proves that $(\mathfrak{U}, M,\lambda)$ also allows a positive decision.
Conversely, we argue that if $(\mathfrak{U}, M, \lambda)$ has a fitting hitting set, then there is an admissible set $\mathcal{R}$ of $A$ such that $\vert \preset{R}\vert \leq \varrho$ and $\vert \postset{R}\vert \leq \kappa$ for all $R\in \mathcal{R}$.
Altogether, this approach proves that $(\mathfrak{U}, M, \lambda)$ is a yes-instance if and only if $(A,\varrho,\kappa)$ is a yes-instance.
Hence, if we were to have a polynomial algorithm to decide \ers, we would also have one for \hs, which is impossible.

\begin{figure}[ht!]
\vspace{2mm}
\begin{center}
\begin{tikzpicture}[new set = import nodes]
\coordinate (top0) at (0,1);
\coordinate (top1) at (3.4,1);
\coordinate (top2) at (6.8,1);
\foreach \i in {top0} {\fill[green!20, rounded corners] (\i) +(-0.4,-0.4) rectangle +(7,0.4);}
\foreach \i in {top0, top1,top2} {\fill[green!20, rounded corners] (\i) +(-0.4,-0.4) rectangle +(0.6,1.6);}
\foreach \i in {top0} {\fill[green!20, rounded corners] (\i) +(-1.4,-11.25) rectangle +(2.5,0.4);}
\begin{scope}[yshift=2.2cm,nodes={set=import nodes}]

	\foreach \i in {0,...,1} {\coordinate (f0\i) at (\i*2.1cm,0);}
	\foreach \i in {0,...,1} {\node (f0\i) at (f0\i) {\nscale{$f_{0,\i}$}};}
	\graph {
	(import nodes);
			f00->[bend left=35, "\escale{$k$}"]f01;
			f00->[swap, bend right=30, "\escale{$k_0$}"]f01;
			f00<-[ "\escale{$z_0$}"]f01;
			
			};
\end{scope}
\begin{scope}[xshift=3.4cm, yshift=2.2cm, nodes={set=import nodes}]

	\foreach \i in {0,...,1} {\coordinate (f1\i) at (\i*2.1cm,0);}
	\foreach \i in {0,...,1} {\node (f1\i) at (f1\i) {\nscale{$f_{1,\i}$}};}
	\graph {
	(import nodes);
			f10->[bend left=35, "\escale{$k$}"]f11;
			f10->[swap, bend right=30, "\escale{$k_1$}"]f11;
			f10<-[ "\escale{$z_1$}"]f11;
			
			};
\end{scope}
\begin{scope}[xshift=6.8cm, yshift=2.2cm,nodes={set=import nodes}]

	\foreach \i in {0,...,1} {\coordinate (f2\i) at (\i*2.1cm,0);}
	\foreach \i in {0,...,1} {\node (f2\i) at (f2\i) {\nscale{$f_{2,\i}$}};}
	\graph {
	(import nodes);
			f20->[bend left=35, "\escale{$k$}"]f21;
			f20->[swap, bend right=30, "\escale{$k_2$}"]f21;
			f20<-[ "\escale{$z_2$}"]f21;
			
			};
\end{scope}
\node (top0) at (0,1) {\nscale{$\top_0$}};
\node (top1) at (3.4,1) {\nscale{$\top_1$}};
\node (top2) at (6.8,1) {\nscale{$\top_2$}};
\path (top0) edge [->] node[left] {\escale{$b_0$} } (f00);
\path (top1) edge [->] node[left] {\escale{$b_1$} } (f10);
\path (top2) edge [->] node[left] {\escale{$b_2$} } (f20);
%
\begin{scope}[nodes={set=import nodes}]
		
		\coordinate (bot0) at (0,0);
		\coordinate (bot1) at (0,-1.2);
		\coordinate (bot2) at (0,-2.4);
		\coordinate (bot3) at (0,-3.6);
		\coordinate (bot4) at (0,-4.8);
		\coordinate (bot5) at (0,-6);
		\foreach \i in {0,...,4} {\coordinate (a\i) at (\i*1.8cm+1.8cm,0);}
		\foreach \i in {0,...,4} {\coordinate (b\i) at (\i*1.8cm+1.8cm,-1.2);}
		\foreach \i in {0,...,4} {\coordinate (c\i) at (\i*1.8cm+1.8cm,-2.4);}
		\foreach \i in {0,...,4} {\coordinate (d\i) at (\i*1.8cm+1.8cm,-3.6);}
		\foreach \i in {0,...,4} {\coordinate (e\i) at (\i*1.8cm+1.8cm,-4.8);}
		\foreach \i in {0,...,4} {\coordinate (f\i) at (\i*1.8cm+1.8cm,-6);}
		\foreach \i in {a2,a4,b2,b4,c2,c3,d2,d4,e2,e3,f2,f3} {\fill[green!20, rounded corners] (\i) +(-0.4,-0.4) rectangle +(0.4,0.4);}
		\foreach \i in {a3,b3,d3} {\fill[blue!20, rounded corners] (\i) +(-0.4,-0.4) rectangle +(0.4,0.4);}
		\node (bot0) at (bot0) {\nscale{$\bot_0$}};
		\node (bot1) at (bot1) {\nscale{$\bot_1$}};
		\node (bot2) at (bot2) {\nscale{$\bot_2$}};
		\node (bot3) at (bot3) {\nscale{$\bot_3$}};
		\node (bot4) at (bot4) {\nscale{$\bot_4$}};
		\node (bot5) at (bot5) {\nscale{$\bot_5$}};
		
		\foreach \i in {0,...,4} {\node (a\i) at (a\i) {\nscale{$t_{0,\i}$}};}
		\foreach \i in {0,...,4} {\node (b\i) at (b\i) {\nscale{$t_{1,\i}$}};}
		\foreach \i in {0,...,4} {\node (c\i) at (c\i) {\nscale{$t_{2,\i}$}};}
		\foreach \i in {0,...,4} {\node (d\i) at (d\i) {\nscale{$t_{3,\i}$}};}
		\foreach \i in {0,...,4} {\node (e\i) at (e\i) {\nscale{$t_{4,\i}$}};}
		\foreach \i in {0,...,4} {\node (f\i) at (f\i) {\nscale{$t_{5,\i}$}};}
		
		\graph {
	(import nodes);
			bot0->["\escale{$y_0$}"]a0->["\escale{$k$}"]a1->["\escale{$X_0$}"]a2->["\escale{$X_1$}"]a3->["\escale{$k$}"]a4;
			bot1->["\escale{$y_1$}"]b0->["\escale{$k$}"]b1->["\escale{$X_0$}"]b2->["\escale{$X_2$}"]b3->["\escale{$k$}"]b4;
			bot2->["\escale{$y_2$}"]c0->["\escale{$k$}"]c1->["\escale{$X_0$}"]c2->["\escale{$X_3$}"]c3->["\escale{$k$}"]c4;
			bot3->["\escale{$y_3$}"]d0->["\escale{$k$}"]d1->["\escale{$X_1$}"]d2->["\escale{$X_2$}"]d3->["\escale{$k$}"]d4;
			bot4->["\escale{$y_4$}"]e0->["\escale{$k$}"]e1->["\escale{$X_1$}"]e2->["\escale{$X_3$}"]e3->["\escale{$k$}"]e4;
			bot5->["\escale{$y_5$}"]f0->["\escale{$k$}"]f1->["\escale{$X_2$}"]f2->["\escale{$X_3$}"]f3->["\escale{$k$}"]f4;
			};
\end{scope}
\path (bot0) edge [->] node[left] {\nscale{$a_0$} } (top0);
\path (top0) edge [->] node[above] {\nscale{$a_1$} } (top1);
\path (top1) edge [->] node[above] {\nscale{$a_2$} } (top2);

\path (bot0) edge [->] node[left] {\nscale{$w_1$} } (bot1);
\path (bot1) edge [->] node[left] {\nscale{$w_2$} } (bot2);
\path (bot2) edge [->] node[left] {\nscale{$w_3$} } (bot3);
\path (bot3) edge [->] node[left] {\nscale{$w_4$} } (bot4);
\path (bot4) edge [->] node[left] {\nscale{$w_5$} } (bot5);
\path (a1) edge [->] node[left] {\nscale{$u_1$} } (b1);
\path (b1) edge [->] node[left] {\nscale{$u_2$} } (c1);
\path (c1) edge [->] node[left] {\nscale{$u_3$} } (d1);
\path (d1) edge [->] node[left] {\nscale{$u_4$} } (e1);
\path (e1) edge [->] node[left] {\nscale{$u_5$} } (f1);
%
\foreach \i in {1,...,5} {\coordinate (tr\i) at (\i*2cm-2cm,-7cm);}
\foreach \i in {tr1} {\fill[green!20, rounded corners] (\i) +(-1.4,-3.25) rectangle +(8.7,0.4);}
\begin{scope}[yshift=-8cm,nodes={set=import nodes}]

	\foreach \i in {0,...,1} {\coordinate (g1\i) at (0,-\i*2cm);}
	\foreach \i in {0,...,1} {\node (g1\i) at (g1\i) {\nscale{$g_{1,\i}$}};}
	\graph {
	(import nodes);
			g10<-[bend left=20, "\escale{$u_1$}"]g11;
			g10->[swap, bend right=20, "\escale{$v_1$}"]g11;
			
			};
\end{scope}
\begin{scope}[xshift=2cm, yshift=-8cm,nodes={set=import nodes}]

	\foreach \i in {0,...,1} {\coordinate (g2\i) at (0,-\i*2cm);}
	\foreach \i in {0,...,1} {\node (g2\i) at (g2\i) {\nscale{$g_{2,\i}$}};}
	\graph {
	(import nodes);
			g20<-[bend left=20, "\escale{$u_2$}"]g21;
			g20->[swap, bend right=20, "\escale{$v_2$}"]g21;
			
			};
\end{scope}
\begin{scope}[xshift=4cm, yshift=-8cm, nodes={set=import nodes}]

	\foreach \i in {0,...,1} {\coordinate (g3\i) at (0,-\i*2cm);}
	\foreach \i in {0,...,1} {\node (g3\i) at (g3\i) {\nscale{$g_{3,\i}$}};}
	\graph {
	(import nodes);
			g30<-[bend left=20, "\escale{$u_3$}"]g31;
			g30->[swap, bend right=20, "\escale{$v_3$}"]g31;
			
			};
\end{scope}
\begin{scope}[xshift=6cm,yshift=-8cm,nodes={set=import nodes}]

	\foreach \i in {0,...,1} {\coordinate (g4\i) at (0,-\i*2cm);}
	\foreach \i in {0,...,1} {\node (g4\i) at (g4\i) {\nscale{$g_{4,\i}$}};}
	\graph {
	(import nodes);
			g40<-[bend left=20, "\escale{$u_4$}"]g41;
			g40->[swap, bend right=20, "\escale{$v_4$}"]g41;
			
			};
\end{scope}
\begin{scope}[xshift=8cm, yshift=-8cm,nodes={set=import nodes}]

	\foreach \i in {0,...,1} {\coordinate (g5\i) at (0,-\i*2cm);}
	\foreach \i in {0,...,1} {\node (g5\i) at (g5\i) {\nscale{$g_{5,\i}$}};}
	\graph {
	(import nodes);
			g50<-[bend left=20, "\escale{$u_5$}"]g51;
			g50->[swap, bend right=20, "\escale{$v_5$}"]g51;
			
			};
\end{scope}
\foreach \i in {1,...,5} {\node (tr\i) at (tr\i) {\nscale{$\triangle_\i$}};}

\path (tr1) edge [->] node[left] {\nscale{$d_1$} } (g10);
\path (tr2) edge [->] node[left] {\nscale{$d_2$} } (g20);
\path (tr3) edge [->] node[left] {\nscale{$d_3$} } (g30);
\path (tr4) edge [->] node[left] {\nscale{$d_4$} } (g40);
\path (tr5) edge [->] node[left] {\nscale{$d_5$} } (g50);
\coordinate (bot0_1) at (-0.75,0);
\coordinate (tr1_1) at (-0.75,-7);
\path (bot0) edge [-] node[left] {} (bot0_1);
\path (bot0_1) edge [-] node[left] {\nscale{$c_1$} } (tr1_1);
\path (tr1_1) edge [->] node[left] {} (tr1);
\path (tr1) edge [->] node[above] {\nscale{$c_2$} } (tr2);
\path (tr2) edge [->] node[above] {\nscale{$c_3$} } (tr3);
\path (tr3) edge [->] node[above] {\nscale{$c_4$} } (tr4);
\path (tr4) edge [->] node[above] {\nscale{$c_5$} } (tr5);
\end{tikzpicture}
\end{center}\vspace*{-2mm}
\caption{The TS $A$ with initial state $\bot_0$ that results from Example~\ref{ex:hs}.
Based on the $3$-HS $\{X_0,X_1,X_2\}$ for $(\mathfrak{U}, M)$, the colored area sketches
the region $R_1$ of Fact~\ref{fact:k} that solves $(k,t_{0,1})$:
the support of the states in the green colored area equals $1$, states in the blue colored area have support $2$,
and the other ones have support $0$.}\label{fig:reduction_example}
\end{figure}
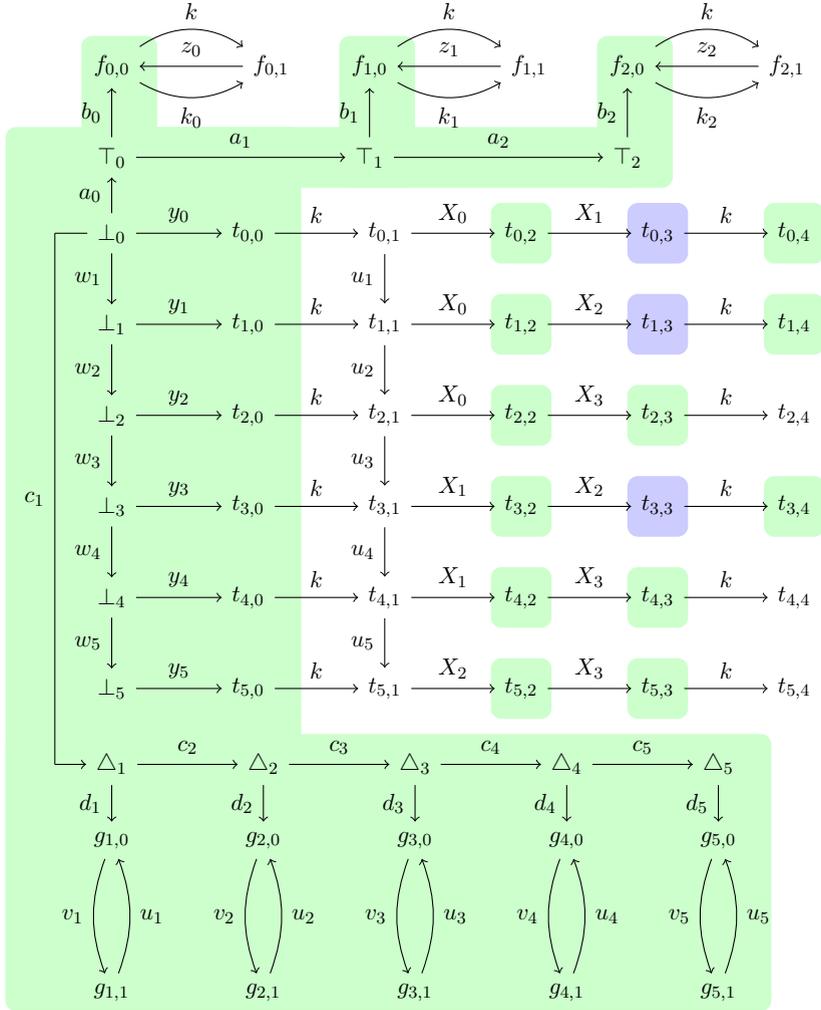

\medskip
First of all, we define $\varrho=2\cdot\lambda$ and $\kappa=\lambda+1$.
Notice that this implies $\varrho\geq 10$ and $\kappa\geq 6$, since we assume $\lambda\geq 5$.
However, that does not restrict the generality, since \ers{} is polynomial for any fixed integers.
Figure~\ref{fig:reduction_example} provides a complete example for the following construction, which is based on the input of Example~\ref{ex:hs}.
The TS $A$ has, for every $i\in \{0,\dots, m-1\}$, the following gadget $T_i$ that represents the set $M_i=\{X_{i_0}, \dots, X_{i_{m_i-1}}\}$ by using its elements as events (between two occurrences of the event $k$):\vspace*{-2mm}
\begin{center}
\begin{tikzpicture}[new set = import nodes]
\begin{scope}[nodes={set=import nodes}]
	
	\node (t) at (-0.75,0){$T_i=$};
	\foreach \i in {0,...,2} {\coordinate (\i) at (\i*2cm,0);}
	\foreach \i in {3} {\coordinate (\i) at (\i*2cm+6,0);}
	\foreach \i in {4} {\coordinate (\i) at (8.5,0);}
	\foreach \i in {0,1} {\node (\i) at (\i) {\nscale{$t_{i,\i}$}};}
	\node (2) at (2) {$\dots$};
	\node (3) at (3) {$t_{i,m_i+1}$};
	\node (4) at (4) {$t_{i,m_i+2}$};
	\graph {
	(import nodes);
			0->["\escale{$k$}"]1->["\escale{$X_{i_0}$}"]2->["\escale{$X_{i_{m_i-1}}$}"]3->["\escale{$k$}"]4;
			
			};
\end{scope}
\end{tikzpicture}
\end{center}
In particular, $T_0$ provides the ESSA $\alpha=(k,t_{0,1})$.
Additionally, the states $t_{i,1}$ and $t_{i+1,1}$ are connected by an $u_{i+1}$-labeled edge $t_{i,1}\edge{u_{i+1}}t_{i+1,1}$ for all $i\in \{0,\dots, m-2\}$.
Moreover, for every $i\in \{0,\dots, \lambda-1\}$, the TS $A$ has the gadget $F_i$ and, for every $j\in \{1,\dots, m-1\}$, it has the gadget $G_j$ that uses the event $u_j$ again:\vspace*{-2mm}
\begin{center}
\begin{tikzpicture}[new set = import nodes]
\begin{scope}[nodes={set=import nodes}]
	
	\node (t) at (-0.75,0){$F_i=$};
	\foreach \i in {0,...,1} {\coordinate (a\i) at (\i*2.5cm,0);}
	\foreach \i in {0,...,1} {\node (a\i) at (a\i) {\nscale{$f_{i,\i}$}};}
	\graph {
	(import nodes);
			a0->[bend left=40, "\escale{$k$}"]a1;
			a0->[swap, bend right=30, "\escale{$k_i$}"]a1;
			a0<-["\escale{$z_i$}"]a1;
			
			};
\end{scope}
\begin{scope}[xshift=6cm, nodes={set=import nodes}]
	
	\node (t) at (-1,0){$G_j=$};
	\foreach \i in {0,...,1} {\coordinate (a\i) at (\i*2.5cm,0);}
	\foreach \i in {0,...,1} {\node (a\i) at (a\i) {\nscale{$g_{j,\i}$}};}
	\graph {
	(import nodes);
			a0<-[bend left=20, "\escale{$u_j$}"]a1;
			a0->[swap, bend right=20, "\escale{$v_j$}"]a1;
			
			};
\end{scope}
\end{tikzpicture}
\end{center}

The functional part of $A$ is given by the introduced gadgets.
The initial state of $A$ is $\bot_0$.
In order to connect the gadgets, we use the following edges that introduce fresh events and states:
For every $i\in \{0,\dots, m-1\}$, the TS $A$  has the edges $\bot_{i}\edge{y_i}t_{i,0}$ and, if $i < m-1$, the edge $\bot_{i}\edge{w_{i+1}}\bot_{i+1}$;
the TS $A$ has the edge $\bot_0\edge{a_0}\top_0$ and, for every $i\in \{0,\dots, \lambda-1\} $, it has the edge $\top_i\edge{b_i}f_{i,0}$ and, if $i < \lambda-1$, it has the edge $\top_{i}\edge{a_{i+1}}\top_{i+1}$;
the TS $A$ has the edge $\bot_0\edge{c_1}\triangle_1$ and, for every $i\in \{1,\dots, m-1\}$, it has the edge $\triangle_i\edge{d_i}g_{i,0}$ and if $i < m-1$, then it has the edge $\triangle_i\edge{c_{i+1}}\triangle_{i+1}$.
By $S$ and $E$ we refer to the (set of) states and (set of) events of $A$, respectively.

\begin{lemma}\label{lem:essp_implies_vc}
If there is an admissible set of $(\varrho,\kappa)$-restricted regions for $A$, then there is a hitting set with at most $\lambda$ elements for $(\mathfrak{U}, M)$.
\end{lemma}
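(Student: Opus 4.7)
The plan is to pick a region $R=(sup, con, pro)$ in the given admissible set that solves the ESSA $\alpha=(k,t_{0,1})$ (such a region exists by admissibility) and then read off the candidate hitting set as $\mathfrak{S}:=\{X\in\mathfrak{U}\mid pro(X)>0\}$. The constraints imposed by the edges of $A$, together with the restrictions $|\preset{R}|\leq 2\lambda$ and $|\postset{R}|\leq\lambda+1$, should then force $\mathfrak{S}$ to meet every $M_i$ while containing at most $\lambda$ elements.

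I would first extract the ``sign'' of $k$: the firability of $k$ at $t_{0,0}$ together with $sup(t_{0,1})<con(k)$ and $sup(t_{0,1})=sup(t_{0,0})+pro(k)-con(k)$ yield $e:=pro(k)-con(k)<0$. Each gadget $F_i$ provides two parallel edges from $f_{i,0}$ to $f_{i,1}$ labelled $k$ and $k_i$, which force $pro(k_i)-con(k_i)=e<0$ and hence $k_i\in\postset{R}$. Together with $k$ itself this crowds $\lambda+1=\kappa$ distinct events into $\postset{R}$, saturating the postset bound and giving the exact equality $\postset{R}=\{k,k_0,\dots,k_{\lambda-1}\}$. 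Consequently $con(u_j)=con(v_j)=0$ and $con(X)=0$ for every index $j$ and every $X\in\mathfrak{U}$; the two-edge cycle inside each $G_j$ then forces $pro(u_j)+pro(v_j)=con(u_j)+con(v_j)=0$, so every $u_j$ has zero effect.

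With the $u_j$'s neutralised I would propagate $sup(t_{0,1})<con(k)$ along the chain $t_{0,1}\edge{u_1}t_{1,1}\edge{u_2}\dots\edge{u_{m-1}}t_{m-1,1}$ to conclude $sup(t_{i,1})=sup(t_{0,1})<con(k)$ for every $i\in\{0,\dots,m-1\}$. Since $k$ is firable at $t_{i,m_i+1}$ inside $T_i$ and $con(X_{i_j})=0$, traversing $T_i$ gives $sup(t_{i,1})+\sum_{j=0}^{m_i-1}pro(X_{i_j})=sup(t_{i,m_i+1})\geq con(k)$, so the sum is strictly positive and some $X_{i_j}\in M_i$ lies in $\mathfrak{S}$. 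For the cardinality bound, the cycle $f_{i,0}\edge{k}f_{i,1}\edge{z_i}f_{i,0}$ in each $F_i$ yields $pro(z_i)-con(z_i)=-e>0$, so $pro(z_i)>0$ and hence $z_i\in\preset{R}$; the $\lambda$ distinct events $z_0,\dots,z_{\lambda-1}$ are disjoint from $\mathfrak{S}$ and both sit inside $\preset{R}$, so $|\mathfrak{S}|\leq|\preset{R}|-\lambda\leq 2\lambda-\lambda=\lambda$.

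The main obstacle will be the structural step that pins down $\postset{R}$ exactly from the $F_i$ gadgets and then propagates $con(\cdot)=0$ through the $G_j$ cycles to make all $u_j$'s idle; once this is in place the remaining arithmetic along $T_i$ and $F_i$ is essentially bookkeeping. A small point to verify is that the events $k,k_0,\dots,k_{\lambda-1}$, the $z_i$'s, and the elements of $\mathfrak{U}$ are pairwise distinct by construction, so that both cardinality arguments, in $\postset{R}$ for the hitting condition and in $\preset{R}$ for the size bound, really apply as claimed.
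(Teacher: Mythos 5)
Your proposal is correct and follows essentially the same route as the paper's proof: solve $(k,t_{0,1})$, use the parallel $k/k_i$ edges in the $F_i$ gadgets to saturate $\postset{R}$ with $\{k,k_0,\dots,k_{\lambda-1}\}$, use the $G_j$ cycles to neutralise the $u_j$'s, propagate the deficit along the $t_{i,1}$ chain to force a positive-$pro$ variable in each $T_i$, and count the $z_i$'s in $\preset{R}$ to bound $|\mathfrak{S}|$ by $\lambda$. The only (harmless) difference is that you derive $con(u_j)=pro(u_j)=0$ directly from the saturated postset and the cycle equation, whereas the paper argues by contradiction that $pro(u_j)\leq con(u_j)$ via $v_j\in\postset{R}$; both yield the needed monotonicity of $sup$ along the chain.
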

\begin{proof}
Let $\mathcal{R}$ be an admissible set of $A$ that satisfies $\vert \preset{R}\vert\leq\varrho$ and $\vert \postset{R}\vert\leq \kappa$ for all $R\in \mathcal{R}$.
Since $\mathcal{R}$ is admissible, there is an accordingly restricted region that solves the atom $\alpha=(k,t_{0,1})$.
Let $R=(sup, con, pro)$ be such a region, that is $con(k) > sup(t_{0,1})$, $\vert \preset{R}\vert \leq 2\cdot\lambda$
and $\vert \postset{R}\vert \leq \lambda +1$.
In the following, we argue that $\mathfrak{S}=\{X\in \mathfrak{U}\mid pro(X)> 0\}$ defines a sought hitting set for $(\mathfrak{U}, M)$.

\medskip
Since $k$ occurs at $t_{0,0}$ and $R$ solves $\alpha$, the following is true:\\
(1) $con(k) \leq sup(t_{0,0})$\\
(2) $sup(t_{0,1}) =sup(t_{0,0})-con(k)+pro(k)$\\
(3) $con(k) > sup(t_{0,1})$. \medskip\\
By combining (1) and (2), we obtain $sup(t_{0,1})\geq pro(k)$.
If we combine the latter with (3), then we get $con(k) > pro(k)$.
For all $i\in \{0,\dots, \lambda-1\}$, by $con(k) > pro(k)$ and $f_{i,0}\edge{k}f_{i,1}$,
we conclude $sup(f_{i,0}) > sup(f_{i,1})$, since $sup(f_{i,1})=sup(f_{i,0})-con(k)+pro(k)$.
This implies $con(k_i) > pro(k_i)$ as well as $con(z_i) < pro(z_i)$;
thus $k_i\in \postset{R}$ and $z_i\in \preset{R}$ for all $i\in \{0,\dots,\lambda-1\}$.
Since $k\in \postset{R}$, this implies already $\vert \postset{R}\vert =\lambda +1$.
In particular, no further event can be a member of $\postset{R}$.
Let $i\in \{1,\dots, m-1\}$ be arbitrary but fixed.
If $pro(u_i) > con(u_i)$, then we obtain $sup(g_{i,0})> sup(g_{i,1})$ and thus $con(v_i) > pro(v_i)$ by $g_{i,1}\edge{u_i}g_{i,0}$ and $g_{i,0}\edge{v_i}g_{i,1}$.
This would imply $v_i\in \postset{R}$ and $\vert \postset{R}\vert \geq \lambda+2$, a contradiction.
Hence, we have $pro(u_i) \leq con(u_i)$.
By $t_{i-1,1}\edge{u_i}t_{i,1}$, this implies $sup(t_{i-1,1})\geq sup(t_{i,1})$.
Since $i$ was arbitrary, we get $sup(t_{0,1})\geq sup(t_{1,1})\geq \dots\geq sup(t_{m-1,1})$.
By $con(k)>sup(t_{0,1})$, this implies $con(k)>sup(t_{i,1})$ for all $i\in \{0,\dots, m-1\}$.
Furthermore, since $t_{i,m_i+1}\edge{k}$, we have $con(k)\leq sup(t_{i,m_i+1})$ for all $i\in \{0,\dots, m-1\}$.
Consequently, there has to be at least one event $X\in \{X_{i_0}, \dots, X_{i_{m_i-1}}\}$ such that $pro(X)> 0$.
This implies $\mathfrak{S}\cap M_i\not=\emptyset$ for all $i\in \{0,\dots, m-1\}$, where $\mathfrak{S}=\{X\in \mathfrak{U}\mid pro(X)> 0\}$.
Moreover, since $z_i\in \preset{R}$ for all $i\in \{0,\dots,\lambda-1\}$ and $\mathfrak{S}\subseteq \preset{R}$ and $\vert \preset{R}\vert \leq 2\cdot\lambda$, we have that $\vert \mathfrak{S}\vert\leq \lambda$.
This proves the claim and thus the lemma.
\end{proof}

In order to show that all ESSA are solvable by $(\varrho,\kappa)$-restricted regions, provided there is a fitting hitting set for $(\mathfrak{U}, M)$, we treat the events of $A$ individually.
Recall that an event is solvable if all its corresponding ESSA are solvable (Definition~\ref{def:essp}).
Moreover, for a region $R=(sup, con, pro)$ of $A$, the set $\T_{c,p}^{R}$ contains the events $e\in E$ such that $(con(e), pro(e))=(c,p)$ (Remark~\ref{rem:implicitly}).

\begin{fact}\label{fact:k}
If there is a hitting set with at most $\lambda$ elements for $(\mathfrak{U}, M)$, then the event $k$ is solvable by $(\varrho,\kappa)$-restricted regions.
\end{fact}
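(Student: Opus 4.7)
Given the hitting set $\mS\subseteq\U$ with $|\mS|\leq\lambda$, the central object is the region $R_1=(sup_1,con_1,pro_1)$ of Figure~\ref{fig:reduction_example}: put $sup_1(\iota)=1$, $\T_{1,0}^{R_1}=\{k,k_0,\dots,k_{\lambda-1}\}$, $\T_{0,1}^{R_1}=\{z_0,\dots,z_{\lambda-1}\}\cup\mS$, and let every other event lie in $\T_{0,0}^{R_1}$. Propagating $sup_1$ along $A$ one checks that $sup_1$ equals $1$ on every $\bot_i$, $\top_i$, $\triangle_i$, $g_{i,0}$, $g_{i,1}$, $t_{i,0}$ and $f_{i,0}$, equals $0$ on $t_{i,1}$, $f_{i,1}$ and $t_{i,m_i+2}$, and on $t_{i,j}$ (for $2\leq j\leq m_i$) equals $|\{\ell\leq j-2:X_{i_\ell}\in\mS\}|$; in particular the hitting-set property yields $sup_1(t_{i,m_i+1})\geq 1=con_1(k)$, so $k$ is still firable wherever it occurs and $R_1$ is a legitimate region. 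Counting gives $|\preset{R_1}|=|\mS|+\lambda\leq 2\lambda=\varrho$ and $|\postset{R_1}|=\lambda+1=\kappa$, so $R_1$ is $(\varrho,\kappa)$-restricted.

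By $con_1(k)=1$, $R_1$ solves every ESSA $(k,s)$ with $sup_1(s)=0$, which already disposes of $(k,t_{i,1})$, $(k,f_{i,1})$ and $(k,t_{i,m_i+2})$ for every $i\in\{0,\dots,m-1\}$ and of every $(k,t_{i,j})$ whose prefix $X_{i_0},\dots,X_{i_{j-2}}$ is disjoint from $\mS$. For the leftover ESSAs, whose targets have $sup_1(s)\geq 1$ (namely $\bot_i$, $\top_i$, $\triangle_i$, $g_{i,0}$, $g_{i,1}$ and the ``already-hit'' $t_{i,j}$), I would construct variants $R_s$ of $R_1$ by starting with $sup(\iota)=0$ and activating a bounded number of the single-edge connector events $a_i$, $b_i$, $c_i$, $d_i$, $y_i$, $w_i$, $u_i$, $v_i$ so as to shift the $1$-profile off $s$, while still placing $\{k,k_0,\dots,k_{\lambda-1}\}$ in $\T_{1,0}^{R_s}$ and $\{z_0,\dots,z_{\lambda-1}\}\cup\mS$ in $\T_{0,1}^{R_s}$; the hitting-set property then again guarantees $sup_s(t_{i,m_i+1})\geq con_s(k)$.

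The main obstacle is the rigidity forced by the $F_i$ gadgets: as soon as $con(k)\geq 1$ (necessary for any ESSA involving $k$), the edges of each $F_i$ force $con(k_i)-pro(k_i)=con(k)-pro(k)\geq 1$ and $pro(z_i)-con(z_i)=con(k)-pro(k)\geq 1$, so every candidate region must contain $\{k_0,\dots,k_{\lambda-1}\}$ in its postset and $\{z_0,\dots,z_{\lambda-1}\}$ in its preset. This burns $\lambda$ of the $\lambda+1$ postset slots and $\lambda$ of the $2\lambda$ preset slots, leaving only $\lambda$ free preset slots and $0$ free postset slots for anything else. The inequality $|\mS|\leq\lambda$ is precisely what permits $\mS$ to be packed into the residual preset budget while simultaneously maintaining $sup(t_{i,m_i+1})\geq con(k)$ for every $i$; making every auxiliary region $R_s$ fit within the same tight budget is where the careful bookkeeping lies.
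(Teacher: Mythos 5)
Your region $R_1$ is exactly the paper's central region and your verification of it is essentially correct as far as it goes: it is a legitimate $(\varrho,\kappa)$-restricted region and it solves $(k,t_{i,1})$ and $(k,f_{i,1})$ for every $i$. However, your claim that $sup_1(t_{i,m_i+2})=0$ is wrong in general: by your own formula, $sup_1(t_{i,m_i+1})=\vert\mS\cap M_i\vert$ and hence $sup_1(t_{i,m_i+2})=\vert\mS\cap M_i\vert-1$, which is positive whenever the hitting set meets $M_i$ in two or more elements --- nothing in the \textsc{HS} instance prevents this. So $R_1$ does not dispose of the atoms $(k,t_{i,m_i+2})$. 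The paper handles these with a separate region ($R_2$) that starts from $sup(\bot_0)=2$ and, crucially, drops $\mS$ from the preset so that the support stays constant at $1$ across the $X$-labelled segment and falls to $0$ only after the second $k$.

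The more serious gap is in your plan for the leftover atoms, which rests on a false forcing claim. The $F_i$ gadgets force $k_i\in\postset{R}$ and $z_i\in\preset{R}$ only when $con(k)>pro(k)$ (this is what solving $(k,t_{0,1})$ imposes, via $sup(t_{0,1})\geq pro(k)$), \emph{not} whenever $con(k)\geq 1$: a region with $con(k)=pro(k)=1$ leaves $sup(f_{i,0})=sup(f_{i,1})$ and puts no constraint at all on $k_i$ and $z_i$. By insisting that every auxiliary region carry $\{k,k_0,\dots,k_{\lambda-1}\}$ in its postset and $\{z_0,\dots,z_{\lambda-1}\}\cup\mS$ in its preset, you fill the postset budget completely ($\lambda+1=\kappa$) and, when $\vert\mS\vert=\lambda$, the preset budget as well ($2\lambda=\varrho$); there is then no slot left to ``activate'' any connector event. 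But solving, say, $(k,\bot_0)$ with $con(k)=1$ and $pro(k)=0$ forces $sup(\bot_0)=0$ and therefore requires producing events on every path from $\bot_0$ to a state where $k$ occurs --- at least $y_0$ (the only way into $t_{0,0}$), one of the $w$'s, and one of the $a$'s or $b$'s --- which is impossible within your fixed profile. The paper escapes this by using regions with $\T_{1,1}=\{k\}$ (its $R_0$, $R_3$, $R_4$) for the atoms at $\bot_i$, $\top_i$, $\triangle_i$, $g_{i,j}$ and the interior states $t_{i,2},\dots,t_{i,m_i}$; these carry none of the $k_i/z_i/\mS$ baggage and need only a handful of connector events, which fits comfortably since $\varrho\geq 10$ and $\kappa\geq 6$. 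Without this second family of regions your argument does not go through.
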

\begin{proof}
Let $\mathfrak{S}$ be a hitting set with at most $\lambda$ elements for $(\mathfrak{U}, M)$.

The following region $R_0=(sup_0, con_0, pro_0)$ solves $(k,s)$ for all $s\in \bigcup_{i=1}^{m-1}S(G_i)$ and all $s\in \{\top_0,\dots, \top_{\lambda-1}\}$ and all $s\in \{\triangle_1,\dots, \triangle_{m-1}\}$:
$sup_0(\bot_0)=1$ and
$\T_{1,1}^{R_0}=\{k\}$ and 
$\T_{0,1}^{R_0}=\{b_0,\dots, b_{\lambda-1}\}$ and 
$\T_{1,0}^{R_0}=\{a_0,c_1\}$ and 
$\T_{0,0}^{R_0}=E\setminus (\T_{1,1}^{R_1} \cup \T_{0,1}^{R_1}\cup \T_{1,0}^{R_1})$. 
This region satisfies $\vert \preset{R_0}\vert =\lambda+1 \leq 2\cdot\lambda$ and $\vert \postset{R_0}\vert=3$.

The following region $R_1=(sup_1, con_1, pro_1)$ solves $\alpha=(k,t_{0,1})$ and, moreover, $(k,s)$ for all $s\in \{f_{i,1}\mid i\in \{0,\dots, \lambda-1\}\}$:
$sup_1(\bot_0)=1$;
$\T_{1,0}^{R_1}=\{k,k_0,\dots, k_{\lambda-1}\}$ and 
$\T_{0,1}^{R_1}= \mathfrak{S} \cup\{z_0,\dots, z_{\lambda-1}\}$ and 
$\T_{0,0}^{R_1}= E\setminus  (\T_{1,0}^{R_1}\cup \T_{0,1}^{R_1})$. 
This region satisfies $\vert \preset{R_1}\vert \leq 2\cdot\lambda$, since $\vert \mathfrak{S} \vert\leq\lambda$, and $\vert \postset{R_1}\vert=\lambda+1$.

The following region $R_2=(sup_2, con_2, pro_2)$ solves $\alpha=(k,s)$ for all $s\in \{t_{i,1}, t_{i,m_i+2}\mid i\in \{0,\dots, m-1\}\}$:
$sup_2(\bot_0)=2$ and
$\T_{1,0}^{R_2}=\{k,k_0,\dots, k_{\lambda-1}\}$ and
$\T_{0,1}^{R_2}=\{z_0,\dots, z_{\lambda-1}\}$ and
$\T_{0,0}^{R_2}=E\setminus (\T_{1,0}^{R_2}\cup \T_{0,1}^{R_2})$.
This region satisfies $\vert \preset{R_1}\vert =\lambda \leq 2\cdot\lambda$ and $\vert \postset{R_1}\vert=\lambda+1$.

The next region $R_3=(sup_3, con_3, pro_3)$ solves $(k,s)$ for $s=\bot_0$ and all $s\in \{t_{0,2},\dots, t_{0,m_0}\}$:
$sup_3(\bot_0)=0$ and
$\T_{1,1}^{R_3}=\{k\}$ and 
$\T_{0,1}^{R_3}=\{y_0, u_1, X_{0_{m_0-1}},a_0,c_1\}\}$ and 
$\T_{0,2}^{R_3}=\{w_1\}$ and 
$\T_{1,0}^{R_3}= \{X_{0_0},v_1\}$ and 
$\T_{0,0}^{R_3}=E\setminus (\T_{1,1}^{R_3}\cup \T_{0,1}^{R_3} \cup \T_{0,2}^{R_3}\cup \T_{1,0}^{R_3})$. 

Let $i\in \{1,\dots, m-1\}$ be arbitrary but fixed.
The following region $R_4=(sup_4, con_4, pro_4)$ solves $(k,s)$ for all $s\in \{\bot_i, t_{i,2},\dots, t_{i,m_i}\}$:
$sup_4(\bot_0)=2$ and 
$\T_{1,1}^{R_4}=\{k\}$  and 
$\T_{2,0}^{R_4}=\{w_i\}$ and 
$\T_{0,2}^{R_4}=\{w_{i+1}\}$ and  
$\T_{0,1}^{R_4}= \{y_i,v_i,u_{i+1}, X_{i_{m_i-1}}\}$ and 
$\T_{1,0}^{R_4}=  \{u_i,v_{i+1},X_{i_0}\}$ and 
$\T_{0,0}^{R_4}= E\setminus (\T_{1,1}^{R_4}\cup \T_{2,0}^{R_4}\cup \T_{0,2}^{R_4}\cup \T_{0,1}^{R_4}\cup \T_{1,0}^{R_4})$. 
By the arbitrariness of $i$, this proves the solvability of $k$.
\end{proof}

\begin{fact}\label{fact:u}
If $e\in \{u_1,\dots, u_{m-1}\}$, then $e$ is solvable by $(\varrho,\kappa)$-restricted regions.
\end{fact}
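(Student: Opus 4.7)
The event $u_j$ (for fixed $j \in \{1,\dots,m-1\}$) occurs at exactly two states of $A$, namely $t_{j-1,1}$ (on the chain $t_{i,1}\edge{u_{i+1}}t_{i+1,1}$) and $g_{j,1}$ (inside the gadget $G_j$). To prove $u_j$ is solvable by $(\varrho,\kappa)$-restricted regions, the plan is to exhibit a bounded family of regions that together solve every ESSA $(u_j,s)$ for $s\in S\setminus\{t_{j-1,1},g_{j,1}\}$, just as Fact~\ref{fact:k} did for the event $k$, and to check in each case that $\vert\preset{R}\vert\leq 2\lambda$ and $\vert\postset{R}\vert\leq \lambda+1$.

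The key observation is that the two-cycle $g_{j,1}\edge{u_j}g_{j,0}\edge{v_j}g_{j,1}$ in $G_j$ plays exactly the role that the $F_i$-gadgets played for $k$: by placing $u_j$ in $\T_{1,0}^R$ and $v_j$ in $\T_{0,1}^R$, I obtain a region with support $1$ at $g_{j,1}$ and $0$ at $g_{j,0}$, automatically giving $con(u_j)=1>0=sup(g_{j,0})$. I then arrange the remaining events so that the support is also forced to be $1$ at $t_{j-1,1}$ and $0$ at all states in the "far" gadgets (the $\top$-chain and the $F_i$'s, the other $G_i$'s, and all $T_i$ with $i\notin\{j-1,j\}$). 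Concretely, I can place the unique edges $\bot_{0}\edge{w_1}\dots\edge{w_{j-1}}\bot_{j-1}\edge{y_{j-1}}t_{j-1,0}\edge{k}t_{j-1,1}$ onto the region by adding a single producer along this path (e.g.\ $y_{j-1}\in\T_{0,1}^R$) and compensating on the continuations (e.g.\ the first $X$-event and the trailing $k$ in $\T_{1,0}^R$, the $w_{i}$ for $i\geq j$ in $\T_{1,0}^R\cup\T_{0,0}^R$, and $u_{j+1}$ consuming the token again so that the support vanishes on $T_j,\dots,T_{m-1}$). All events not touched by the construction go into $\T_{0,0}^R$. This one region already solves $(u_j,s)$ for every $s$ in a $F_i$, a $G_i$ with $i\neq j$, a $\top_i$, a $\triangle_i$, and for every $\bot_i$ with $i\neq j-1$, as well as for the states before $t_{j-1,1}$ in $T_{j-1}$.

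To cover the states that this first region leaves unseparated --- namely the states $t_{j-1,2},\dots,t_{j-1,m_{j-1}+2}$ inside $T_{j-1}$, the states $t_{j,2},\dots,t_{j,m_j+2}$ inside $T_j$, the state $\bot_{j-1}$, and a few neighbours on the $w$-chain --- I introduce at most a constant number of companion regions in the same spirit as $R_2,R_3,R_4$ of Fact~\ref{fact:k}. Each such region picks up an additional token via doubling the weight of an appropriate $w_i$ or $y_i$ (putting it into $\T_{0,2}^R$) and releases it via matching consumers (such as the corresponding $w_{i+1}\in\T_{2,0}^R$), while the critical event $u_j$ remains in $\T_{1,0}^R$ (or $\T_{c,0}^R$ for some small $c$) so that $sup(s)<con(u_j)$ at the targeted states. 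Since $u_j$ and $v_j$ come from the $G_j$-gadget that does not interfere with the $T_i$-arithmetic, these tunings remain coherent.

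For each region $R$ constructed this way, $\preset{R}$ and $\postset{R}$ contain only a \emph{constant} number of events independent of $j$, $n$ and $m$, so the bounds $\vert\preset{R}\vert\leq 2\lambda$ and $\vert\postset{R}\vert\leq \lambda+1$ hold trivially thanks to the assumption $\lambda\geq 5$. The main obstacle, as in Fact~\ref{fact:k}, is the bookkeeping: an event like $k$ or $w_i$ appears in many gadgets, so its $(con,pro)$ must be chosen to keep all supports non-negative simultaneously on the $w$-chain, the $c$-chain, and each $T_i$. Using the implicit presentation via $\T_{c,p}^R$ of Remark~\ref{rem:implicitly} keeps this verification routine: after fixing $sup(\bot_0)$ and the $\T_{c,p}^R$-assignment, one propagates along the spanning tree and checks at each state that the computed support is non-negative and coincides on every alternative path, which here follows from the acyclic-plus-simple-cycles structure of $A$.
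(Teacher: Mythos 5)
Your overall decomposition (one ``global'' region covering the far-away states, plus a few companion regions for the leftover states of $T_{j-1}$) matches the shape of the paper's proof, which uses three regions $R_5$, $R_6$, $R_7$. However, your central quantitative claim is false, and it fails exactly where the fact is non-trivial. Consider the atom $(u_j,t_{j-1,0})$ (which, incidentally, your first region does not solve, since with $y_{j-1}\in\T_{0,1}^R$ you get $sup(t_{j-1,0})=1=con(u_j)$, and which you also do not list among the leftovers). Any region $R$ solving it must satisfy $sup(t_{j-1,0})<con(u_j)\leq sup(t_{j-1,1})$, and since the only edge from $t_{j-1,0}$ to $t_{j-1,1}$ is labelled $k$, this forces $pro(k)>con(k)$, i.e.\ $k\in\preset{R}$. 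That propagates through every gadget $F_\ell$, $\ell\in\{0,\dots,\lambda-1\}$: the edge $f_{\ell,0}\edge{k}f_{\ell,1}$ gives $sup(f_{\ell,1})>sup(f_{\ell,0})$, whence $k_\ell\in\preset{R}$ and, via the return edge $f_{\ell,1}\edge{z_\ell}f_{\ell,0}$, also $z_\ell\in\postset{R}$. Together with $u_j\in\postset{R}$ (needed for $R$ to solve any ESSA of $u_j$ at all), this yields $\vert\postset{R}\vert\geq\lambda+1=\kappa$ and $\vert\preset{R}\vert\geq\lambda+1$. So your assertion that each region has presets and postsets of \emph{constant} size, and that the bounds therefore ``hold trivially'', cannot be repaired: for this atom the postset budget is attained with equality, there is no slack, and one must verify that nothing else (in particular no $v_\ell$ and no variable event) is forced into $\postset{R}$. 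This is precisely the paper's region $R_6$, whose postset is exactly $\{u_j,z_0,\dots,z_{\lambda-1}\}$; note that even in Fact~\ref{fact:k}, which you cite as a template, the regions $R_1$ and $R_2$ already have environments of size $\Theta(\lambda)$, contradicting your ``constant size'' reading of that proof.

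A secondary problem is internal inconsistency in your first region: you place $k$ into $\T_{1,0}^R$ while simultaneously demanding support $0$ on all $T_\ell$ with $\ell\neq j-1$ and on the $F_\ell$'s; since $k$ occurs at every $t_{\ell,0}$ and every $f_{\ell,0}$, the requirement $con(k)\leq sup(s)$ would be violated there. The paper's $R_5$ instead keeps $k\in\T_{0,0}^{R_5}$ and injects the token via $w_{j-1}$. These are not bookkeeping details one may wave away: the reduction is engineered so that solving $(u_j,t_{j-1,0})$ saturates $\kappa$, and a proof of this fact must exhibit the saturating region explicitly and check both bounds, which your proposal does not do.
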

\begin{proof}
Let $i\in \{1,\dots, m-1\}$ be arbitrary but fixed.
The following region $R_5=(sup_5, con_5, pro_5)$ solves $(u_i, s)$ for all states $s\in S\setminus (S(T_{i-1})\cup\{g_{i-1,0}\})$ with $s\edge{\neg u_i}$:
If $i=1$, then $sup_5(\bot_0)=1$, otherwise $sup(\bot_0)=0$; 
if $i=1$, then $\T_{1,0}^{R_5}= \{u_i, w_i,v_{i-1}\}\cup \{a_0,c_1\}$, else $\T_{1,0}^{R_5}= \{u_i, w_i,v_{i-1}\}$; and 
$\T_{0,1}^{R_5}=\{w_{i-1},u_{i-1},d_{i-1}, v_i\}$ and 
$\T_{0,0}^{R_5}= E\setminus (\T_{1,0}^{R_5}\cup \T_{0,1}^{R_5})$. 

Region $R_6=(sup_6, con_6, pro_6)$ solves $(u_i, s)$ for all $s\in \{\bot_{i-1},t_{i-1,0}\}$ and if $i\geq 2$, then for $s\in\{g_{i-1,0}\}$:
$sup_6(\bot_0)=0$ and 
$\T_{1,1}^{R_6}=\{u_i\}$ and 
$\T_{0,1}^{R_6}= \{d_i, k,k_0,\dots,k_{\lambda-1}\}$ and 
$\T_{1,0}^{R_6}= \{z_0,\dots,z_{\lambda-1}\} $ and 
$\T_{0,0}^{R_6} = E\setminus (\T_{1,1}^{R_6}\cup \T_{0,1}^{R_6}\cup \T_{1,0}^{R_6})$. 
Notice that $\vert \preset{R_6}\vert =\lambda+3\leq 2\cdot\lambda$, since $\lambda\geq 6$, and $\vert \postset{R_6}\vert =\lambda+1$.

Finally, the region $R_7=(sup_7, con_7, pro_7)$ solves $(u_i, s)$ for all $s\in \{t_{i-1,2}, \dots, t_{i-1,m_i+2}\}$:
$sup_7(\bot_0)=1$ and 
$\T_{1,1}^{R_7}= \{u_i\}$ and 
$\T_{1,0}^{R_7}= \{X_{i_0}\}$ and 
$\T_{0,0}^{R_7}=E\setminus (\T_{1,1}^{R_7}\cup \T_{1,0}^{R_7})$. 
This completes solving $u_i$ and, by the arbitrariness of $i$, this proves the solvability for all $e\in \{u_1,\dots, u_{m-1}\}$.
\end{proof}

\begin{fact}\label{fact:X}
If $e\in \{X_0,\dots, X_{n-1}\}$, then $e$ is solvable by $(\varrho,\kappa)$-restricted regions.
\end{fact}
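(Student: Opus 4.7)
\medskip
\noindent\emph{Proof plan.} Fix $j\in\{0,\dots,n-1\}$, let $I_j=\{i\in\{0,\dots,m-1\}\mid X_j\in M_i\}$ and, for every $i\in I_j$, let $p_i$ be the position of $X_j$ in $M_i$, so that $X_j$ occurs at the unique state $t_{i,p_i+1}$ of $T_i$ via the edge $t_{i,p_i+1}\edge{X_j}t_{i,p_i+2}$. The plan is to mimic the batching strategy of Facts~\ref{fact:k} and~\ref{fact:u}: partition the states $s$ with $s\edge{\neg X_j}$ into groups and, for each group, exhibit one $(\varrho,\kappa)$-restricted region that solves every ESSA $(X_j,s)$ with $s$ in the group. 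The groups I would use are (i) the ``spine'' states $\bot_i,\top_i,\triangle_i$ together with all states of the $F_i$ and $G_i$ gadgets; (ii) the endpoints $t_{i,0},t_{i,m_i+1},t_{i,m_i+2}$ of every $T_i$; and (iii) the intermediate states $t_{i,\ell}$ with $\ell\in\{1,\dots,m_i\}$ such that either $i\notin I_j$, or $i\in I_j$ but $\ell\neq p_i+1$.

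For group~(i), I would define a region $R^{(1)}$ with $sup(\bot_0)=0$, setting $\T_{0,1}^{R^{(1)}}=\{y_i\mid i\in I_j\}$ to raise the support to $1$ exactly at each $t_{i,0}$ with $i\in I_j$, and putting $k\in\T_{1,1}^{R^{(1)}}$, $X_j\in\T_{1,0}^{R^{(1)}}$, while every remaining event sits in $\T_{0,0}^{R^{(1)}}$. A direct induction along the spine then gives $sup(s)=0$ for every state in group~(i), whereas $con(X_j)=1>0$, so $R^{(1)}$ solves all corresponding ESSA; one easily checks $|\preset{R^{(1)}}|\leq|I_j|\leq m$ and $|\postset{R^{(1)}}|\leq 2$, and for the cases where $m$ might exceed $2\lambda$ I would split $I_j$ and use several copies of $R^{(1)}$ restricted to disjoint subsets of $I_j$, each one using at most $2\lambda$ distinct $y_i$'s. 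For group~(ii), the construction of Fact~\ref{fact:k} is reused with $X_j$ placed in $\T_{1,0}$ instead of $k$: a region with $sup(\bot_0)$ chosen so that $k$ is treated as a ``neutral'' event (i.e.\ $k\in\T_{1,1}$) and so that the $y_i$'s and $w_i$'s produce or consume appropriately gives $sup(t_{i,0})=sup(t_{i,m_i+1})=sup(t_{i,m_i+2})=0<con(X_j)$ while $sup(t_{i,p_i+1})\geq con(X_j)$.

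For group~(iii), I would construct, for every $i\in\{0,\dots,m-1\}$, a single ``local pulse'' region $R^{(3)}_i$ whose support is non-zero only on the short stretch of $T_i$ that ends just before $X_j$ (or is non-zero nowhere, when $i\notin I_j$). Concretely, for $i\in I_j$ I would use $sup(\bot_0)=0$ together with $y_i\in\T_{0,1}^{R^{(3)}_i}$ if $p_i=0$, or $X_{i_{p_i-1}}\in\T_{0,1}^{R^{(3)}_i}$ if $p_i\geq 1$, and $X_j\in\T_{1,0}^{R^{(3)}_i}$, placing all remaining events in $\T_{0,0}^{R^{(3)}_i}$; the support then rises to $1$ precisely at $t_{i,p_i+1}$ and falls back to $0$ afterwards, which solves every ESSA $(X_j,t_{i,\ell})$ with $\ell\neq p_i+1$ in~$T_i$ and, more generally, every $(X_j,t_{i',\ell'})$ with $i'\neq i$. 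Each such region satisfies $|\preset{R^{(3)}_i}|=1$ and $|\postset{R^{(3)}_i}|=1$, so the restrictions are trivially respected.

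The verification that each of these presentations is coherent (in the sense of Remark~\ref{rem:implicitly}) and that both cardinality bounds $|\preset{R}|\leq 2\lambda$ and $|\postset{R}|\leq\lambda+1$ are met is routine since $\lambda\geq 5$ leaves ample room. The main obstacle I anticipate is the bookkeeping for group~(i) when $I_j$ is large: one must ensure that the single region $R^{(1)}$ (or the finitely many copies obtained by splitting $I_j$) simultaneously keeps the support at zero on every spine state while also reaching zero again after each $y_i$-excursion, which requires that $k$, $X_j$, and the $X$-events outside $M_i$ combine neutrally along each $T_i$. By the arbitrariness of $j$, the solvability of every event in $\{X_0,\dots,X_{n-1}\}$ follows. \qed
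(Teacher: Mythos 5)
Your decomposition of the ESSA for $X_j$ into three groups superficially parallels the paper's case split, but the concrete regions you propose are not coherent in the sense of Remark~\ref{rem:implicitly}, and the failure is precisely at the point you flag as ``bookkeeping''. For group~(i): putting $k\in\T_{1,1}^{R^{(1)}}$ means $con(k)=1$, so the region must have support at least $1$ at \emph{every} state where $k$ occurs. But with $a_0,b_0,\dots\in\T_{0,0}$ your supports give $sup(f_{i,0})=0$ for every $F_i$ gadget, with $y_i\in\T_{0,0}$ for $i\notin I_j$ they give $sup(t_{i,0})=0$, and for $i\in I_j$ the token produced by $y_i$ is consumed by $X_j\in\T_{1,0}$ before the second $k$-edge $t_{i,m_i+1}\edge{k}t_{i,m_i+2}$, so $sup(t_{i,m_i+1})=0$ as well. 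In all three places $k$ occurs with $con(k)=1>sup$, so the triple is not a region. Demoting $k$ to $\T_{0,0}$ does not repair this: the connector edges $t_{i-1,1}\edge{u_i}t_{i,1}$ with $u_i\in\T_{0,0}$ force $sup(t_{i-1,1})=sup(t_{i,1})$, which is inconsistent with your supports being $1$ on $t_{i,1}$ for $i\in I_j$ and $0$ otherwise. Group~(iii) has the same disease in a different guise: producing on the shared event $X_{i_{p_i-1}}$ raises the support in \emph{every} gadget where that variable occurs, while leaving everything else at $(0,0)$ gives $sup=0$ at the states of the other gadgets $T_{i'}$, $i'\in I_j\setminus\{i\}$, where $X_j$ genuinely occurs and therefore needs $sup\geq con(X_j)=1$. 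Concretely, for Example~\ref{ex:hs} take $X_j=X_1$ and $i=0$ (so $p_0=1$, $X_0\in\T_{0,1}$, $X_1\in\T_{1,0}$): then $sup(t_{3,1})=0$ while $t_{3,1}\edge{X_1}t_{3,2}\in A$, so $R^{(3)}_0$ is not a region.

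The idea you are missing, and which drives the paper's proof, is that any region that must stay flat across many gadgets has to treat the target event $X_i$ as a pure \emph{test}, i.e.\ put it in $\T_{1,1}$ (as in the paper's $R_8$ and $R_{13}$), and that the local ``pulses'' must be built from the gadget-specific connector events ($w_j, y_j, u_j, v_j, d_j, a_0, c_1$) rather than from the shared events $k$ and $X_\cdot$; moreover every time some $u_j$ is given a nonzero effect it must be balanced against $v_j$ so that the two-state cycle in $G_j$ stays coherent (this is exactly what $R_8$, $R_{11}$ and $R_{12}$ do). Your sketch never engages with the $F_i$ gadgets, the $G_j$ gadgets, or the $u_j$-edges between consecutive $T$-gadgets, and these are exactly the constraints that make this Fact nontrivial. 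As written, the proposal does not yield valid regions and therefore does not establish the statement.
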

\begin{proof}
Let $i\in \{0,\dots, n-1\}$ be arbitrary but fixed.
Moreover, let $j,\ell\in \{0,\dots, m-1\}$ be arbitrary but fixed such that $X_i\not\in M_j$ and $X_i\in M_\ell$.

The next region $R_8=(sup_8, con_8, pro_8)$ solves $(X_i, s)$ for $s= \bot_j$ and all $s\in\{t_{j,0}, \dots, t_{j,m_j+2}\}$:
If $j=0$, then $sup_8(\bot_0)=0$, otherwise $sup(\bot_0)=1$; 
$\T_{1,1}^{R_8}=\{X_i\}$ and 
if $j > 0$, then $\T_{1,0}^{R_8}=\{w_j,u_j,v_{j+1}\}\cup \{a_0,c_1\}$, else $\T_{1,0}^{R_8}=\{w_j,u_j,v_{j+1}\}$;
$\T_{0,1}^{R_8}= \{v_j,w_{j+1},u_{j+1}, d_{j+1}\}$ and 
$\T_{0,0}^{R_8}= E\setminus (\T_{1,1}^{R_8}\cup \T_{0,1}^{R_8})$. 

The next region $R_9=(sup_9, con_9, pro_9)$ solves $(X_i, s)$ for all $s\in \{\bot_\ell,t_{\ell,0}\}$:
$sup_9(\bot_0)=0$; 
$\T_{1,1}^{R_9}= \{X_i\}$ and 
$\T_{1,0}^{R_9}= \{z_0,\dots, z_{\lambda-1}\}$ and 
$\T_{0,1}^{R_9}= \{k,k_0,\dots, k_{\lambda-1}\}$ and 
$\T_{0,1}^{R_9}=E\setminus (\T_{1,1}^{R_9}\cup \T_{1,0}^{R_9}\cup \T_{0,1}^{R_9})$. 
Notice that $\vert \postset{R_9}\vert =\lambda+1$ and $\vert \preset{R_9}\vert =\lambda+1\leq 2\cdot\lambda$.

Let $h\in \{0,\dots, m_\ell-1\}$ be the unique index such that $X_i=X_{\ell_h}$, that is, $X_i$ is the ``$h$-th element'' of $M_\ell$.
The following region $R_{10}=(sup_{10}, con_{10}, pro_{10})$ solves $(X_i, s)$ for all $s\in \{t_{\ell,h+1}, t_{\ell, m_\ell+2}\}$:
$sup_{10}(\bot_0)=1$ and 
$\T_{1,0}^{R_{10}}=\{X_i,a_0,c_1\}$ and 
$\T_{0,0}^{R_{10}}= E\setminus \T_{1,0}^{R_{10}}$. 

It remains to discuss the case $X_i\not=X_{\ell_0}$, that is $h\geq 1$, which requires to solve $(X_i,s)$ for all $s\in \{t_{\ell,1},\dots, t_{\ell,h}\}$.
So let $s\in \{t_{\ell,1},\dots, t_{\ell,h}\}$ be arbitrary but fixed.

We distinguish between $\ell=0$ and $\ell \geq 1$:
If $\ell=0$, then the region $R_{11}=(sup_{11}, con_{11}, pro_{11})$ solves $(X_i, s)$:
$sup_{11}(\bot_0)=0$ and 
$\T_{1,0}^{R_{11}}= \{X_i,v_1\}$ and 
$\T_{0,1}^{R_{11}}= \{w_1, u_1, d_1,X_{0_{h-1}}\}$ and 
$\T_{0,0}^{R_{11}}= E\setminus (\T_{1,0}^{R_{11}}\cup \T_{0,1}^{R_{11}})$. 

If $\ell\geq 1$, then region $R_{12}=(sup_{12}, con_{12}, pro_{12})$ solves $(X_i, s)$:
$sup_{12}(\bot_0)=1$ and 
$\T_{1,0}^{R_{12}}= \{X_i,w_\ell,u_\ell,v_{\ell+1},a_0,c_1\}$ and 
$\T_{0,1}^{R_{12}}= \{X_{i_{h-1}},w_{\ell+1}, u_{\ell+1}, v_\ell, d_{\ell+1}\}$ and  
$\T_{0,0}^{R_{12}}= E\setminus (\T_{1,0}^{R_{12}}\cup \T_{0,1}^{R_{12}})$. 
By the arbitrariness of $h$, this completes the solvability of $(X_i,s)$ for all $s\in S(T_\ell)$.

The following region $R_{13}=(sup_{13}, con_{13}, pro_{13})$ solves $(X_i, s)$ for all $s\in S\setminus (\bigcup_{j=0}^{m-1}(S(T_j)\cup\{\bot_j\})$:
$sup_{13}(\bot_0)=1$ and 
$\T_{1,1}^{R_{13}}=\{X_i\}$ and 
$\T_{1,0}^{R_{13}}=\{a_0,c_1\}$  and 
$\T_{0,0}^{R_{13}}=E\setminus (\T_{1,1}^{R_{13}}\cup \T_{1,0}^{R_{13}})$. 
Since $i$, $j$, $\ell$ were arbitrary, we have the claim.
\end{proof}


\begin{fact}\label{fact:essp_for_secondary_events}
If $e\in \{k_0,\dots, k_{\lambda-1}\}$ or $e\in \{z_0,\dots, z_{\lambda-1}\}$ or $e\in \{v_1,\dots, v_{m-1}\}$ or $e\in\{w_1,\dots, w_{m-1}\}$ or $e\in \{a_0,\dots, a_{\lambda-1}\}$ or $e\in\{b_0,\dots, b_{\lambda-1}\}$ or $e\in\{y_0,y_i,w_i,c_i,d_i\mid 1 \leq i\leq m-1\}$, then $e$ is solvable by $(\varrho,\kappa)$-restricted regions.
\end{fact}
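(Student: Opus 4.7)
The plan is to follow the same explicit-region approach used in the proofs of Facts~\ref{fact:k}, \ref{fact:u}, and \ref{fact:X}: for each class of secondary events listed in the statement, I will exhibit concrete $(\varrho,\kappa)$-restricted regions that solve every associated ESSA. The crucial observation, which makes this case conceptually easier than the preceding three, is that each secondary event occurs at only a very small number of states in $A$ (often a single one, and otherwise a small constant independent of $m$, $n$, and $\lambda$). Hence the separation problems are essentially \emph{local}, in contrast to $k$, $u_i$, $X_i$, whose occurrences thread through many gadgets.

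For a fixed event $e$ in the list, the construction of a solving region $R=(sup,con,pro)$ proceeds in three steps. First, I identify the (very small) set $F_e$ of states at which $e$ actually fires and pick $con(e)$, typically~$1$, together with a value $sup(\iota)=sup(\bot_0)$ that will permit $sup(s)\geq con(e)$ precisely on $F_e$. Second, I choose $con$ and $pro$ values on the few connector events ($a_i,b_i,c_i,d_i,w_i,y_i,u_i$ and, when relevant, $v_i,z_i,k_i$) that lie on the unique tree-paths from $\bot_0$ to the states of $F_e$ and to the states of the ESSA's I want to solve, so that the supports along those paths realize the desired profile. Third, I declare all remaining events to lie in $\T^R_{0,0}$ to freeze the rest of the TS, and verify coherence: namely, that the two routes from $\bot_0$ to any state of a $T_i$-gadget, one through the lower spine $\bot_0\edge{w_1}\cdots\bot_i\edge{y_i}t_{i,0}$ and one through any shorter alternative, agree on the resulting support.

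As a concrete example, to solve $(a_j,s)$ for any $j\in\{0,\dots,\lambda-1\}$, since $a_j$ occurs at a single state on the upper spine, a region with $sup(\bot_0)=0$, $\T^R_{1,1}=\{a_j\}$ and all other events placed in $\T^R_{0,0}$ already solves every ESSA for $a_j$; its preset and postset have size~$1$. Similarly the localized events $k_i,z_i,v_i,d_i$ admit regions whose only nontrivial classes are $\T^R_{1,1}=\{e\}$ and a handful of connectors, and the restrictions $|\preset{R}|\leq 2\lambda$ and $|\postset{R}|\leq \lambda+1$ are satisfied with wide margin because $\lambda\geq 5$. The $w_i$, $y_i$, $c_i$ cases are handled by the same template, using $\T^R_{1,0}$ to mark the unique firing edge and $\T^R_{0,1}$ on the subsequent events to return to support~$0$.

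The main obstacle I expect is bookkeeping in the few cases where a secondary event's firing states and the target ESSA states lie in different ``branches'' of $A$ (upper spine, lower spine, $F_i$-gadgets, $G_j$-gadgets, $T_i$-gadgets simultaneously). There, to keep $R$ coherent, one has to balance the net flow of $w_i$ against the combined net flow of the events internal to $T_{i-1}$ (namely $y_{i-1}$, $k$, $X_{(i-1)_0},\dots,X_{(i-1)_{m_{i-1}-1}}$, $k$) and $y_i$. Since these events appear in $\T^R_{0,0}$ in the regions above, this balance is automatic for our templates, and no further adjustment is required; whenever an adjustment is needed it can be absorbed into $\T^R_{0,1}$ or $\T^R_{1,0}$ on a single connector event without violating the size bounds. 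This mirrors the construction of $R_5,\dots,R_{13}$ and closes the event-wise case analysis, from which Fact~\ref{fact:essp_for_secondary_events} follows.
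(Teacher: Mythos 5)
Your overall strategy (exhibit explicit restricted regions event by event) is the right idea, and for the genuinely local events ($c_i$, $d_i$, $y_i$, $b_i$ and, with some care, the $a_i$) it can be carried out --- although your sample region for $a_j$ is not actually a region: with $sup(\bot_0)=0$ and every event other than $a_j$ in $\T_{0,0}$, the support is $0$ at the source of the $a_j$-labelled edge while $con(a_j)=1$, violating the region condition, and if you repair this by raising $sup(\bot_0)$ the support becomes constant everywhere and the region separates nothing. The real gap, however, is the premise that all separation problems for the listed events are local. For $e\in\{k_i,z_i,v_i\}$ the one hard atom pairs $e$ with the other state of its own two-state gadget, and that state is joined to the occurrence state of $e$ by an edge carrying a \emph{globally} occurring event. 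Concretely, to solve $(k_i,f_{i,1})$ you need $sup(f_{i,0})\geq con(k_i)>sup(f_{i,1})$; the parallel edge $f_{i,0}\edge{k}f_{i,1}$ then forces $con(k)>pro(k)$, hence $k\in\postset{R}$, hence $sup(f_{j,0})>sup(f_{j,1})$ and therefore $k_j\in\postset{R}$ and $z_j\in\preset{R}$ for \emph{every} $j\in\{0,\dots,\lambda-1\}$, so $\postset{R}$ is filled to its cap $\kappa=\lambda+1$ and the region must additionally be kept coherent in every $T_j$, where $k$ fires twice. Your template $\T^R_{1,1}=\{k_i\}$ plus frozen connectors cannot work here: $con(k_i)=pro(k_i)$ yields $sup(f_{i,1})=sup(f_{i,0})\geq con(k_i)$, so the atom $(k_i,f_{i,1})$ is never solved --- the same obstruction as in Example~\ref{ex:nothing}. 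Analogous global constraints arise for $(z_i,f_{i,0})$ (forcing $pro(k)>con(k)$, hence $k_j\in\preset{R}$ and $z_j\in\postset{R}$ for all $j$) and for $(v_i,g_{i,1})$ (forcing $pro(u_i)>con(u_i)$, which propagates into the $T$-gadgets through $t_{i-1,1}\edge{u_i}t_{i,1}$).

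This is exactly why the paper does not build fresh regions for these atoms but reuses the global regions already constructed: $R_1$ of Fact~\ref{fact:k} solves $(k_i,f_{i,1})$, and $R_6$ of Fact~\ref{fact:u} solves $(z_i,f_{i,0})$, while $R_6$ or $R_8$ solves $(v_i,g_{i,1})$; only the remaining, genuinely local atoms are dismissed as easy. To complete your proof you would have to either reuse those regions as the paper does, or construct from scratch a coherent region with $\postset{R}=\{k,k_0,\dots,k_{\lambda-1}\}$ (and its duals), which is a global balancing act --- for instance, one must arrange $sup(t_{j,0})\geq con(k)+(con(k)-pro(k))$ on the lower spine while keeping $sup(f_{i,0})<con(k_i)+(con(k)-pro(k))$ on the upper spine, without adding any further event to the already saturated $\postset{R}$. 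This is precisely the bookkeeping your last paragraph asserts is ``automatic''; it is not, and it is the substance of the proof.
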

\begin{proof}
Let $i\in \{0,\dots, \lambda-1\}$ be arbitrary but fixed.
The region $R_1$ of Fact~\ref{fact:k} solves $(k_i, f_{i,1})$ and the region $R_6$ of Fact~\ref{fact:u} solves $(z_i,f_{i,0})$.
It is easy to see that $(k_i,s)$ and $(z_i,s)$ are suitably solvable for the remaining $s\in S\setminus\{f_{i,0},f_{i,1}\}$.
Since $i$ was arbitrary, that proves the claim for all $e\in \{k_0,\dots, k_{\lambda-1}, z_0,\dots, z_{\lambda-1}\}$.

Let $i\in \{1,\dots, m-1\}$ be arbitrary but fixed.
The region $R_6$ or the region $R_8$ solves $(v_i,g_{i,1})$.
It is easy to see that $(v_i,s)$ is suitably solvable for all $s\in S\setminus\{g_{i,0}, g_{i,1}\}$.
By their uniqueness, it is easy to see that the remaining events are also solvable by suitably restricted regions.
The claim follows.
\end{proof}

Altogether, the just presented facts prove that all ESSA of $A$ are solvable by $(\varrho,\kappa)$-restricted regions, if there is a hitting set with at most $\lambda$ elements for $(\mathfrak{U},M)$.
Moreover, if $(s,s')$ is an SSA of $A$, then either $(s,s')$ is already solved by one of the presented regions or it is easy to see that a solving $(\varrho,\kappa)$-restricted region exists.
Hence, we obtain the following lemma, which completes the proof of Theorem~\ref{the:main_result}.

\begin{lemma}\label{lem:vc_implies_essp}
If there is a hitting set with at most $\lambda$ elements for $(\mathfrak{U},M)$, then $A$ has an admissible set $\mathcal{R}$ of $(\varrho,\kappa)$-restricted regions.
\end{lemma}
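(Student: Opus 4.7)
The plan is to assemble the admissible set $\mathcal{R}$ by pooling together all the $(\varrho,\kappa)$-restricted regions already constructed in Facts~\ref{fact:k}, \ref{fact:u}, \ref{fact:X}, and~\ref{fact:essp_for_secondary_events}, and then to supplement this collection with a few small ad hoc regions so that every state separation atom is covered as well.

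Starting from a hitting set $\mathfrak{S}$ with $\vert \mathfrak{S}\vert \leq \lambda$ for $(\mathfrak{U},M)$, I would first iterate over the events $e\in E$ of $A$ and, for every state $s$ with $s\nedge{e}$, invoke the appropriate fact to obtain a $(\varrho,\kappa)$-restricted region $R_{(e,s)}$ solving the ESSA $(e,s)$: Fact~\ref{fact:k} handles the critical event $k$ (this is the only place where the hitting set $\mathfrak{S}$ is actually used, through the region $R_1$); Fact~\ref{fact:u} handles the connecting events $u_1,\dots, u_{m-1}$; Fact~\ref{fact:X} handles all variable events $X_0,\dots, X_{n-1}$; and Fact~\ref{fact:essp_for_secondary_events} handles all remaining auxiliary events. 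Collecting these regions into a set $\mathcal{R}_0$ immediately yields a witness of the ESSP by $(\varrho,\kappa)$-restricted regions.

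Next I would verify that $\mathcal{R}_0$ already witnesses most SSA, and complete it where necessary. For an SSA $(s,s')$ I would first check whether some $R\in \mathcal{R}_0$ separates $s$ and $s'$; the supports computed in the facts vary substantially across the gadgets $T_i$, $F_i$, $G_i$ and along the $\bot$-, $\top$- and $\triangle$-chains, so the majority of pairs are already handled. For any remaining pair I would construct a tailor-made region: in practice, for states lying in the same gadget, a region counting occurrences of the single distinguishing event along the directed path from $\iota$ to each of the two states suffices; such regions have very small pre- and postsets and, since $\varrho=2\lambda\geq 10$ and $\kappa=\lambda+1\geq 6$, they are automatically $(\varrho,\kappa)$-restricted. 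Adding all these regions to $\mathcal{R}_0$ yields the announced set $\mathcal{R}$, which by Definition~\ref{def:admissible_set} is admissible.

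The only conceptual obstacle is the bookkeeping for the SSP: one must be confident that the regions produced for the ESSP, together with a handful of trivially-restricted extra regions, really do cover all $\vert S\vert\cdot(\vert S\vert-1)/2$ unordered state pairs. This is essentially a case analysis across the different gadgets and the ``skeleton'' connecting them; the generous safety margin in the bounds $\varrho,\kappa$ makes each additional small separating region essentially free, so the obstacle is notational rather than mathematical. Combined with Lemma~\ref{lem:essp_implies_vc}, the reduction then establishes the NP-hardness part of Theorem~\ref{the:main_result}.
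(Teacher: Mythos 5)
Your proposal matches the paper's own argument: the paper likewise obtains Lemma~\ref{lem:vc_implies_essp} by combining the $(\varrho,\kappa)$-restricted regions of Facts~\ref{fact:k}--\ref{fact:essp_for_secondary_events} to witness the ESSP and then observing that every remaining SSA is either already separated by one of those regions or easily handled by a small additional restricted region. Your treatment of the SSP part is, if anything, slightly more explicit than the paper's ``it is easy to see,'' but the route is the same.
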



\subsection{A lower bound for the parameterized complexity of \textsc{ERS}}\label{sec:para}%

By Theorem~\ref{the:main_result}, the problem \textsc{ERS} is NP-complete.
Hence, from the point of view of the classical complexity theory, where we assume that P is different from NP,
the problem is considered intractable, i.e., the worst-case time-complexity of any deterministic decision algorithm is
above polynomial (as it is up to now).
However, measuring the complexity of the problem purely in terms of the size of the input may let it appear harder than it actually is.

In the \emph{parameterized complexity} theory, we deal with \emph{parameterized problems},
where every input $(x,k)$ has a distinguished part $k$ (a natural number), called the parameter,
and measure the complexity not only in terms of the input size $n$, but also in terms of the parameter $k$.

For example, the natural parameter of \ers\ is $k=\varrho+\kappa$.
From the results of Section~\ref{sec:in_np}, there is an algorithm that solves \ers\ in time $\mathcal{O}(n^{k+c})$,
where $c$ is a constant.
In terms of parameterized complexity, this means that \ers\ parameterized by $k$ belongs to the complexity class XP (for \emph{slice-wise polynomial}).
However, such algorithms are not considered as feasible, since $n^{k+c}$ can be huge even for small $k$ and moderate $n$.
Hence, we are rather interested in algorithms where  $k$ does not appear in the exponent of $n$:
We say a parameterized problem is \emph{fixed parameter tractable} if it has an algorithm with running time $\mathcal{O}(f(k)n^c)$, where $f$ is a computable function that depends only on $k$, and $c$ is a constant.
Such algorithms are manageable even for large values of $n$, provided that $f(k)$ is relatively small and $c$ is a small constant.

In order to obtain a successful parameterization, we need to have some reason to believe that the parameter is typically small, such that $f(k)$ can be expected to remain relatively small, too.
In the absence of a benchmark that is specifically created for Petri net synthesis, we have analyzed the benchmark of the Model Checking Contest (MCC)~\cite{tacas/AmparoreBCDGHHJ19}, which contains both academic and industrial Petri nets.
Their corresponding TS (reachability graphs) have usually (way) more than $10^7$ states, which provides a lower bound for the length $n\geq \vert S\vert +\vert E\vert$ of an input TS $A=(S,E,\delta,\iota)$.
We analyzed 878 Petri nets in total.
For 395 of them (around 45\%), we found that $\varrho+\kappa\leq 21$, and for 308 of them (around 35\%) we even found $\varrho+\kappa\leq 11$ (for example, \texttt{AutoFlight-PT-02a} and \texttt{CircadianClock-PT-100000} and \texttt{FMS-PT-50000}).
In other words: depending on $f$, a synthesis algorithm with running time $f(k)n^c$ could possibly be useful for a third up to almost half of these nets.
From this point of view, the parameterization of \ers\ by $\varrho+\kappa$ and the search for a fixed-parameter algorithm appear to be sensible.
Unfortunately, we can provide strong evidence that such an algorithm does not exists.
This is of practical relevance, since it prevents an algorithm designer to waste countless hours with the attempt to find a solution that most likely cannot be found.

From the classical complexity theory  viewpoint, a problem is considered as intractable if it is NP-hard.
Analogously, in parameterized complexity, a problem is assumed not fixed-parameter-tractable
if it is $W[i]$-hard for some $i\geq 1$.
We omit the formal definition of the complexity class $W[i]$ and rather refer to~\cite{sp/CyganFKLMPPS15}.
In order to show that a parameterized problem $Q$ is $W[i]$-hard, we have to present a parameterized reduction from a known $W[i]$-hard problem $P$ to $Q$.
A \emph{parameterized reduction} is an algorithm that transforms an instance $(x,k)$ of $P$ into an instance $(x',k')$ of $Q$ such that
\begin{enumerate}
\item $(x,k)$ is in $P$ if and only if $(x',k')$ is in $Q$ and
\item $k'\leq g(k)$ for some computable function independent of $x$, and
\item its running time is $f(k)\vert x\vert^c$ for some computable function $f$ and constant $c$.
\end{enumerate}

The \textsc{Hitting Set} problem parameterized by $\lambda$ is known to be $W[2]$-hard (even $W[2]$-complete).
Moreover, the reduction presented in Section~\ref{sec:np_hard} is a parameterized one, since $\varrho+\kappa=3\lambda+1$.
This proves the following theorem, implying the fixed-parameter-intractability of \ers\ parameterized by $\varrho+\kappa$:
\begin{theorem}
\ers\ parameterized by $\varrho+\kappa$ is $W[2]$-hard.
\end{theorem}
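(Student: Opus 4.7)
The plan is to show that the polynomial-time reduction from \textsc{Hitting Set} to \ers{} presented in Section~\ref{sec:np_hard} is in fact a parameterized reduction with respect to the natural parameters $\lambda$ for \textsc{Hitting Set} and $\varrho+\kappa$ for \ers. Since \textsc{Hitting Set} parameterized by $\lambda$ is known to be $W[2]$-hard (indeed $W[2]$-complete), this immediately yields the $W[2]$-hardness of \ers{} parameterized by $\varrho+\kappa$.

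First I would recall the three conditions required of a parameterized reduction from $(P,k)$ to $(Q,k')$: (i) $(x,k)$ is a yes-instance of $P$ iff $(x',k')$ is a yes-instance of $Q$; (ii) $k' \leq g(k)$ for some computable function $g$ depending only on $k$; and (iii) the running time is bounded by $f(k)\cdot|x|^c$ for some computable $f$ and constant $c$. Then I would check each condition for the reduction of Section~\ref{sec:np_hard}, which maps $(\mathfrak{U},M,\lambda)$ to $(A,\varrho,\kappa)$ with $\varrho=2\lambda$ and $\kappa=\lambda+1$. Correctness is exactly the content of Lemmas~\ref{lem:essp_implies_vc} and~\ref{lem:vc_implies_essp}; the running time is polynomial in $|\mathfrak{U}|+|M|$, so in particular of the form $f(\lambda)\cdot|x|^c$ with $f\equiv 1$.

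The only genuinely new step, beyond what was already established for Theorem~\ref{the:main_result}, is to verify the parameter bound. This is immediate from the construction: $\varrho+\kappa=2\lambda+\lambda+1=3\lambda+1$, so the linear function $g(\lambda)=3\lambda+1$ works. Consequently the reduction of Section~\ref{sec:np_hard} is a parameterized reduction from \textsc{Hitting Set} (parameter $\lambda$) to \ers{} (parameter $\varrho+\kappa$), and the claim follows by the standard fact that parameterized reductions preserve $W[i]$-hardness. No real obstacle arises here; the bulk of the work was already done when proving the correctness and polynomial-time properties of the reduction, and the parameter bound is a trivial arithmetic observation inherent to the choice of $\varrho$ and $\kappa$.
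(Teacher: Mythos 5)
Your proposal is correct and matches the paper's argument exactly: the paper likewise observes that the reduction of Section~\ref{sec:np_hard} is a parameterized reduction because $\varrho+\kappa=3\lambda+1$, and concludes from the $W[2]$-hardness (indeed completeness) of \textsc{Hitting Set} parameterized by $\lambda$. Nothing is missing; the correctness and running-time conditions are indeed already covered by Lemmas~\ref{lem:essp_implies_vc} and~\ref{lem:vc_implies_essp} and the polynomial-time construction.
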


\section{The pure case}\label{sec:new_content}%

In this section, we investigate the computational complexity of the following variant of our original problem:\smallskip

\noindent
\fbox{\begin{minipage}[t][1.7\height][c]{0.97\textwidth}
\begin{decisionproblem}
  \problemtitle{\textsc{Pure Environment Restricted Synthesis}}
  \probleminput{A TS $A=(S,E,\delta, \iota)$ and two natural numbers $\varrho$ and $\kappa$.}
  \problemquestion{Does  there exist an admissible set $\mathcal{R}$ of pure regions of $A$ such that every region $R\in \mathcal{R}$ satisfies $\vert \preset{R}\vert\leq \varrho$ and $\vert \postset{R}\vert \leq \kappa$?}
\end{decisionproblem}
\end{minipage}}
\vspace{0.5cm}

\noindent
In order to express that a region is pure, we could use the constraints
\begin{align}
\label{eq:pure}
\forall e\in E: con(e)\cdot pro(e)=0
\end{align}

But these constraints are quadratic, and not linear.
Following~\cite{txtcs/BadouelBD15}, it is possible to keep constraints linear by using another kind of regions,
composed of only two components $R=(sup,\eff)$ where $sup:S\rightarrow \mathbb{N}$ is the usual support
and $\eff:E\rightarrow \mathbb{Z}$ is the \emph{\underline{eff}ect} $pro(e)-con(e)$ of executing event $e$.
For such a region $R$, $\preset{R}=\{e|\eff(e)>0\}$ and $\postset{R}=\{e|\eff(e)<0\}$,
enforcing the pureness of the corresponding place in a Petri net.
The systems characterizing the regions of a pure synthesis of some TS $A$ may then be adapted and
remain composed of polynomially many  linear constraints.
However, we shall not need this new kind of regions here.

\begin{lemma}\label{lem:PERS-NP}
Pure Environment Restricted Synthesis is in NP.
\end{lemma}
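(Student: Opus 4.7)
The plan is to adapt, almost verbatim, the nondeterministic polynomial-time algorithm developed in Section~\ref{sec:in_np} for \ers, adding one cheap trick to enforce pureness without leaving the linear (rational) world. Fix a separation atom $\alpha$ of the input TS $A$. On top of guessing the indices $i_1,\dots,i_\varrho$ and $j_1,\dots,j_\kappa$ that delimit the would-be preset and postset, the Turing machine now \emph{additionally} guesses that these two index sets are disjoint; equivalently, it simply guesses two disjoint subsets $P,Q\subseteq E$ with $|P|\leq\varrho$ and $|Q|\leq\kappa$ that are intended to contain $\preset{R}$ and $\postset{R}$ of a $(\varrho,\kappa)$-restricted pure region $R$ solving $\alpha$.

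The machine then builds the same linear system $L'_\alpha$ as before, i.e., the base system composed of constraints~(\ref{eq:fundamental})--(\ref{eq:sig}) together with the separation constraint (either (\ref{eq:essp}) or (\ref{eq:ssp})), augmented with the purity-respecting restriction equations $pro(e)=0$ for every $e\notin P$ and $con(e)=0$ for every $e\notin Q$ (as in (\ref{eq:con})--(\ref{eq:pro}) but using the guessed disjoint $P$ and $Q$). The key observation is that disjointness of $P$ and $Q$ already delivers pureness for free, linearly: for every event $e\in E$ at most one of $\{pro(e),con(e)\}$ can be left unconstrained, so any solution automatically satisfies $con(e)\cdot pro(e)=0$. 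Hence no quadratic constraint (\ref{eq:pure}) is needed and we do not have to switch to the alternative $(sup,\eff)$ representation.

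Correctness of the guess-and-check follows from a standard back-and-forth. If the guess is accurate, i.e., $P$ and $Q$ are chosen so that a pure $(\varrho,\kappa)$-restricted region $R=(sup,con,pro)$ solving $\alpha$ satisfies $\preset{R}\subseteq P$ and $\postset{R}\subseteq Q$ (and $P\cap Q=\emptyset$, which holds by pureness of $R$), then the vector encoding $R$ satisfies the augmented system. Conversely, every integer solution of the augmented system defines, via Remark~\ref{rem:implicitly} and the arguments reproduced in Section~\ref{sec:in_np}, a region of $A$ solving $\alpha$ whose preset and postset are contained in $P$ and $Q$ respectively, hence pure and $(\varrho,\kappa)$-restricted. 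Since the augmented system is still homogeneous with only polynomially many linear (in)equalities, its solvability is decided by Khachiyan's method in polynomial time over the rationals, after which any rational solution is rescaled to an integer one.

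Finally, the machine processes all $\mathcal{O}(|S|\cdot|E|+|S|^2)$ separation atoms of $A$ in parallel, producing for each a witness region; by Lemma~\ref{lem:admissible} the resulting admissible set of pure $(\varrho,\kappa)$-restricted regions yields a sought net $N=N^{\mathcal{R}}_A$, proving membership in NP. I do not expect any real obstacle here: the only subtle point is verifying that enforcing $con(e)=0$ or $pro(e)=0$ for events outside the guessed sets does not exclude any feasible pure region, but this is immediate because $P$ and $Q$ can always be taken as the true preset and postset of a candidate pure solution, which are automatically disjoint.
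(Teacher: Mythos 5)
Your proposal is correct and follows essentially the same route as the paper's proof: guess two disjoint sets bounding the preset and postset, impose $pro(e)=0$ and $con(e)=0$ outside them so that disjointness yields pureness linearly (avoiding the quadratic constraint), and solve the still-homogeneous system over the rationals by Khachiyan's method before rescaling. No gaps.
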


\begin{proof}
For each separation atom $\alpha$ (there are polynomially many of them), a Turing machine $T$ can guess two subsets
$E_p$ and $E_c$ of $E$ with $|E_p|\leq\varrho$, $|E_c|\leq\kappa$ and $E_c\cap E_p=\emptyset$,
and we may search for a region $R$ corresponding to the system $L_\alpha$ (as introduced in Section~\ref{sec:in_np}) extended with the constraints (on \textbf{x}) corresponding to:
\[\forall e\not\in E_p: pro(e)=0, \text{ and } \\
\forall e\not\in E_c: con(e)=0\]

This system is composed of polynomially many homogeneous linear constraints;
hence we may again search for a solution in the rational domain instead of the integer one, before renormalizing the solution into the integer domain, and this may be decided and computed polynomially.
\end{proof}

If $\varrho$ and $\kappa$ are fixed beforehand, there are polynomial algorithms to choose $E_c$ and $E_p$, so that we have

\begin{corollary}\label{cor:pure-fixed}
For any fixed natural numbers $\varrho$ and $\kappa$, there is a polynomial algorithm that decides whether for a given transition system $A$ there is a pure Petri net $N$ such that
(1) the reachability graph of $N$ is isomorphic to $A$, and
(2) every place $p$ of $N$ satisfies $\vert \preset{p}\vert \leq \varrho$, and $\vert \postset{p}\vert \leq \kappa$ and,
in the event of a positive decision, constructs a sought net $N$.
\end{corollary}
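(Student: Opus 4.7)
The plan is to derandomize the non-deterministic algorithm underlying Lemma~\ref{lem:PERS-NP}: since $\varrho$ and $\kappa$ are fixed, every non-deterministic guess can be replaced by exhaustive enumeration of only polynomially many cases.

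First, I would iterate over every separation atom $\alpha$ of $A$; by Definitions~\ref{def:ssp} and~\ref{def:essp}, there are at most $|S|\cdot(|S|-1)/2 + |S|\cdot|E|$ of them, a number polynomial in $|A|$. For each such $\alpha$, I would enumerate every ordered pair $(E_p,E_c)$ of disjoint subsets of $E$ with $|E_p|\leq\varrho$ and $|E_c|\leq\kappa$. The number of such pairs is bounded by $\binom{|E|}{\varrho}\cdot\binom{|E|}{\kappa}=\mathcal{O}(|E|^{\varrho+\kappa})$, which is polynomial in $|A|$ whenever $\varrho$ and $\kappa$ are fixed.

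Second, for each candidate pair $(E_p,E_c)$, I would build the linear system $L_\alpha$ from Section~\ref{sec:in_np}, augmented with the pureness/restriction constraints used in the proof of Lemma~\ref{lem:PERS-NP}, namely $pro(e)=0$ for all $e\notin E_p$ and $con(e)=0$ for all $e\notin E_c$. Because this extended system is homogeneous and of size polynomial in $|A|$, its rational solvability can be decided, and a rational solution (hence, after renormalization, an integer one) produced, in polynomial time by Khachiyan's method. Any such solution implicitly defines a pure $(\varrho,\kappa)$-restricted region $R$ that solves $\alpha$; conversely, if such an $R$ exists, then the enumeration considers the sets $E_p=\preset{R}$ and $E_c=\postset{R}$, and the corresponding system is solvable. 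Hence $\alpha$ admits a pure $(\varrho,\kappa)$-restricted region if and only if at least one of the enumerated systems is solvable.

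Finally, if every separation atom admits such a region, collecting one region per atom produces an admissible set $\mathcal{R}$ consisting of pure $(\varrho,\kappa)$-restricted regions. By Lemma~\ref{lem:admissible}, the net $N_A^{\mathcal{R}}$ from Definition~\ref{def:synthesized_net} is pure, $(\varrho,\kappa)$-restricted at every place, and satisfies $A\cong A_{N_A^{\mathcal{R}}}$. Otherwise, some atom is unsolvable under the constraints, and no pure $(\varrho,\kappa)$-restricted solution exists. The running time is the number of atoms times $\mathcal{O}(|E|^{\varrho+\kappa})$ times the polynomial cost of a single linear-programming call, giving an overall bound of $\mathcal{O}(|A|^{\varrho+\kappa+c})$ for a constant $c$; for fixed $\varrho,\kappa$ this is polynomial in $|A|$. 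There is no serious obstacle here beyond bookkeeping: the only subtlety is observing that the enumeration of $(E_p,E_c)$ is polynomial precisely because $\varrho$ and $\kappa$ are constants, exactly as in the analogous argument supporting Corollary~\ref{cor:fixed}.
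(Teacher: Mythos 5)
Your proposal is correct and follows essentially the same route as the paper: the paper's proof of Corollary~\ref{cor:pure-fixed} is exactly the observation that, for fixed $\varrho$ and $\kappa$, the non-deterministic guesses of $E_p$ and $E_c$ in the proof of Lemma~\ref{lem:PERS-NP} can be replaced by a polynomial enumeration, with each resulting homogeneous linear system solved in polynomial time. Your write-up just spells out the soundness/completeness of the enumeration and the final appeal to Lemma~\ref{lem:admissible}, which the paper leaves implicit.
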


It remains to prove the NP-hardness of the \textsc{Pure Environment Restricted Synthesis}
when $\varrho$ and $\kappa$ are not fixed beforehand, to get

\begin{theorem}\label{the:new_content}
{Pure Environment Restricted Synthesis} is NP-complete.
\end{theorem}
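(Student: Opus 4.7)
Since membership in NP is provided by Lemma~\ref{lem:PERS-NP}, it remains to establish NP-hardness. The plan is to give a polynomial-time reduction from \textsc{Cubic Monotone 1-in-3 3Sat}, where one asks whether a monotone 3-CNF formula in which every variable occurs in exactly three clauses admits an assignment satisfying exactly one literal per clause. This problem is particularly well-suited to the pure setting because of the following counting observation: if each of the $m$ clauses contains \emph{at least} one true variable, and the total number of true variables is \emph{at most} $m/3$, then, by cubicity, each clause must contain \emph{exactly} one true variable. This mimics the \textsc{Hitting Set} reduction used for impure \ers\ while also enforcing the ``at most one per clause'' restriction that purity seems to demand, a restriction that is not directly accessible via \textsc{Hitting Set} and is the reason the previous reduction cannot simply be recycled.

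Given an instance with variables $X_0,\dots,X_{n-1}$ and clauses $M_0,\dots,M_{m-1}$, the construction will largely follow Section~\ref{sec:np_hard}: for each clause $M_i=\{X_{i_0},X_{i_1},X_{i_2}\}$ introduce a path gadget $T_i$ in which the three literal events are separated by two occurrences of $k$, and connect the gadgets by auxiliary chains and the counting gadgets $F_i$ and $G_j$ that supply the fresh events $k_i,z_i,v_j,u_j$ needed to inflate the preset and postset of the critical region. Some of these auxiliary gadgets must be restructured so that every separation atom admits a pure solution: in particular, every event used with both $\mathit{con}>0$ and $\mathit{pro}>0$ in the impure construction (typically $k$, some $u_j$'s and some $X_j$'s, which appear in the $\T^R_{1,1}$ classes of $R_0,R_3,R_4,R_6,R_7,R_8,R_9,R_{13}$) must be replaced by a pair of ``split'' events, or an additional local gadget must be added to $A$ so that a pure witnessing region still exists. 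The parameters are set to $\varrho=2\lambda$ and $\kappa=\lambda+1$ with $\lambda=m/3$, and the distinguished ESSA $\alpha=(k,t_{0,1})$ plays the same role as before.

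For the hard direction, if $\mathcal{R}$ is an admissible set of pure $(\varrho,\kappa)$-restricted regions and $R=(sup,con,pro)$ solves $\alpha$, the analysis of Lemma~\ref{lem:essp_implies_vc} transfers almost verbatim: along each clause gadget, the requirement that $k$ be disabled at $t_{i,1}$ but enabled at $t_{i,m_i+1}$ forces $\sum_j(pro(X_{i_j})-con(X_{i_j}))>0$. Purity turns this into the clean statement that at least one $X_{i_j}$ satisfies $pro(X_{i_j})>0$, because no event can simultaneously consume and produce. Hence $\mathfrak{S}=\{X\in\mathfrak{U}\mid pro(X)>0\}$ hits every $M_i$, and the inequality $\vert\preset{R}\vert\le\varrho=2\lambda$, together with the $\lambda$ events $z_0,\dots,z_{\lambda-1}$ that are forced into $\preset{R}$, yields $\vert\mathfrak{S}\vert\le\lambda=m/3$; the counting observation then promotes $\mathfrak{S}$ to a 1-in-3 assignment. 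Conversely, from a 1-in-3 assignment $\sigma$ the central region solving $\alpha$ is pure by definition, with $con(k)=1$, $con(k_i)=1$, $pro(z_i)=1$, $pro(X_j)=1$ exactly when $\sigma(X_j)=\mathrm{true}$, and all other values zero. The main technical obstacle, and the part that requires genuine effort, is exhibiting pure $(\varrho,\kappa)$-restricted regions for all remaining SSA and ESSA, in particular those that were previously solved by the impure regions $R_0,R_3,R_4,R_6,R_7,R_8,R_9,R_{13}$. This is what forces the structural modifications to $A$ mentioned above and a case-by-case verification parallel to Facts~\ref{fact:k}--\ref{fact:essp_for_secondary_events}; once this is done, the two implications together yield the desired reduction and prove Theorem~\ref{the:new_content}.
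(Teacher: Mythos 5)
Your overall strategy is the right one -- NP-membership via Lemma~\ref{lem:PERS-NP}, hardness via \textsc{Cubic Monotone 1 in 3 3Sat}, and the counting observation that ``hits every clause'' plus ``at most $m/3$ true variables'' plus cubicity yields ``exactly one per clause'' is precisely the paper's Lemma~\ref{lem:model_size}. But there is a genuine gap: you never actually produce the reduction. You propose to recycle the \textsc{Hitting Set} construction of Section~\ref{sec:np_hard} and concede that every witness region with a nonempty $\T^R_{1,1}$ class (which is most of them: $R_0,R_3,R_4,R_6,R_7,R_8,R_9,R_{13}$, covering the solvability of $k$, the $u_i$'s, the $X_i$'s and the connector events) must be repaired by ``split events'' or ``additional local gadgets'' that you do not specify, and you explicitly defer the entire completeness direction to an unexecuted ``case-by-case verification.'' That verification is not routine bookkeeping -- it is the bulk of the hardness proof. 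Worse, any structural modification of $A$ (new events, new states, new gadgets) can break the forcing argument of the soundness direction: new events could supply alternative pure $(\varrho,\kappa)$-restricted regions solving the critical atom that do not encode a one-in-three model, or could fail to restore solvability of the atoms they were meant to fix. Until the modified TS is written down and both directions are re-proved for it, the claim is not established. (A minor additional point: purity is not what makes ``some $X_{i_j}$ has $pro>0$'' clean -- that already follows from $\sum_j(pro(X_{i_j})-con(X_{i_j}))>0$ and nonnegativity in the impure case; purity mainly \emph{constrains} the witness regions you still owe.)

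For comparison, the paper does not patch the old construction at all; it builds a new, simpler TS tailored to the pure case. A two-state gadget $H$ with $h_0\edge{k}h_1$ and $\frac{2m}{3}$ parallel back-edges $u_0,\dots,u_{\frac{2m}{3}-1}$ makes the critical atom $(k,h_1)$: purity forces $pro(k)=0$, hence $sup(h_0)>sup(h_1)$, hence all $\frac{2m}{3}$ events $u_i$ into $\preset{R}$. With $\varrho=m$ and $\kappa=\vert E\vert$ (no postset constraint at all), only $\frac{m}{3}$ producing events remain, and bundles of $\frac{m}{3}$ parallel edges $w_i^j$ and $v_i^j$ on each clause path guarantee that these remaining producers must be variables, one per clause. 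All witness regions in Facts~\ref{fact:pure_k}--\ref{fact:pure_ssp} are pure by construction, with only $\T_{1,0}$ and $\T_{0,1}$ classes. This preset-only budget mechanism is what your plan is missing, and it is why the paper's completeness direction goes through without any impure regions to repair.
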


The proof of Theorem~\ref{the:new_content} is based on a reduction of the following problem:

\noindent
\fbox{\begin{minipage}[t][1.8\height][c]{0.97\textwidth}
\begin{decisionproblem}
  \problemtitle{\textsc{Cubic Monotone 1 in 3 3Sat (CM1in33Sat)}}
  \probleminput{A pair $(\mathfrak{U}, M)$ that consists of a set $\mathfrak{U}$ of   boolean variables and a set of
  $m$ 3-clauses $M=\{M_0,\dots, M_{m-1}\}$ such that $M_i=\{X_{i_0}, X_{i_1}, X_{i_2}\}\subseteq \mathfrak{U}$ and $i_0 <i_1 <i_2$ for all $\in \{0,\dots, m-1\}$.
 Every variable of $\mathfrak{U}$ occurs in exactly three clauses of $M$ }
  \problemquestion{Does there exist a one-in-three model of $(\mathfrak{U}, M)$, i. e., a subset $\mS\subseteq \mathfrak{U}$ such that $\vert \mS\cap M_i\vert =1$ for all $i\in \{0,\dots, m-1\}$?}
\end{decisionproblem}
\end{minipage}}
\medskip

\begin{theorem}[\cite{dcg/MooreR01}]
\textsc{Cubic Monotone 1 in 3 3Sat} is NP-complete.
\end{theorem}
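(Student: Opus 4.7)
The plan has two parts: membership in NP, which is routine, and NP-hardness, which is the substance.

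Membership is immediate. A nondeterministic Turing machine guesses, in time $O(|\mathfrak{U}|)$, a subset $\mS \subseteq \mathfrak{U}$ encoded as one bit per variable, then deterministically checks in time $O(m)$ that $|\mS \cap M_i| = 1$ for every clause $M_i$. So \textsc{CM1in33Sat} $\in$ NP.

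For NP-hardness, the plan is to reduce from \textsc{Monotone 1-in-3 3Sat} (no cubic restriction), which is NP-complete by Schaefer's dichotomy theorem. The input of that problem is already a set of positive 3-clauses with the one-in-three semantics; the only remaining task is to enforce that every variable occurs in exactly three clauses while preserving satisfiability. First I would normalise each clause so that its three variables are pairwise distinct (if not, the clause is trivially unsatisfiable or forces a variable, and we reject or simplify). Next, for each variable $x$ occurring in $k$ clauses $C_{i_1},\dots,C_{i_k}$, I would replace each occurrence by a fresh copy $x^{(1)},\dots,x^{(k)}$, so the modified formula is equisatisfiable with the original provided all the copies are forced to take a common value.

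The heart of the reduction is therefore an \emph{equality gadget} for 1-in-3: a small collection of positive 3-clauses over two "port" variables $a,b$ and a constant number of fresh auxiliary variables, whose one-in-three models are exactly those assignments in which $a = b$ (with the auxiliary variables taking forced values determined by $a,b$). I would link the copies $x^{(1)},\dots,x^{(k)}$ in a cycle $x^{(1)}=x^{(2)}=\cdots=x^{(k)}=x^{(1)}$ by one equality gadget between each consecutive pair. The gadget must be designed so that (i) each auxiliary variable it introduces is born with exactly three clause-occurrences, and (ii) each port variable gains exactly two extra occurrences (one per adjacent equality gadget). Together with the single original occurrence of the port in the clause that spawned it, this yields exactly three occurrences of every $x^{(j)}$ in the final instance.

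The main obstacle is constructing the equality gadget and verifying its correctness by case analysis: showing that the only one-in-three satisfying assignments of the gadget correspond to $a=b$, and tallying the resulting occurrence counts of all auxiliary variables and of the ports $a,b$ to confirm the gadget meets the three-each budget. A workable design uses a small fixed number of auxiliary variables joined by a handful of 1-in-3 clauses; once the bookkeeping checks out for a single gadget, the rest of the argument is clerical. The overall transformation runs in time polynomial in $|\mathfrak{U}|+m$, produces a Cubic Monotone 1-in-3 3Sat instance, and is satisfiable if and only if the original Monotone 1-in-3 3Sat instance is, which establishes NP-hardness and, combined with membership in NP, yields the theorem.
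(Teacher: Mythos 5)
The paper itself contains no proof of this statement: the theorem is imported wholesale from Moore and Robson \cite{dcg/MooreR01}, so there is no internal argument to compare yours against, and your attempt has to stand on its own. Your membership argument is correct (and routine), and your hardness strategy --- reduce from monotone 1-in-3 3Sat, replace each variable by one copy per occurrence, and chain the copies in a cycle of monotone ``equality gadgets'' whose occurrence counts are budgeted so that every variable ends with exactly three occurrences --- is the natural route. But there is a genuine gap exactly where you locate ``the main obstacle'': the equality gadget satisfying your conditions (i) and (ii) is asserted to exist and never exhibited, and it is not a routine object; it is the entire content of the hardness proof.

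To see why, consider the natural candidate: the clauses $\{a,u,v\}$ and $\{b,u,v\}$ do force $a=b$ under exactly-one-in-three semantics and give each port one occurrence, as your budget requires, but the auxiliaries $u,v$ occur only twice. Repairing this with a third clause $\{u,v,w\}$ raises $u$ and $v$ to three occurrences, yet one checks that $w$ is then forced to equal the common port value and occurs only once: the deficiency has not been removed, it has migrated to a fresh copy of the variable that now needs two further occurrences, and iterating the repair produces an infinite regress. Closing the counts requires an additional idea --- for instance letting a single $w$ be shared by the third clauses of three gadgets belonging to the same variable's cycle, which is sound because all ports of one cycle are forced equal, but a short double count shows that for this family of repairs the occurrence number $k$ of the variable must be divisible by $3$, so the source instance needs a preprocessing step you do not provide. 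Your construction also breaks for the degenerate multiplicities $k=1$ and $k=2$, which your normalisation of the original clauses does not touch: for $k=1$ the cycle joins $x^{(1)}$ to itself, and since the clause collection is a \emph{set} of $3$-element sets, the two gadget clauses $\{x^{(1)},u,v\}$ collapse into one, destroying both the forcing argument and the occurrence count. None of this makes the theorem false or the route hopeless --- the result is true, and constructions along these lines exist --- but the sentence ``once the bookkeeping checks out for a single gadget, the rest of the argument is clerical'' conceals that no gadget meeting the budget has been produced, so the hardness half of your proposal is an outline rather than a proof. Either carry out the gadget construction and its case analysis in full, or do what the paper does and simply quote \cite{dcg/MooreR01}.
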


\begin{example}\label{ex:cubic}
	\label{ex:13sat}
	The instance $ (\mathfrak{U},M) $, where $ \mathfrak{U} =\{X_0,X_1,X_2,X_3,X_4,X_5\} $, and $ M =\{M_0,\dots, M_5\}$ such that $M_0=\{X_0,X_1,X_2\}$,
$M_1=\{X_0,X_1,X_3\}$,
$M_2=\{X_0,X_2,X_3\}$, 
$M_3=\{X_1,X_4,X_5\}$, 
$M_4=\{X_2,X_4,X_5\}$, and
$M_5=\{X_3,X_4,X_5\}$, allows a positive decision: $\mS=\{X_0,X_4\}$ defines a one-in-three model for $ (\mathfrak{U},M) $.
\end{example}

In the following, until explicitly stated otherwise, let $(\mathfrak{U}, M)$ be an arbitrary but fixed instance of \textsc{CM1in33Sat} with variables $\mathfrak{U}=\{X_0,\dots, X_{m-1}\}$, and clauses $M=\{M_0,\dots, M_{m-1}\}$, where $M_i=\{X_{i_0}, X_{i_1}, X_{i_2}\}\subseteq \mathfrak{U}$, and $i_0 <i_1 <i_2$ for all $i \in \{0,\dots, m-1\}$.
Note that $\vert \mathfrak{U}\vert =\vert M\vert$ holds by the definition of a valid input.

\begin{lemma}\label{lem:model_size}
$\mS\subseteq \U$ is a one-in-three model of $(\U, M)$ if and only if $\mS\cap M_i\not=\emptyset$ for all $i\in \{0,\dots, m-1\}$, and $\vert \mS\vert =\frac{m}{3}$.
\end{lemma}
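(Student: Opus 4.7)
The plan is to prove both directions by a double-counting argument applied to the set of incidence pairs
\[
P = \{(X, i) \mid X \in \mS \cap M_i,\; 0 \leq i \leq m-1\}.
\]
Since every clause has exactly three variables and, by the cubic monotone hypothesis, every variable appears in exactly three clauses, counting $|P|$ by summing over clauses gives $|P| = \sum_{i=0}^{m-1} |\mS \cap M_i|$, while summing over variables of $\mS$ gives $|P| = 3 \cdot |\mS|$.

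For the forward direction, I would assume $\mS$ is a one-in-three model, so $|\mS \cap M_i| = 1$ for every $i$, which immediately yields $\mS \cap M_i \neq \emptyset$. The identity above then reads $m = 3|\mS|$, so $|\mS| = m/3$.

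For the converse, I would assume $\mS \cap M_i \neq \emptyset$ for all $i$ and $|\mS| = m/3$. Then every summand $|\mS \cap M_i|$ is at least $1$, while the total satisfies
\[
\sum_{i=0}^{m-1} |\mS \cap M_i| = 3 \cdot |\mS| = 3 \cdot \frac{m}{3} = m.
\]
Since $m$ nonnegative integers each at least $1$ sum to $m$, each must equal exactly $1$, so $\mS$ is a one-in-three model of $(\U, M)$.

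There is no real obstacle here: the statement is a purely combinatorial identity that relies only on the regularity of the instance (each clause has size $3$ and each variable has frequency $3$). The only point worth being explicit about is invoking the hypothesis ``every variable of $\U$ occurs in exactly three clauses of $M$'' when rewriting $|P|$ as $3|\mS|$; without this regularity, the backward direction would fail. Note that the argument also implicitly confirms that $m$ must be divisible by $3$ whenever a one-in-three model exists, which matches the constraint $|\U| = |M|$ built into the problem definition.
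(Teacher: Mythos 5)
Your proof is correct and takes essentially the same approach as the paper's: both rest on counting variable--clause incidences over $\mS$ and using the cubic hypothesis to equate this count with $3\vert\mS\vert$, then observing that hitting all $m$ clauses with total incidence $m$ forces each intersection to have size exactly one. Your version merely spells out explicitly the double count that the paper's one-line argument leaves implicit.
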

\begin{proof}
Every variable of $\U$ occurs in exactly three distinct clauses.
Hence, every set $\mS\subseteq \U$ intersects with $3\vert\mS\vert$ (distinct) clauses $M_{i_0},\dots, M_{i_{3\vert\mS\vert-1}}\in M$ if and only if $\vert \mS\cap M_{i_j}\vert=1 $ is satisfied for all $j\in \{0,\dots, 3\vert\mS\vert-1\}$.
\end{proof}

We reduce the instance $(\U,M)$ to an instance $(A,\varrho,\kappa)$ with TS $A=(S,E,\delta, h_0)$ as follows:
First, we define $\varrho=m$, and $\kappa=\vert E\vert$.
We proceed by developing stepwise the TS $A$.
First of all, the TS $A$ has the following gadget $H$ 
that uses an event $k$, and,
for all $i\in \{0,\dots, \frac{2m}{3}-1\}$ an event $u_i$, which labels an edge that is converse to edge labeled by $k$ (by Lemma~\ref{lem:model_size} we can assume without loss of generality that $m\equiv 0\text{ mod }3$):\vspace*{-3mm}

\begin{center}
\begin{tikzpicture}[new set = import nodes]
\begin{scope}[nodes={set=import nodes}]
	\node (H) at (-1,0) {$H=$};
	\node (h0) at (0,0) {\nscale{$h_0$}};
	\node (h1) at (2.5,0) {\nscale{$h_1$}};
	\graph {
	(import nodes);
			h0->[bend left =20, "\escale{$k$}"]h1;
			h1->[bend left=20, "\escale{$u_0,\dots, u_{\frac{2m}{3}-1}$}"]h0;
			};
\end{scope}
\end{tikzpicture}\vspace*{-2mm}
\end{center}

Moreover, for every $i\in \{0,\dots, m-1\}$, the TS $A$ has the following gadget $T_i$, which uses the variables of the clause $M_i=\{X_{i_0},X_{i_1}, X_{i_2}\}$ as events:
\begin{center}
\begin{tikzpicture}[new set = import nodes]
\begin{scope}[nodes={set=import nodes}]
	\node (Ti) at (-1,0) {$T_i=$};
	\foreach \i in {0,...,5} {\coordinate (t0\i) at (\i*1.75cm,0);}
	\foreach \i in {0,...,5} {\node (t0\i) at (t0\i) {\nscale{$t_{i,\i}$}};}
	\graph {
	(import nodes);
			t00->[ "\escale{$X_{i_0}$}"]t01->[ "\escale{$X_{i_1}$}"]t02->[ "\escale{$X_{i_2}$}"]t03->["\escale{$v_i^0,\dots, v_i^{\frac{m}{3}-1}$}"]t04->[bend left=20, "\escale{$k$}"]t05;
			t05->[bend left=30, "\escale{$u_0,\dots, u_{\frac{2m}{3}-1}$}"]t04;
			};
\end{scope}
\end{tikzpicture}
\end{center}
Notice that $T_i$ uses the same events $u_0,\dots, u_{\frac{2m}{3}-1}$ as $H$.
On the other hand, for every $i\not=j\in \{0,\dots, m-1\}$, the events $v_i^0,\dots, v_i^{\frac{m}{3}-1}$, and the events $v_j^0,\dots, v_j^{\frac{m}{3}-1}$ are pairwise distinct.

\medskip
Finally, for every $i\in \{0,\dots, m-1\}$, and for every $j\in \{0,\dots, \frac{m}{3}-1\}$, the TS $A$ has the edge $h_1\Edge{w_i^j}t_{i,0}$.

As an illustration, Figure~\ref{fig:pure_reduction_example} shows the TS $A$ that originates from the input of Example~\ref{ex:cubic}.

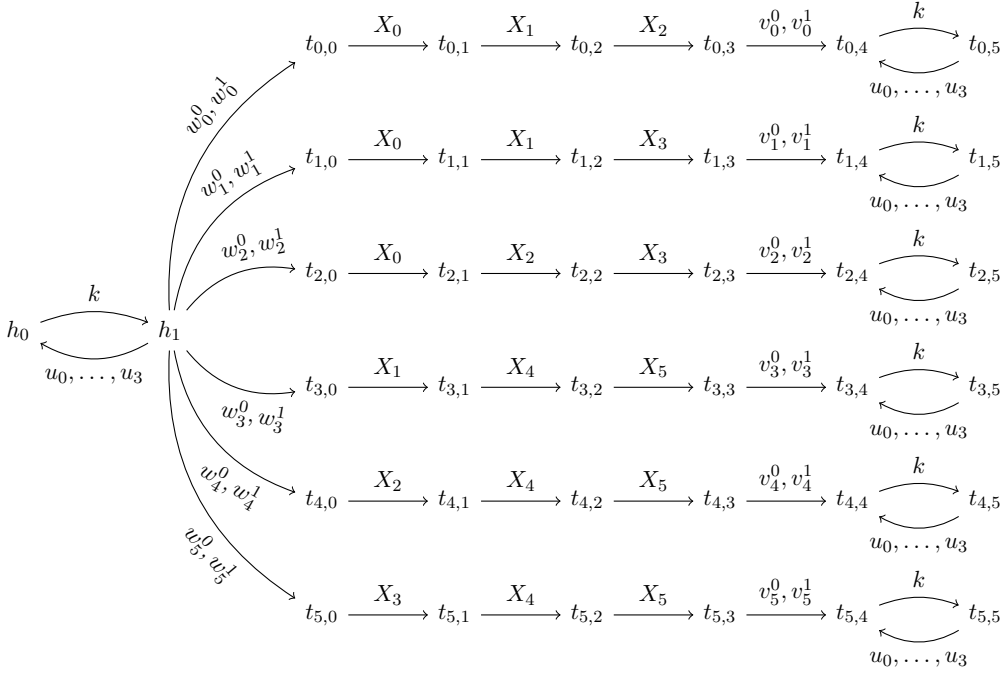
\begin{figure}[t!]
\begin{center}
\begin{tikzpicture}[new set = import nodes]
\begin{scope}[nodes={set=import nodes}]

	\foreach \i in {0,...,5} {\coordinate (t0\i) at (\i*1.75cm,0);}
	\foreach \i in {0,...,5} {\node (t0\i) at (t0\i) {\nscale{$t_{0,\i}$}};}
	\graph {
	(import nodes);
			t00->[ "\escale{$X_0$}"]t01->[ "\escale{$X_1$}"]t02->[ "\escale{$X_2$}"]t03->["\escale{$v_0^0,v_0^1$}"]t04->[bend left=20, "\escale{$k$}"]t05;
			t05->[bend left=30, "\escale{$u_0,\dots, u_{3}$}"]t04;
			};
\end{scope}
\begin{scope}[yshift=-1.5cm, nodes={set=import nodes}]

	\foreach \i in {0,...,5} {\coordinate (t1\i) at (\i*1.75cm,0);}
	\foreach \i in {0,...,5} {\node (t1\i) at (t1\i) {\nscale{$t_{1,\i}$}};}
	\graph {
	(import nodes);
			t10->[ "\escale{$X_0$}"]t11->[ "\escale{$X_1$}"]t12->[ "\escale{$X_3$}"]t13->["\escale{$v_1^0,v_1^1$}"]t14->[bend left=20, "\escale{$k$}"]t15;
			t15->[bend left=30, "\escale{$u_0,\dots, u_3$}"]t14;
			};
\end{scope}
\begin{scope}[yshift=-3cm, nodes={set=import nodes}]

	\foreach \i in {0,...,5} {\coordinate (t2\i) at (\i*1.75cm,0);}
	\foreach \i in {0,...,5} {\node (t2\i) at (t2\i) {\nscale{$t_{2,\i}$}};}
	\graph {
	(import nodes);
			t20->[ "\escale{$X_0$}"]t21->[ "\escale{$X_2$}"]t22->[ "\escale{$X_3$}"]t23->["\escale{$v_2^0,v_2^1$}"]t24->[bend left=20, "\escale{$k$}"]t25;
			t25->[bend left=30, "\escale{$u_0,\dots, u_3$}"]t24;
			};
\end{scope}
\begin{scope}[yshift=-4.5cm, nodes={set=import nodes}]

	\foreach \i in {0,...,5} {\coordinate (t3\i) at (\i*1.75cm,0);}
	\foreach \i in {0,...,5} {\node (t3\i) at (t3\i) {\nscale{$t_{3,\i}$}};}
	\graph {
	(import nodes);
			t30->[ "\escale{$X_1$}"]t31->[ "\escale{$X_4$}"]t32->[ "\escale{$X_5$}"]t33->["\escale{$v_3^0,v_3^1$}"]t34->[bend left=20, "\escale{$k$}"]t35;
			t35->[bend left=30, "\escale{$u_0,\dots, u_3$}"]t34;
			};
\end{scope}
\begin{scope}[yshift=-6cm, nodes={set=import nodes}]

	\foreach \i in {0,...,5} {\coordinate (t4\i) at (\i*1.75cm,0);}
	\foreach \i in {0,...,5} {\node (t4\i) at (t4\i) {\nscale{$t_{4,\i}$}};}
	\graph {
	(import nodes);
			t40->[ "\escale{$X_2$}"]t41->[ "\escale{$X_4$}"]t42->[ "\escale{$X_5$}"]t43->["\escale{$v_4^0,v_4^1$}"]t44->[bend left=20, "\escale{$k$}"]t45;
			t45->[bend left=30, "\escale{$u_0,\dots, u_3$}"]t44;
			};
\end{scope}
\begin{scope}[yshift=-7.5cm, nodes={set=import nodes}]

	\foreach \i in {0,...,5} {\coordinate (t5\i) at (\i*1.75cm,0);}
	\foreach \i in {0,...,5} {\node (t5\i) at (t5\i) {\nscale{$t_{5,\i}$}};}
	\graph {
	(import nodes);
			t50->[ "\escale{$X_3$}"]t51->[ "\escale{$X_4$}"]t52->[ "\escale{$X_5$}"]t53->["\escale{$v_5^0,v_5^1$}"]t54->[bend left=20, "\escale{$k$}"]t55;
			t55->[bend left=30, "\escale{$u_0,\dots, u_3$}"]t54;
			};
\end{scope}
\begin{scope}[yshift=-3.75cm, nodes={set=import nodes}]

	\node (h0) at (-4,0) {\nscale{$h_0$}};
	\node (h1) at (-2,0) {\nscale{$h_1$}};
	\path (h1) edge [->, bend left=30] node[above, rotate=50, pos=0.7] {\escale{$w_0^0,w_0^1$}} (t00);
	\path (h1) edge [->, bend left=30] node[above, rotate=30, pos=0.7] {\escale{$w_1^0,w_1^1$}} (t10);
	\path (h1) edge [->, bend left=30] node[above, rotate=10, pos=0.7] {\escale{$w_2^0,w_2^1$}} (t20);
	\path (h1) edge [->, bend right=30] node[below, rotate=-10, pos=0.7] {\escale{$w_3^0,w_3^1$}} (t30);
	\path (h1) edge [->, bend right=30] node[below, rotate=-30, pos=0.7] {\escale{$w_4^0,w_4^1$}} (t40);
	\path (h1) edge [->, bend right=30] node[below, rotate=-50, pos=0.7] {\escale{$w_5^0,w_5^1$}} (t50);
	\graph {
	(import nodes);
			h0->[bend left =20, "\escale{$k$}"]h1;
			h1->[bend left=30, "\escale{$u_0,\dots, u_3$}"]h0;
			};
\end{scope}
\end{tikzpicture}
\end{center}\vspace*{-4mm}
\caption{The TS $A$ that is the result of the reduction for \textsc{Pure Environment Restricted Synthesis} applied to the input of Example~\ref{ex:cubic}.}\label{fig:pure_reduction_example}\vspace*{-2mm}
\end{figure}

\begin{lemma}\label{lem:pure_essp_implies_model}
If there is an admissible set of pure and $(\varrho,\kappa)$-restricted regions of $A$, then there is a one-in-three model for $(\U,M)$.
\end{lemma}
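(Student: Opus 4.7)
I propose to extract the sought one-in-three model from a single region of the admissible set. The ESSA I would focus on is $\alpha = (k, h_1)$; it is indeed an ESSA since $k$ occurs only at $h_0$ and at the states $t_{i,4}$ ($i \in \{0,\dots,m-1\}$). By admissibility, some region $R = (sup, con, pro)$ solving $\alpha$ exists, so $sup(h_1) < con(k)$. Since $sup(h_1) \geq 0$, this forces $con(k) \geq 1$, and purity then yields $pro(k) = 0$. The gadget $H$ gives $sup(h_1) = sup(h_0) - con(k)$; combined with $sup(h_0) = sup(h_1) + \eff(u_j)$ this yields $\eff(u_j) = con(k) > 0$ for every $j$, and purity forces $pro(u_j) = con(k)$ and $con(u_j) = 0$. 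Thus the $2m/3$ events $u_0, \dots, u_{2m/3-1}$ all belong to $\preset{R}$.

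Since $|\preset{R}| \leq \varrho = m$, the residual budget for events outside $\{u_0, \dots, u_{2m/3-1}\}$ is at most $m/3$. For each clause $M_i$, firability of $k$ at $t_{i,4}$ gives $sup(t_{i,4}) \geq con(k) > sup(h_1)$. Expanding the support along the path $h_1 \edge{w_i^0} t_{i,0} \edge{X_{i_0}} t_{i,1} \edge{X_{i_1}} t_{i,2} \edge{X_{i_2}} t_{i,3} \edge{v_i^0} t_{i,4}$ yields
\[
\eff(w_i^0) + \eff(X_{i_0}) + \eff(X_{i_1}) + \eff(X_{i_2}) + \eff(v_i^0) \;\geq\; con(k) - sup(h_1) \;\geq\; 1.
\]
Since this is an integer sum of five terms with total at least $1$, at least one summand is at least $1$, and the corresponding event lies in $\preset{R}$.

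The principal obstacle, and the step I expect to be the most delicate, is to rule out that any $w_i^j$ or $v_i^j$ belongs to $\preset{R}$. Here I would exploit the rigidity of parallel edges: the $m/3$ arcs $h_1 \edge{w_i^j} t_{i,0}$ all share the same effect, and similarly for the arcs $t_{i,3} \edge{v_i^j} t_{i,4}$. Hence, if any single $w_i^0$ or $v_i^0$ has positive effect, all $m/3$ of its siblings do too, exhausting the residual budget of $m/3$ at once. No slot would then remain for covering any other clause $i' \neq i$, contradicting the conclusion of the previous paragraph applied to $i'$. Consequently only $X$-events may populate the residual budget.

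To conclude, I would set $\mS = \{X \in \U \mid pro(X) > 0\}$. By the previous two paragraphs, $\mS \subseteq \preset{R} \setminus \{u_0,\dots,u_{2m/3-1}\}$, $|\mS| \leq m/3$, and $\mS \cap M_i \neq \emptyset$ for every $i$. Since each variable occurs in exactly three clauses, $3|\mS| = \sum_{i=0}^{m-1} |\mS \cap M_i| \geq m$, which forces $|\mS| = m/3$. Lemma~\ref{lem:model_size} then certifies that $\mS$ is a one-in-three model of $(\U, M)$, as required.
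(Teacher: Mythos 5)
Your proof is correct and follows essentially the same route as the paper's: it extracts a region $R$ solving the ESSA $(k,h_1)$, deduces $con(k)>pro(k)=0$ and $\{u_0,\dots,u_{2m/3-1}\}\subseteq\preset{R}$, uses the budget $\varrho=m$ together with the rigidity of the parallel $w$- and $v$-edges to force the residual $m/3$ preset slots onto variable events, and concludes via the counting argument of Lemma~\ref{lem:model_size}. The only cosmetic difference is your use of the effect notation $\eff$, which streamlines but does not alter the argument.
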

\begin{proof}
Let $\R$ be an admissible set of pure and $(\varrho,\kappa)$-restricted regions of $A$. Let $R=(sup, con, pro)\in \R$  that solves $\alpha=(k,h_1)$.
By $h_0\edge{k}$, we have $con(k)\leq sup(h_0)$, and since $R$ solves $\alpha$, we have that $con(k)>sup(h_1)$.
This implies $sup(h_0)>sup(h_1)$, as well as $con(k)>0= pro(k)$ (by pureness).
From $h_1\edge{u_i}h_0$, we get $pro(u_i) > con(u_i)=0$, implying $u_i\in \preset{R}$ for all $i\in \{0,\dots, \frac{2m}{3}-1\}$.
Since $\vert \preset{R}\vert \leq \varrho=m$, there are at most $\frac{m}{3}$ events left that have a positive $pro$-value.
In the following, we argue that $\mS=\{X\in \U\mid pro(X)>0\}$ satisfies $\vert \mS\vert= \frac{m}{3}$ and $\mS\cap M_i\not=\emptyset$ for all $i\in \{0,\dots, m-1\}$.
By Lemma~\ref{lem:model_size}, this implies that $\mS$ is a one-in-three model for $(\U, M)$.

Let $i\not=j\in \{0,\dots, m-1\}$ be arbitrary but fixed.
Since $con(k)> sup(h_1)$, $t_{i,4}\edge{k}$ and $t_{j,4}\edge{k}$, there is an event $e$ on every path from $h_1$ to $t_{i,4}$ such that $pro(e)>0= con(e)$, and the same is true for every path from $h_1$ to $t_{j,4}$.
If there is an $\ell\in \{0,\dots, \frac{m}{3}-1\}$ such that $pro(w_i^\ell)> 0=con(w_i^\ell)$, implying $sup(t_{i,0})>sup(h_1)$, then $pro(w_i^{\ell'})=pro(w_i^\ell)$ for each $\ell'\in \{0,\dots, \frac{m}{3}-1\}$, so that $w_i^\ell\in \preset{R}$ for all $\ell\in \{0,\dots, \frac{m}{3}-1\}$.
Since $\vert \preset{R}\vert\leq m$ and $\preset{R}\supseteq \{u_0,\dots,u_{\frac{2m}{3}-1}\}$,
this would imply that there is no event $e$ with $pro(e)> 0$ on any path from $h_1$ to $t_{j,4}$,
 contradicting what we saw before.
Similarly, one argues that there is no $\ell\in \{0,\dots, \frac{m}{3}-1\}$ such that $pro(v_i^\ell)> 0$.
Hence, by the arbitrariness of $i$ and $j$, we obtain that $\preset{R}\cap (\bigcup_{i=0}^{m-1}\{w_i^0,v_i^0,\dots, w_i^{\frac{m}{3}-1},v_i^{\frac{m}{3}-1}\})=\emptyset$.

Consequently, for every $i\in \{0,\dots, m-1\}$, there is an event $X\in \U$ on the path from $t_{i,0}$ to $t_{i,4}$,
such that $pro(X) > 0$.
This implies $\mS\cap M_i\not=\emptyset$ for all $i\in \{0,\dots, m-1\}$.
Moreover, since $\mS$ intersects with $m$ distinct clauses and $\vert \mS\vert \leq \frac{m}{3}$ by $\preset{R}\supseteq \mS$,
we have that $\vert\mS \vert=\frac{m}{3}$.
By Lemma~\ref{lem:model_size}, this implies that $\mS$ defines a one-in-three model of $(\U,M)$.
\end{proof}

Conversely, the following facts show that a one-in-three model for $(\U,M)$ implies the existence of an
admissible set of $(\varrho,\kappa)$-restricted pure regions of $A$.
We abridge $U=\{u_0,\dots, u_{\frac{2m}{3}-1 }\}$, $W_i=\{w_i^0,\dots, w_i^{\frac{m}{3}-1}\}$ and
$V_i=\{v_i^0,\dots, v_i^{\frac{m}{3}-1}\}$ for all $i\in \{0,\dots, m-1\}$, and $W=\bigcup_{i=0}^{m-1}W_i$,
$V=\bigcup_{i=0}^{m-1}V_i$.

\begin{fact}\label{fact:pure_k}
If there is a one-in-three model for $(\U,M)$, then the event $k$ is solvable by $(\varrho,\kappa)$-restricted pure regions.
\end{fact}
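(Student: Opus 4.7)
The plan is to exhibit, for each state $s$ of $A$ at which $k$ is undefined (that is, $s=h_1$ or $s=t_{i,\ell}$ for $i\in\{0,\dots,m-1\}$ and $\ell\in\{0,1,2,3,5\}$), a pure $(\varrho,\kappa)$-restricted region witnessing the ESSA $(k,s)$. The construction splits into one ``global'' region $R_0$ built from the one-in-three model $\mS$, together with one ``clause-local'' region $R_j$ per index $j\in\{0,\dots,m-1\}$; together they will cover every ESSA of the form $(k,s)$.

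For $R_0$ I would take $con(k)=1$, $sup(h_0)=1$, $pro(u_i)=1$ for every $u_i\in U$, $pro(X)=1$ for every $X\in\mS$, and all other flows zero. The assumption $|\mS\cap M_i|=1$ for every clause $M_i$ then forces the support to rise from $sup(t_{i,0})=0$ to $sup(t_{i,3})=1$ along each clause path, so that $sup(t_{i,4})=1=con(k)$ and $R_0$ is a well-defined pure region with $\preset{R_0}=U\cup\mS$ of cardinality $2m/3+m/3=m=\varrho$. Since $h_1$, $t_{i,0}$ and $t_{i,5}$ all carry support $0<1$, $R_0$ solves $(k,h_1)$, $(k,t_{i,0})$ and $(k,t_{i,5})$ for every $i$.

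For each fixed clause index $j$, the region $R_j$ is designed to solve $(k,t_{j,\ell})$ for the remaining $\ell\in\{1,2,3\}$. The idea is to double the resources around the gadget $H$ while treating $T_j$ asymmetrically: set $con(k)=2$, $sup(h_0)=4$, $sup(h_1)=2$, $pro(u_i)=2$ for every $u_i$, $con(w_j^\ell)=1$ and $pro(v_j^\ell)=1$ for every $\ell$, while $con(w_{j'}^\ell)=pro(v_{j'}^\ell)=0$ for $j'\neq j$ and every variable flow is zero. A direct support computation then yields $sup(t_{j,\ell})=1<2=con(k)$ for $\ell\in\{0,1,2,3\}$, $sup(t_{j,4})=2$, $sup(t_{j,5})=0$, and $sup(t_{j',\ell})=2$ for $j'\neq j$ and $\ell\in\{0,\dots,4\}$; hence $R_j$ is a valid pure region, its preset is exactly $U\cup\{v_j^0,\dots,v_j^{m/3-1}\}$ of size $m=\varrho$, and it solves every $(k,t_{j,\ell})$ with $\ell\in\{0,1,2,3,5\}$. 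Notably, $R_j$ does not depend on $\mS$ at all.

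The main obstacle is the preset budget. Since the loop $h_1\edge{u_i}h_0$ forces $pro(u_i)-con(u_i)$ to be the same strictly positive value for every index $i$, the $2m/3$ events of $U$ are unavoidably in $\preset{R}$ as soon as $R$ separates $k$ from $h_1$ at all, leaving only $m/3$ free preset slots. These must simultaneously finance the lower bound $sup(t_{j',4})\geq con(k)$ for every clause $M_{j'}$ and keep the support of the separated state strictly below $con(k)$. One can either spend the $m/3$ slots on a hitting set of size $m/3$, which is precisely what a one-in-three model provides and yields $R_0$, or spend them on a single clause's $v_j^\ell$-transitions while exploiting a larger $sup(h_1)$ to keep the other clauses valid at no additional preset cost, which yields $R_j$. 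More naive attempts -- for instance, modifying $R_0$ by removing the unique variable of $\mS\cap M_j$ from its preset in order to separate $(k,t_{j,3})$ -- overshoot the budget, because the excluded variable sits in two further clauses that must then be repaired at extra cost. Combining $R_0$ with the family $\{R_j:0\leq j\leq m-1\}$ then proves the fact.
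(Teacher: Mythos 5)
Your proposal is correct and follows essentially the same route as the paper: one ``global'' region built from the one-in-three model (identical to the paper's $R_0$, with preset $\mS\cup U$ of size exactly $m=\varrho$) to separate $k$ from $h_1$, plus one model-independent region per clause gadget with preset $U\cup V_j$ and postset $\{k\}\cup W_j$ to separate $k$ from the internal states $t_{j,0},\dots,t_{j,3},t_{j,5}$. Your per-clause region uses $con(k)=2$ and doubled flows on $U$ where the paper's $R_1$ uses $con(k)=1$ with $sup(h_0)=2$, but the preset/postset structure and the covered separation atoms coincide, so the difference is purely cosmetic.
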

\begin{proof}
Let $\mS$ be a one-in-three model of $(\U,M)$.\\
The following region $R_0=(sup_0, con_0, pro_0)$ solves $(k,h_1)$: 
$sup_0(h_0)=1$, and
$\T_{1,0}^{R_0}=\{k\}$, and
$\T_{0,1}^{R_0}=\mS\cup U$.
Notice that $\vert \preset{R_0}\vert=m$, by Lemma~\ref{lem:model_size}.

Let $i\in \{0,\dots, m-1\}$ be arbitrary but fixed.\\
The following region $R_1=(sup_1, con_1, pro_1)$ solves $(k,s)$ for all $s\in \{t_{i,0},\dots, t_{i,3}, t_{i,5}\}$:
$sup_1(h_0)=2$, and
$\T_{1,0}^{R_1}=\{k\}\cup W_i$, and
$\T_{0,1}^{R_1}=U\cup V_i$.
Notice that $\vert \preset{R}\vert=m$.\\
Since $i$ was arbitrary, this completes the proof.
\end{proof}

\begin{fact}\label{fact:pure_u_and_w}
For every $e\in U\cup W$, the event $e$ is solvable by $(\varrho,\kappa)$-restricted pure regions.
\end{fact}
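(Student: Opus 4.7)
The plan is to solve every ESSA for the events in $U\cup W$ using at most two pure $(\varrho,\kappa)$-restricted regions per event. First I would build a single ``universal'' region $R^{\ast}=(sup^{\ast},con^{\ast},pro^{\ast})$ by setting $sup^{\ast}(h_0)=0$, $\T_{0,1}^{R^{\ast}}=\{k\}$, $\T_{1,0}^{R^{\ast}}=U\cup W$, and $\T_{0,0}^{R^{\ast}}$ for all remaining events (the $X$'s and $v$'s). Because the parallel $w_i^{j'}$-edges out of $h_1$ all receive the same $(con,pro)=(1,0)$ (and similarly for the $v$-packs and the $u$'s and $X$'s), parallel-edge consistency is automatic, and a straightforward inductive computation on the gadgets gives $sup^{\ast}(h_1)=1$, $sup^{\ast}(t_{i',j})=0$ for $j\in\{0,1,2,3,4\}$, and $sup^{\ast}(t_{i',5})=1$. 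This region is pure, with $|\preset{R^{\ast}}|=1\leq m=\varrho$ and $|\postset{R^{\ast}}|=|U\cup W|\leq|E|=\kappa$.

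Since $con^{\ast}(u_\ell)=1$ and every state where $u_\ell$ is disabled (namely $h_0$ and the $t_{i',j}$ with $0\leq j\leq 4$) has $sup^{\ast}=0$, the region $R^{\ast}$ by itself solves every ESSA for every $u_\ell$. Similarly $con^{\ast}(w_i^j)=1$ solves every ESSA for $w_i^j$ except those of the form $(w_i^j,t_{i',5})$, where $sup^{\ast}(t_{i',5})=1$ fails the strict inequality.

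To handle those remaining ESSAs I would introduce, for each $i\in\{0,\dots,m-1\}$, an auxiliary region $R_i$ defined by $sup_i(h_0)=2$, $\T_{2,0}^{R_i}=W_i$, $\T_{1,0}^{R_i}=\bigcup_{i'\neq i}V_{i'}$, and $\T_{0,0}^{R_i}$ everywhere else. Parallel-edge consistency still holds because each $w$-pack and each $v$-pack is given a uniform $(con,pro)$. A direct computation yields $sup_i(t_{i,j})=0$ for all $j\in\{0,\dots,5\}$, $sup_i(t_{i',j})=2$ for $i'\neq i$ and $j\in\{0,1,2,3\}$, and $sup_i(t_{i',4})=sup_i(t_{i',5})=1$ for $i'\neq i$. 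Combined with $con_i(w_i^j)=2$ this solves $(w_i^j,t_{i',5})$ for every $i'$, and $R_i$ is pure with $|\preset{R_i}|=0$ and $|\postset{R_i}|=\tfrac{m}{3}+(m-1)\tfrac{m}{3}=\tfrac{m^2}{3}\leq|E|=\kappa$. Pairing $R^{\ast}$ with $R_i$ therefore solves every ESSA for every $w_i^j$, and since $i,j,\ell$ were arbitrary the fact follows.

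The main obstacle is the pure-region constraint together with the parallel-edge consistency. A naive attempt to pack everything into one region fails: if $sup(h_0)=0$ and $pro(k)>0$ (needed to lift $sup(h_1)$ high enough to enable $w_i^j$), then the same $k$-edge inside each $T_{i'}$ lifts $sup(t_{i',5})$ back up, leaving $t_{i',5}$ indistinguishable from $h_1$ with respect to $con(w_i^j)$; and doubling $con(w_i^j)$ breaks uniformity of $W_i$ against $W_{i'}$. Splitting the work as above is what removes the obstruction: $R^{\ast}$ uses the $k$-boost to handle everything except the $t_{i',5}$'s, while $R_i$ keeps $k$ inert and instead uses the $v$-consumers on the gadgets $T_{i'}$ with $i'\neq i$ (and a doubled $con$ on $W_i$ that wipes out the support throughout $T_i$) to drive every $t_{i',5}$ strictly below $con_i(w_i^j)=2$.
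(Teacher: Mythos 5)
Your construction is correct, and its first half coincides exactly with the paper's: the region $R^{\ast}$ you define (with $sup(h_0)=0$, $pro(k)=1$ and $con(e)=1$ for all $e\in U\cup W$) is precisely the region $R_0$ used in the paper's proof, and your analysis of which atoms it leaves open --- exactly the $(w_i^j,t_{i',5})$ --- matches. Where you diverge is in handling those residual atoms. The paper does it with a \emph{single} further region: $sup(h_0)=1$, $\T_{1,0}^{R}=W$ and every other event inert, which gives support $1$ on $h_0$ and $h_1$ and support $0$ throughout every $T_{i'}$, so $con(w)=1>0=sup(t_{i',5})$ for all $w\in W$ and all $i'$ at once. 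You instead build $m$ per-clause regions $R_i$ with $con=2$ on $W_i$ and $con=1$ on the $V_{i'}$ for $i'\neq i$; your support computations and the purity and cardinality bounds ($\vert\preset{R_i}\vert=0$ and $\vert\postset{R_i}\vert=m^2/3\leq\vert E\vert=\kappa$) check out, so this also works, just with $m$ regions where one suffices. The only inaccuracy is in your closing ``obstruction'' discussion: you assert that a uniform consumption on all of $W$ cannot separate the $t_{i',5}$ and that the doubled $con$ on $W_i$ is forced, but that is only true while $k$ is kept productive; once $k$ is made inert, the uniform region above already zeroes out every gadget $T_{i'}$ and solves all the residual atoms simultaneously, so the per-clause splitting is unnecessary (though harmless).
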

\begin{proof}
Let $u\in U$, and $w\in W$ be arbitrary but fixed.
The following region $R_0=(sup_0, con_0, pro_0)$ solves $(u,s)$ for all $s\in S$ with $s\edge{\neg u}$, and $(w,s)$ for all $s\in S\setminus (\bigcup_{i=0}^{m-1}\{t_{i,5}\})$ with $s\edge{\neg w}$:
$sup_0(h_0)=0$, and
$\T_{1,0}^{R_0}=U\cup W$, and
$\T_{0,1}^{R_0}=\{k\}$.

The following region $R_1=(sup_1, con_1, pro_1)$ solves $(w,s)$ for all $s\in \bigcup_{i=0}^{m-1}\{t_{i,5}\}$:
$sup_1(h_0)=1$, and
$\T_{1,0}^{R_1}= W$.
Since $u$, and $w$ were arbitrary, the fact follows.
\end{proof}

\begin{fact}\label{fact:pure_X}
For every $X\in \U$, the event $X$ is solvable by $(\varrho,\kappa)$-restricted pure regions.
\end{fact}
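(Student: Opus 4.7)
The plan is to exhibit, for each fixed $X_i\in\mathfrak{U}$, a family of at most four pure $(\varrho,\kappa)$-restricted regions whose supports simultaneously witness every ESSA of the form $(X_i,s)$. Let $M_{j_1},M_{j_2},M_{j_3}$ be the three clauses containing $X_i$ (which exist and are unique by the cubicity of the input), and let $p_r\in\{0,1,2\}$ denote the position of $X_i$ in $M_{j_r}$, so that $X_i$ labels exactly the edge $t_{j_r,p_r}\to t_{j_r,p_r+1}$ of $T_{j_r}$. The ESSAs to solve are therefore $(X_i,s)$ for every $s\in S\setminus\{t_{j_1,p_1},t_{j_2,p_2},t_{j_3,p_3}\}$. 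Every region below is specified implicitly in the sense of Remark~\ref{rem:implicitly}, by giving $sup$ at the initial state $h_0$ and the pair $(con(e),pro(e))$ for each event $e$; pureness will be obvious because at most one of $con(e),pro(e)$ is non-zero in each specification.

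First I introduce a workhorse region $R_0$: set $sup_0(h_0)=0$, $con_0(X_i)=1$ and $pro_0(X_i)=0$, $pro_0(w_{j_r}^\ell)=1$ and $con_0(w_{j_r}^\ell)=0$ for every $r\in\{1,2,3\}$ and every $\ell\in\{0,\dots,\frac{m}{3}-1\}$, and $con_0(e)=pro_0(e)=0$ for every other event $e$. Propagating the support along the TS yields $sup_0(h_0)=sup_0(h_1)=0$; $sup_0(t_{j,q})=0$ for every $j\notin\{j_1,j_2,j_3\}$ and every $q$; $sup_0(t_{j_r,q})=1$ for $0\leq q\leq p_r$; and $sup_0(t_{j_r,q})=0$ for $p_r<q\leq 5$. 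Hence $R_0$ is a valid pure region with $|\preset{R_0}|=3\cdot\frac{m}{3}=m$ and $|\postset{R_0}|=1$, and it solves every ESSA $(X_i,s)$ except possibly those with $s=t_{j_r,q}$ for some $r\in\{1,2,3\}$ and some $0\leq q<p_r$.

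For each $r\in\{1,2,3\}$ with $p_r\geq 1$ I introduce a second region $R^r$ that clears the remaining ESSAs inside $T_{j_r}$. Let $Y_r$ denote the event in $M_{j_r}$ labeling the edge $t_{j_r,p_r-1}\to t_{j_r,p_r}$; by construction $Y_r\neq X_i$. Define $R^r$ by $sup(h_0)=0$, $con(X_i)=1$, $pro(Y_r)=1$, $pro(w_{j_{r'}}^\ell)=1$ for every $r'\in\{1,2,3\}\setminus\{r\}$ and every $\ell$, and all remaining $con,pro$ equal to $0$. Inside $T_{j_r}$ the support stays at $0$ up to $t_{j_r,p_r-1}$, jumps to $1$ at $t_{j_r,p_r}$ via $Y_r$, and drops back to $0$ after $X_i$ fires, so $R^r$ solves $(X_i,t_{j_r,q})$ for every $0\leq q<p_r$. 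It is pure with $|\preset{R^r}|=1+2\cdot\frac{m}{3}\leq m$ (using $m\geq 3$) and $|\postset{R^r}|=1$.

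The main obstacle is verifying that each $R^r$ is really a region, i.e., that all computed supports are non-negative and coherent also along the two other chains $T_{j_{r'}}$, where $Y_r$ might occur and perturb the propagation. A short case distinction on whether $Y_r\in M_{j_{r'}}$, and, if so, on whether its position precedes or follows $p_{r'}$, shows that the only non-zero effects felt by $T_{j_{r'}}$ are the initial $+1$ from $w_{j_{r'}}^\ell$, an optional $+1$ from $Y_r$, and the $-1$ consumed by $X_i$; one then checks routinely that $sup\geq 0$ holds everywhere and that $sup(t_{j_{r'},p_{r'}})\geq 1$, so that $X_i$ really fires there. Combining $R_0$ with the (at most three) regions $R^r$, every ESSA $(X_i,s)$ is solved by some pure $(\varrho,\kappa)$-restricted region, which yields the fact.
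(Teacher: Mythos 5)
Your proposal is correct and follows essentially the same route as the paper: the same ``workhorse'' region (consuming on $X_i$ and producing on the $w$-events of the three clauses containing $X_i$), complemented for each occurrence at a non-initial position by a region that produces on the immediately preceding clause variable and on the $w$-events of the other two occurrences, which is exactly the paper's $R_1$. Your explicit coherence check of the supports along the other gadgets is a welcome addition that the paper leaves implicit.
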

\begin{proof}
Let $i\in \{0,\dots, m-1\}$ be arbitrary but fixed, and let $j_0,j_1,j_2\in \{0,\dots, m-1\}$ be the pairwise distinct indices, such that $X_i\in M_{j_0}\cap M_{j_1}\cap M_{j_2}$.

The following region $R_0=(sup_0,con_0,pro_0)$ solves $(X_i, s)$ for all $s\in S\setminus (S(T_{j_0})\cup S(T_{j_1})\cup S(T_{j_2}))$, and for all $s\in (S(T_{j_0})\cup S(T_{j_1})\cup S(T_{j_2}))$ such that $X_i$ occurs \emph{before} $s$ on the unique path from $t_{i_\ell,0}$ to $s$, where $\ell\in \{0,1,2\}$:
$sup_0(h_0)=0$, and
$\T_{1,0}^{R_0}=\{X_i\}$, and
$\T_{0,1}^{R_0}=W_{j_0}\cup W_{j_1}\cup W_{j_2}$.
Notice that $\vert \preset{R_0}\vert =\vert W_{j_0}\vert+\vert W_{j_1}\vert +\vert W_{j_2}\vert =m$.
Moreover, for all $\ell\in \{0,1,2\}$, if $X_i=X_{{j_\ell}_0}$, then $(X_i,s)$ is solved for all $s\in S(T_{j_\ell})$.
Hence, it only remains to consider the case that $X_i$ is at the second or third position.

Let $i\in \{0,\dots, m-1\}$, and $j\in \{1,2\}$ be arbitrary but fixed.
Moreover, let $\ell_0\not=\ell_1\in \{0,\dots,m-1\}\setminus \{i\}$ such that $X_{i_j}\in M_{\ell_0}\cap M_{\ell_1}$, that is, $\ell_0$, and $\ell_1$ select the other two occurrences of $X_{i_j}$.
The following region $R_1=(sup_1,con_1,pro_1)$ solves $(X_{i_j}, s)$ for all $s\in \{t_{i,0},\dots, t_{i,j-1}\}$:
$sup_1(h_0)=0$, and
$\T_{1,0}^{R_1}=\{X_{i_j}\}$, and
$\T_{0,1}^{R_1}=\{X_{i_{j-1}}\}\cup W_{\ell_0}\cup W_{\ell_1}$.
Notice that $\vert \preset{R_1}\vert =\frac{2m}{3}+1\leq \varrho$.\\
By the arbitrariness of $i$ and $j$, this completes the proof.
\end{proof}

\begin{fact}\label{fact:pure_v}
For every $v\in V$, the event $v$ is solvable by $(\varrho,\kappa)$-restricted pure regions.
\end{fact}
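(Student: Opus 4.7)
The plan mirrors Facts~\ref{fact:pure_k}--\ref{fact:pure_X}: fix an arbitrary $v_i^j\in V$ and exhibit a small collection of pure $(\varrho,\kappa)$-restricted regions that, taken together, solve every ESSA $(v_i^j,s)$ with $s\neq t_{i,3}$, since $v_i^j$ is enabled only at the state $t_{i,3}$.

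First I would introduce a ``transport'' region $R_0=(sup_0,con_0,pro_0)$ given by $sup_0(h_0)=0$, $\T_{0,1}^{R_0}=W_i$, $\T_{1,0}^{R_0}=V_i$, and $\T_{0,0}^{R_0}=E\setminus(W_i\cup V_i)$. Because every $X$-event and every event outside $T_i$ has zero effect, a direct path computation gives $sup_0(t_{i,k})=1$ for $k\in\{0,1,2,3\}$, $sup_0(t_{i,4})=sup_0(t_{i,5})=0$, and $sup_0(s)=0$ for all other states $s$. The region is pure by construction, satisfies $\vert\preset{R_0}\vert=\vert W_i\vert=\frac{m}{3}\leq\varrho$ and $\vert\postset{R_0}\vert=\vert V_i\vert=\frac{m}{3}\leq\kappa$, and since $con_0(v_i^j)=1$, it solves $(v_i^j,s)$ for every $s\in S\setminus\{t_{i,0},t_{i,1},t_{i,2},t_{i,3}\}$.

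It remains only to solve the three ESSA whose second component lies in $\{t_{i,0},t_{i,1},t_{i,2}\}$. I would handle them uniformly with a second region $R_1=(sup_1,con_1,pro_1)$ defined by $sup_1(h_0)=0$, $\T_{0,1}^{R_1}=\{X_{i_2}\}$, $\T_{1,0}^{R_1}=V_i$, and $\T_{0,0}^{R_1}=E\setminus(\{X_{i_2}\}\cup V_i)$. Along the path from $t_{i,0}$ to $t_{i,3}$ the support jumps from $0$ to $1$ exactly when $X_{i_2}$ fires, giving $sup_1(t_{i,0})=sup_1(t_{i,1})=sup_1(t_{i,2})=0$ and $sup_1(t_{i,3})=1$. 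With $con_1(v_i^j)=1$ the three remaining ESSA are thus solved, and the restriction bounds $\vert\preset{R_1}\vert=1\leq\varrho$ and $\vert\postset{R_1}\vert=\frac{m}{3}\leq\kappa$ are immediate.

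The one subtle verification concerns the other clause-gadgets $T_j$ ($j\neq i$) that also contain $X_{i_2}$. Within such a $T_j$, the support of $R_1$ rises from $0$ to $1$ after the $X_{i_2}$-labelled edge and then remains at $1$ throughout the rest of $T_j$, because $k$, $u_l$, and $v_j^l$ all carry effect $0$ in $R_1$ whenever $j\neq i$; hence all supports stay in $\{0,1\}$, non-negativity is preserved, and every enabling condition $con_1(e)\leq sup_1(s)$ is trivially met. Combining $R_0$ and $R_1$ therefore covers every ESSA attached to $v_i^j$, since for any $s\neq t_{i,3}$ at least one of $sup_0(s)$, $sup_1(s)$ equals $0$; as $v_i^j\in V$ was arbitrary, every $v\in V$ is solvable by $(\varrho,\kappa)$-restricted pure regions.
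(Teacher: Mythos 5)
Your proposal is correct and uses essentially the same construction as the paper: the two regions you define (one with $\T_{0,1}=W_i$, $\T_{1,0}=V_i$ and one with $\T_{0,1}=\{X_{i_2}\}$, $\T_{1,0}=V_i$) are exactly the paper's two regions, merely listed in the opposite order and with a slightly different but equivalent division of which states each one is credited with separating. The extra coherence check in the other gadgets containing $X_{i_2}$ is a detail the paper leaves implicit, and your verification of it is accurate.
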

\begin{proof}
Let $i\in \{0,\dots, m-1\}$, and $v\in V_i$ be arbitrary but fixed.\\
The following region $R_0=(sup_0,con_0,pro_0)$ solves $(v,s)$ for all $s\in \{t_{i,0},t_{i,1}, t_{i,2}, t_{i,4}, t_{i,5}\}$:
$sup(h_0)=0$, and
$\T_{1,0}^{R_0}=V_i$, and
$\T_{0,1}^{R_0}=\{X_{i_2}\}$.

The following region $R_1=(sup_1,con_1,pro_1)$ solves $(v,s)$ for all $s\in S\setminus S(T_i)$:
$sup(h_0)=0$, and
$\T_{1,0}^{R_0}=V_i$, and
$\T_{0,1}^{R_0}=W_i$.
Since $i$, and $v$ were arbitrary, this proves the fact.
\end{proof}

\begin{fact}\label{fact:pure_ssp}
There is a witness of pure, and $(\varrho,\kappa)$-restricted regions for the SSP of $A$.
\end{fact}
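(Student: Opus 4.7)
The plan is to exhibit a pure, $(\varrho,\kappa)$-restricted witness for the SSP, mostly by recycling the regions $R_0, R_1$ from Facts~\ref{fact:pure_k}--\ref{fact:pure_v} and adding at most one simple new region to cover the residual pairs. Since $\kappa=|E|$ the post-set bound is vacuous, so I only have to check $|\preset{R}|\leq m$ for each region, which is direct. I would classify the SSA's of $A$ by gadget membership: (a) $(h_0,h_1)$; (b) $(h_j,t_{i,k})$; (c) intra-gadget pairs $(t_{i,k},t_{i,k'})$; (d) cross-gadget pairs $(t_{i,k},t_{j,k'})$ with $i\neq j$. For each family I propagate support along the spines and along the loops to exhibit a separating region.

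For (a) and (b), $R_0$ of Fact~\ref{fact:pure_k} already gives $sup(h_0)=1$ and $sup(h_1)=0$, and by the one-in-three structure of $\mS$ the support on the spine of each $T_\ell$ jumps from $0$ to $1$ at exactly the position of the variable chosen from $M_\ell$. Combined with the index-parameterized $R_1$ of Fact~\ref{fact:pure_k} --- which assigns supports $2, 1$ to $h_0, h_1$ and splits spine supports between $\{0,1\}$ depending on whether the gadget equals the chosen index $i$ --- every pair involving $h_0$ or $h_1$ is resolved. For (c), the regions $R_1$ of Fact~\ref{fact:pure_X} with $j\in\{1,2\}$ together with $R_0$ of Fact~\ref{fact:pure_X} applied to $X\in\{X_{i_0},X_{i_1},X_{i_2}\}$ yield four mutually distinguishing support patterns on $t_{i,0},\dots,t_{i,3}$; Fact~\ref{fact:pure_v} handles $(t_{i,3},t_{i,4})$; and $R_1$ of Fact~\ref{fact:pure_k} handles $(t_{i,4},t_{i,5})$. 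For most cross-gadget pairs in (d), the $R_1$ of Fact~\ref{fact:pure_k} instantiated at one of the two gadget indices gives support $0$ on the spine of $T_i$ and $1$ on the spine of $T_j$, which separates them.

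The only residual family is $(t_{i,4},t_{j,4})$ for $i\neq j$, where both supports under all previously used regions happen to coincide at $1$. For this I would add the pure region defined implicitly by $sup(h_0)=1$, $\T_{1,0}=W_i$, $\T_{0,1}=W_j$, and all other events in $\T_{0,0}$: coherence is immediate since $k$ and the $u_\alpha$ preserve support (hence the loops $t_{\ell,4}\leftrightarrows t_{\ell,5}$ and $h_0\leftrightarrows h_1$ close), the resulting supports are $0$ throughout $T_i$ and $2$ throughout $T_j$, purity is built in, and $|\preset{R}|=|W_j|=m/3\leq m=\varrho$. The main obstacle --- and in fact the only non-trivial work --- is bookkeeping: for every reused or new region I must verify consistency at the multi-entry state $t_{i,4}$ (reached both from $t_{i,3}$ via $v_i^\ell$ and from $t_{i,5}$ via $u_\alpha$), the non-negativity of all computed supports, purity, and the preset bound. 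Each such check is mechanical given the transparent shape of $A$, so I would organize the proof as a short table of region families, one line per family discharging the verifications; assembling the separating regions into a single admissible set $\R$ then completes the proof of Fact~\ref{fact:pure_ssp} and thus of Theorem~\ref{the:new_content}.
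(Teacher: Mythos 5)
Your proposal is correct, and it establishes the fact, but it is organized differently from the paper's proof, and the comparison is instructive. The paper separates the states of $T_i$ from \emph{everything} outside $T_i$ with a single region per gadget ($sup(h_0)=0$, $\T_{0,1}=W_i$, all other events neutral: support $1$ on all of $S(T_i)$ and $0$ elsewhere), and then separates the states \emph{inside} $T_i$ with one region whose support climbs $0,1,2,3,4$ along the spine ($\T_{0,1}=\{X_{i_0},X_{i_1},X_{i_2}\}\cup V_i$, preset size $\frac{m}{3}+3\leq m$) plus one region isolating $t_{i,5}$ ($sup(h_0)=2$, $\T_{1,0}=\{k\}$, $\T_{0,1}=U$); the pairs involving $h_0$ and $h_1$ are then discharged by recycling $R_1$ of Fact~\ref{fact:pure_k} and $R_0$ of Fact~\ref{fact:pure_u_and_w}. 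You instead recycle the ESSP regions of Facts~\ref{fact:pure_X} and~\ref{fact:pure_v} for the intra-gadget pairs and patch the cross-gadget pairs with the two-gadget region $\T_{1,0}=W_i$, $\T_{0,1}=W_j$. Both routes are sound; the paper's is more economical (the $\T_{0,1}=W_i$ region kills all cross-gadget and $h$-versus-gadget atoms at once, using $m$ regions rather than one per pair of gadgets, and its spine region settles all ten intra-spine pairs simultaneously). One small inaccuracy in your bookkeeping: $(t_{i,4},t_{j,4})$ is not the only residual cross-gadget family --- under $R_1$ of Fact~\ref{fact:pure_k} instantiated at either index, the states $t_{i,5}$ and $t_{j,5}$ both receive support $0$ (and none of the other recycled regions distinguishes them in general), so $(t_{i,5},t_{j,5})$ is residual as well. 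This is harmless, since your patch region assigns support $0$ to all of $S(T_i)$ and $2$ to all of $S(T_j)$ and therefore separates that pair too, but the claim as stated is not quite right and should be corrected in a final write-up.
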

\begin{proof}
We argue for $h_0$:\\
Let $i\in \{0,\dots, m-1\}$ be arbitrary but fixed.\\
The region $R_1$ of Fact~\ref{fact:pure_k} solves $(h_0,s)$ for all $s\in \{h_1\}\cup S(T_i)$.
Since $i$ was arbitrary, the solvability of $h_0$ follows.\\

We proceed with $h_1$:\\
- The region $R_1$ of Fact~\ref{fact:pure_k} solves $(h_1,s)$ for all $s\in \bigcup_{i=0}^{m-1}\{t_{i,5}\}$.\\
- The region $R_0$ of Fact~\ref{fact:pure_u_and_w}  solves $(h_1,s)$ for all $s\in S\setminus(\bigcup_{i=0}^{m-1}\{t_{i,5}\})$.\\

We proceed with the states of the $T_i$'s:
Let $i\in \{0,\dots, m-1\}$ be arbitrary but fixed. \\
The following region $R_0=(sup_0,con_0,pro_0)$ solves $(p,q)$ for all $p\in S(T_i)$, and all $q\in S\setminus S(T_i)$:
$sup_0(h_0)=0$, and
$\T_{0,1}^{R_0}=W_i$.

The following region $R_1=(sup_1,con_1,pro_1)$ solves $(p,q)$ for all $p\not=q\in \{t_{i,0},\dots, t_{i,4}\}$:
$sup_1(h_0)=0$, and $\T_{0,1}^{R_1}=\{X_{i_0},X_{i_1}, X_{i_2}\}\cup V_i$.
Notice that $\vert \preset{R_1}\vert =\frac{m}{3}+3\leq m$.

The following region $R_2=(sup_2,con_2,pro_2)$ solves $(s,t_{i,5})$ for all $s\in \{t_{i,0},\dots, t_{i,4}\}$:
$sup_2(h_0)=2$, and
$\T_{1,0}^{R_2}=\{k\}$.
$\T_{0,1}^{R_2}=U$.
Notice that $\vert \preset{R_1}\vert =\frac{2m}{3}\leq m$. 

Since $i$ was arbitrary, this completes the proof.
\end{proof}

Altogether, by proving Fact~\ref{fact:pure_k} to Fact~\ref{fact:pure_ssp}, we have shown that the following lemma, which closes the proof of Theorem~\ref{the:new_content}, is true:

\begin{lemma}\label{lem:pure_model_implies_essp}
If there is a one-in-three model for $(\U,M)$, then there is an admissible set of pure and $(\varrho,\kappa)$-restricted regions of $A$.
\end{lemma}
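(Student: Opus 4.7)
The plan is to exhibit, for every separation atom of $A$, a pure region with preset of size at most $\varrho=m$ and postset of size at most $\kappa=|E|$ (so the postset bound is vacuous). The collection of these regions will form the admissible set $\mathcal{R}$. The one-in-three model $\mS$ will be used only to construct the critical region solving the atom $(k,h_1)$; the remaining atoms can be solved by ``local'' regions that do not refer to $\mS$ at all but still need to respect the preset cardinality bound $\varrho=m$.

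First I would construct the pivotal region $R^\ast$ witnessing the solvability of $(k,h_1)$: take $sup(h_0)=1$, $con(k)=1$, and $pro(e)=1$ for every $e\in \mS\cup U$, all other weights being zero. By Lemma~\ref{lem:model_size}, $|\mS|=m/3$, so $|\preset{R^\ast}|=|\mS|+|U|=m/3+2m/3=m=\varrho$. Purity is immediate since $\mS\cap(\{k\}\cup U)=\emptyset$. One checks that this assignment is coherent: each $T_i$ gadget gains exactly one unit of support on traversing its unique $\mS$-event, which is then consumed by $k$, bringing the support to zero at $h_1$-like states and thus separating $h_1$ from $h_0$.

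Second, for each gadget $T_i$ I would build a ``local'' region with $sup(h_0)=2$, consuming on $\{k\}\cup W_i$ and producing on $U\cup V_i$, so $|\preset{}|=|U|+|V_i|=2m/3+m/3=m$. These regions solve $(k,s)$ for the remaining $s\in\{t_{i,0},\dots,t_{i,3},t_{i,5}\}$, and, as a byproduct, separate the states inside $T_i$ from those of other gadgets. For the events $u\in U$ and $w\in W$ I would use the trivial region with $sup(h_0)=0$, $pro(k)=1$, and $con(e)=1$ on $U\cup W$, supplemented by a second region with $sup(h_0)=1$ and $con$ on $W$ to separate $t_{i,5}$'s. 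For each variable event $X_i$, use its three occurrences: if $X_i\in M_{j_0}\cap M_{j_1}\cap M_{j_2}$, the region with $con(X_i)=1$ and $pro$ on $W_{j_0}\cup W_{j_1}\cup W_{j_2}$ has preset of size exactly $3\cdot(m/3)=m$ and blocks $X_i$ everywhere outside the three relevant gadgets; additional local regions of the form $con(X_{i_j})=1$, $pro(X_{i_{j-1}})=1$ plus two ``$W$-bundles'' handle the within-$T_i$ atoms where $X_{i_j}$ is not the first variable.

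For each $v\in V_i$, two small regions (one using $V_i\mapsto 1,\ X_{i_2}\mapsto 1$, the other $V_i\mapsto 1,\ W_i\mapsto 1$) cover all corresponding ESSA. The SSP is then obtained almost for free: the ESSP-regions already separate all pairs except for a few inside each $T_i$, which are separated by the region consuming each of $X_{i_0},X_{i_1},X_{i_2},V_i$ once (preset size $m/3+3\le m$ since we assume $m\ge 3$), and by the region with $sup(h_0)=2$, $con(k)=1$, $pro$ on $U$, which separates $t_{i,5}$ from the other states of $T_i$. The main obstacle throughout is checking that every preset stays within the tight bound $\varrho=m$; this is where the size $|\mS|=m/3$ guaranteed by the one-in-three condition is essential, since a mere hitting set would allow $|\mS|$ to be larger and break the count $|\mS|+|U|=m$ used in $R^\ast$. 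Since every atom of $A$ is solved by one of the pure, $(\varrho,\kappa)$-restricted regions listed, the union of these regions forms an admissible set, establishing the lemma.
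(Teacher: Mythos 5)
Your proposal is correct and follows essentially the same route as the paper: the same pivotal region $sup(h_0)=1$, $con(k)=1$, $pro=1$ on $\mS\cup U$ for $(k,h_1)$ (whose preset bound $|\mS|+|U|=m/3+2m/3=m$ is exactly where the one-in-three property enters), and the same family of local regions (the $\{k\}\cup W_i$ versus $U\cup V_i$ regions, the $U\cup W$ region, the three-$W$-bundle regions for the variable events, the $V_i$ regions, and the SSP regions). The only nit is a wording slip in the SSP part: the region separating the states $t_{i,0},\dots,t_{i,4}$ should \emph{produce} one token on each of $X_{i_0},X_{i_1},X_{i_2}$ and on $V_i$ (as your own preset count $m/3+3$ indicates), not consume.
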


\section{Conclusion}\label{sec:conclusion}%

In this paper, we investigated the computational complexity of synthesizing Petri nets for which the cardinality of the pre- and postset of their places is restricted by natural numbers $\varrho$ and $\kappa$.
We showed that the problem is solvable in polynomial time for any fixed $\varrho$ and $\kappa$.
By contrast, if $\varrho$ and $\kappa$ are part of the input, then the resulting \textsc{ERS} problem is NP-complete.
Moreover, we show that \textsc{ERS} parameterized by $\varrho+\kappa$ is $W[2]$-hard and thus most likely does not allow a fixed-parameter-tractable-algorithm.
We then extended our results to the class of pure Petri nets.

Future work could focus on other classes as synthesis target, and on the implementation of our algorithms.
On the other hand, instead of considering other classes, one could also focus on the actual main problem implicitly addressed in this paper, that is, the underlying optimization problem that searches, for a TS $A$, for a realizing Petri net with places whose presets and postsets are as small as possible.
Our results imply that such an optimum solution can not be found efficiently (unless P=NP).
However, instead of an exact solution, a good approximate one may be sufficient, that is, we may investigate whether the corresponding optimization problem allows a so-called $c$-approximation algorithm for a constant $c\geq 1$.

\subsubsection*{Acknowledgements.}%

We would like to thank Karsten Wolf, who provided a summary of the data from the Model Checking Contest.
Moreover, we would like to thank the anonymous reviewers of the original version of this paper~\cite{apn/Tredup21a}, and this extended version for their detailed comments and valuable suggestions.


\end{document}